\newcommand{\eucl}[1]{\left\lVert #1 \right\rVert}
\newcommand{\skipSpace}{\hfill \\ \noindent}
\newcommand{\set}[1]{\{ #1 \}}
\newcommand{\bbtl}[1]{bb_{t\ell}(#1)}
\newcommand{\bbtr}[1]{bb_{tr}(#1)}
\newcommand{\bbbl}[1]{bb_{b\ell}(#1)}
\newcommand{\bbbr}[1]{bb_{br}(#1)}
\newcommand{\twoDel}{2-localized Delaunay Graph}
\newcommand{\mixedChord}{3.56}
\newcommand{\onlineRoutingConstant}{7.87}
\newcommand{\algorithmConstant}{18.55}
\newcommand{\onlineRoutingConstantIntersecting}{12.83}
\newtheorem{lem}{\bf{Lemma}}{\bfseries}{\itshape}
\newtheorem{defi}{\bf{Definition}}{\bfseries}{\itshape}
\crefname{lem}{Lemma}{lemmas}
\Crefname{lem}{Lemma}{Lemmas}
\crefname{def}{definition}{definitions}
\Crefname{def}{Definition}{Definitions}
\title{A Bounding Box Overlay for Competitive Routing in Hybrid Communication Networks\footnote{This work was partially supported by the German Research Foundation (DFG) within the Collaborative Research Center ’On-The-Fly Computing’ (SFB 901).}}
\titlerunning{A Bounding Box Overlay for Hybrid Communication Networks} %optional, in case that the title is too long; the running title should fit into the top page column
\author[1]{Jannik Castenow}
\author[2]{Christina Kolb}
\author[3]{Christian Scheideler}
\affil[1]{Heinz Nixdorf Institute \& Computer Science Department, Paderborn University, Paderborn, Germany\\
	\texttt{jannik.castenow@upb.de}}
\affil[2]{Computer Science Department, Paderborn University, Paderborn, Germany\\
  \texttt{ckolb@mail.upb.de}}
\affil[3]{Computer Science Department, Paderborn University, Paderborn, Germany\\
	\texttt{scheideler@upb.de}}
\authorrunning{C. Kolb, C. Scheideler \& J. Sundermeier} %mandatory. First: Use abbreviated first/middle names. Second (only in severe cases): Use first author plus 'et. al.'
\keywords{wireless ad hoc networks, $c$-competitive routing, bounding boxes, visibility graphs}% mandatory: Please provide 1-5 keywords
\begin{document}

\maketitle

\begin{abstract}
In this work, we present a new approach for competitive geometric routing in wireless ad hoc networks. 
In general, it is well-known that any online routing strategy performs very poor in the worst case.
The main difficulty are uncovered regions within the wireless ad hoc network, which we denote as radio holes. 
Complex shapes of radio holes, for example zig-zag-shapes, make local geometric routing even more difficult, i.e., forwarded messages in direction to the destination might get stuck in a dead end or are routed along very long detours, when there is no knowledge about the ad hoc network. 
To obtain knowledge about the position and shape of radio holes, we make use of a hybrid network approach. 
This approach assumes that we can not just make use of the ad hoc network but also of some cellular infrastructure, which is used to gather knowledge about the underlying ad hoc network. 
Communication via the cellular infrastructure incurs costs as cell phone providers are involved.
Therefore, we use the cellular infrastructure only to compute routing paths in the ad hoc network.
The actual data transmission takes place in the ad hoc network.
In order to find good routing paths we aim at computing an abstraction of the ad hoc network in which radio holes are abstracted by bounding boxes.
The advantage of bounding boxes as hole abstraction is that we only have to consider a constant number of nodes per hole.
We prove that bounding boxes are a suitable hole abstraction that allows us to find $c$-competitive paths in the ad hoc network in case of non-intersecting bounding boxes.
In case of intersecting bounding boxes, we show via simulations that our routing strategy significantly outperforms the so far best online routing strategies for wireless ad hoc networks.
Finally, we also present a routing strategy that is $c$-competitive in case of pairwise intersecting bounding boxes.
\end{abstract}

\section{Introduction}
Imagine yourself walking through the city center with your smartphone. Because there are crowds of people with smartphones walking around as well, the density of smartphones is very high. In practice, whenever there are smartphones close by, i.e., one smartphone is in the WiFi range of another phone and vice versa, they can be connected via freely available direct wireless connections (e.g., WiFi Direct or Bluetooth). 
Thus, one can set up a wireless ad hoc network between smartphones, where the direct wireless communication mode enables the phones to send large amounts of data to each other.
We assume routing in the ad hoc network to be for free as messages are transmitted directly and no third party is involved.

In general, it would be much easier to communicate only via a cellular network since every node would be able to directly communicate with every other node (given that the cell phone infrastructure covers all nodes).
This, however, is only possible up to a limited amount of data.
Usually, smartphone owners have a contract with cellphone providers which offers a limited data volume.
Once the data volume has been exceeded, messages can only be exchanged at very low speed in the cellular network.
To maximize the lifetime of the data contracts while also being able to exchange almost unlimited data, it is evident to exchange all data via the ad hoc network whereas the cellular infrastructure is only used to find nearly optimal routing paths.
Finding nearly optimal routing paths in the ad hoc network is a non-trivial task, since sparse regions of the ad hoc network can lead to radio holes.
In general, natural and human made obstacles like buildings cause radio holes in the ad hoc network of smartphones. 
Complex shapes of radio holes, e.g., zig-zag shapes, make competitive local roting extremely difficult~\cite{DBLP:conf/mobihoc/KuhnWZ03}. 
Messages that are simply forwarded into the direction of the destination might get stuck in a dead end or are routed on very long detours, when there is no knowledge about the ad hoc network.
Unfortunately, collecting global knowledge about the entire ad hoc network, i.e., knowledge about the exact location and shape of radio holes, would be too expensive when only using cellular communication since potentially many people are located on the boundaries of holes. 
Therefore, we address the following question: Can cellular communication be used effectively to find near-shortest paths in the ad hoc network? \\

The authors in \cite{algosensorsPaper} were the first to provide an approach that combines the ad hoc and the cellular communication mode in a Hybrid Communication Network. 
The cellular infrastructure is used to establish an overlay network that computes the convex hulls of each radio hole to find near shortest routing paths that only consist of ad hoc links (i.e., the WiFi-interfaces of nodes).
They assume that convex hulls of holes do not intersect.
In this work, we improve their results concerning two aspects.
We propose an overlay network based on bounding boxes which only requires a constant number of nodes per hole for finding routing paths. 
Moreover, we consider intersections of bounding boxes, where the difficulty is that a path leading through an area of intersecting bounding boxes can be arbitrarily complex (see \Cref{fig:halloween2}).
We prove both theoretically and with simulations that our approach outperforms classical online routing strategies for geometric ad hoc routing significantly.

\begin{figure}[h]
	% minipage mit (Blind-)Text
	\centering
	\includegraphics[width=0.9\textwidth]{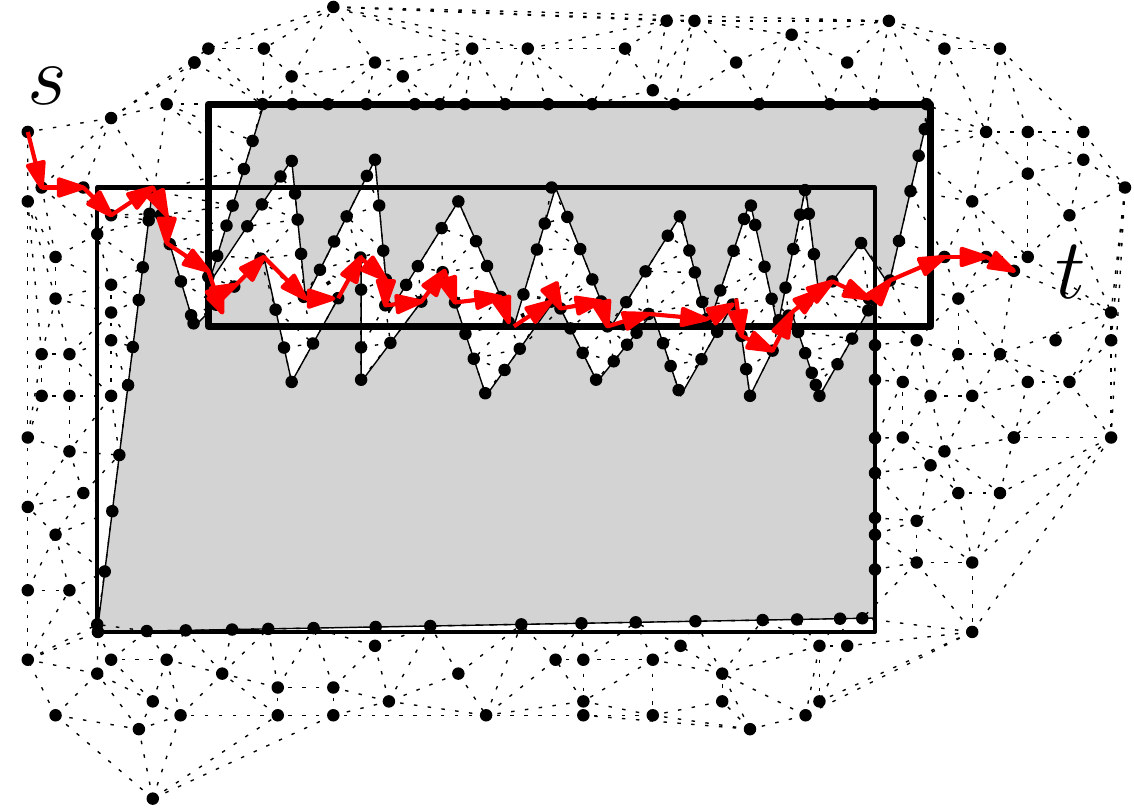}
	\caption{The shortest path between $s$ and $t$ (red arrows) leads through an area of intersecting bounding boxes. This path can be arbitrarily complex since the shapes of the radio holes can interleave each other as depicted in the figure.}
	\label{fig:halloween2} 
\end{figure}

\subsection{Motivation}
Classical research about ad hoc networks always assumes that no global infrastructure exists and nodes are only able to communicate via their ad hoc interface.
In this work, we consider a real-life scenario of people using their smartphones in urban areas.
A smartphone can be interpreted as ad hoc device since it is able to communicate for instance via WiFi Direct or Bluetooth.
Smartphones, however, are different as they combine multiple communication modes in one device and hence are also able to use a cellular infrastructure to communicate.
In metropolitan areas, the density of smartphones is sufficiently high such that the ad hoc network of smartphones is connected and, in principle, the entire data transmission between smartphones could be solely carried out via ad hoc links.
The main challenge of data transmission in an ad hoc network of smartphones in a metropolitan area is routing.
Usually, human made obstacles like buildings interfere wireless communication leading to radio holes in the ad hoc network.
It is well known that routing in an ad hoc network without knowledge about shapes and locations of radio holes potentially leads to very long detours \cite{DBLP:conf/podc/KuhnWZZ03}.
To overcome this drawback, the locations and shapes of radio holes can be determined efficiently via the cellular infrastructure which is available in metropolitan areas.
For a fast computation of routing paths, it makes sense to not consider the exact shape of a hole but a more coarse-grained abstraction like a bounding box.
In metropolitan areas, buildings are usually convex and rectangular shaped such that the bounding boxes of radio holes only rarely intersect, rendering the bounding box a practical hole abstraction in the described scenario.

\subsection{Model}
The model and definitions of this work are close to those of \cite{algosensorsPaper}.
We model the participants of the network as a set of nodes $V \subset \mathbb{R}^2$ in the Euclidean plane, where $|V| =n$.
Each node is associated with a unique ID (e.g., its phone number).
For any given pair of nodes $u,v$, we denote the Euclidean distance between $u$ and $v$ by $\|uv\|$.
Nodes have different communication modes.
For short distances, they can communicate via their WiFi-interface. 
Additionally, a node can communicate with every other node whose ID is known via the cellular infrastructure.
More formally, we model our network as a hybrid directed graph $H=\left(V,E,E_{AH}\right)$ where the node set $V$ represents the set of cell phones, an edge $\left(v,w\right)$ is in $E$ whenever $v$ knows the phone number (or simply {\em ID}) of $w$, and an edge $\left(v,w\right) \in E$ is also in the {\em ad hoc edge set} $E_{AH}$ whenever $v$ can send a message to $w$ using its Wifi interface.
For all edges $\left(v,w\right) \in E \setminus E_{AH}$, $v$ can only use a long-range link to directly send a message to $w$.
The concrete edges contained in $E$ and $E_{AH}$ are defined in later sections.
Since WiFi-communication can only be used over short distances, $E_{AH}$ can only contain edges which are part of the Unit Disk Graph of $V$ ($\mathrm{UDG}(V)$).
$\mathrm{UDG}\left(V\right)$, is a bi-directed graph that contains all edges $\left(u,v\right)$ with $\vert\vert uv\vert \vert\leq 1$. Assume $\mathrm{UDG}\left(V\right)$ to be connected so that a message can be sent from every node to every other node in $V$ by just using ad hoc edges. 

While the potential ad hoc edges are fixed, the nodes can change $E$ over time:
If a node $v$ knows the IDs of nodes $w$ and $w'$, then it can send the ID of $w$ to $w'$, which adds $\left(w,w'\right)$ to $E$.
This procedure is called \emph{ID-introduction}.
Alternatively, if $v$ deletes the address of some node $w$ with $\left(v,w\right) \in E$, then $\left(v,w\right)$ is removed from $E$.
There are no other means of changing $E$, i.e., a node $v$ cannot learn about an ID of a node $w$ unless $w$ is in $v$'s UDG-neighborhood or the ID of $w$ is sent to $v$ by some other node. \\
Moreover, we consider synchronous message passing in which time is divided into rounds.
We assume that every message initiated in round $i$ is delivered at the beginning of round $i+1$.

\subsection{Objective}
Our objective is to design a competitive routing algorithm for ad hoc networks, where the source $s$ of a message knows the ID of the destination $t$, or in other words, $\left(s,t\right) \in E$.
We call a routing strategy $c$-competitive, if the length of a path obtained by the strategy has length at most $c$ times the length of an optimal path for a constant $c$.
The authors in \cite{DBLP:conf/podc/KuhnWZZ03} have shown that any online routing algorithm that only has local knowledge about the network cannot be $c$-competitive.
Based on these results, the authors in \cite{algosensorsPaper} proposed a strategy that makes use of a Hybrid Communication Network to obtain information about location and shapes of holes.
They have proven that in case the convex hulls of radio holes do not intersect, their approach finds $c$-competitive paths in the ad hoc network.

In this paper, we aim for a reduction of the number of cellular infrastructure nodes that have to be considered for the computation of $c$-competitive paths.
To do so, we replace the computation of convex hulls of holes by the computation of bounding boxes.
In addition to \cite{algosensorsPaper}, we also propose a strategy for intersecting bounding boxes.

\subsection{Our Contributions}
We consider any hybrid graph $G = (V,E,E_{AH})$ where the Unit Disk Graph of $V$ is connected. Let $H$ be the set of radio holes in $G$ and $P(h)$ denotes the length of the perimeter of a radio hole $h \in H$. 
For every radio hole, the nodes with maximal/minimal $x$- and $y$-coordinates are called \emph{extreme points}.
Our main contribution is:

\begin{theorem} \label{theorem:mainTheorem}
	For any distribution of the nodes in $V$ that ensures that UDG($V$) is connected and of bounded degree, where the bounding boxes of the radio holes do not overlap, our algorithm computes an abstraction of UDG($V$) in $\mathcal{O}(\log^2 n)$ communication rounds using only polylogarithmic communication work at each node so that $18.55$-competitive paths between all source-destination pairs outside of bounding boxes can be found in an online fashion.
	
	The storage needed by the four extreme points of each radio hole is $\mathcal{O}(|H|)$. 
	For every other node, the space requirement is constant.
\end{theorem}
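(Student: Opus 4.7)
The plan is to split the theorem into three sub-goals and address them in order: (a) the distributed construction of the bounding-box overlay, (b) the competitive analysis of routing on that overlay, and (c) the round, work, and storage bounds.

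For (a), I would first obtain a planar, constant-stretch sparsification of $\mathrm{UDG}(V)$. The natural candidate is the \twoDel, which is a $\mixedChord$-spanner of $\mathrm{UDG}(V)$ and which can be built via standard local rules in $O(\log n)$ rounds because the UDG is of bounded degree. The bounded faces of this planar graph whose perimeter exceeds the communication radius are exactly the radio holes in $H$. For each such face I would walk the boundary once, using the cellular links to build a balanced aggregation tree over the perimeter and run a tournament to elect the four extreme points; these four then exchange IDs via long-range edges and broadcast themselves to the rest of their hole in $O(\log n)$ further rounds. The total cost is $O(\log^2 n)$ rounds with polylogarithmic work per node, since no node participates in more than one hole boundary and each aggregation is of logarithmic depth.

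For (b), let $s$ and $t$ lie outside all bounding boxes and let $\pi^*$ denote a shortest obstacle-avoiding path in $\mathrm{UDG}(V)$. The routing algorithm constructs a polygonal curve $\sigma$ from $s$ to $t$ as follows: starting with the straight segment $\overline{st}$, whenever the current segment enters a bounding box, replace that portion with the shorter of the two detours along the box boundary and recurse on the resulting segments. Because the bounding boxes are pairwise non-overlapping, this terminates with a well-defined polygonal path. A geometric argument then bounds $\lVert\sigma\rVert$ by a constant factor $\onlineRoutingConstant$ times $\lVert \pi^* \rVert$: each individual box detour pays only a constant factor relative to any curve that must avoid the corresponding hole, and the non-overlap assumption ensures that detours compose without interference. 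Finally, $\sigma$ is realized in $\mathrm{UDG}(V)$ by routing each straight sub-segment through the \twoDel with a simple greedy scheme, which inflates the length by at most the spanning ratio $\mixedChord$. Combining the geometric detour bound, the spanner stretch $\mixedChord$, and the local routing overhead yields the advertised ratio $\algorithmConstant$.

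The round and work bounds then follow because each step above is either a local constant-time computation or a logarithmic-depth aggregation over tree structures. The storage claim follows because a non-extreme node only needs a constant amount of information (pointers toward its hole's extremes), whereas an extreme point must know the extremes of the other holes in order to execute the online step in (b), giving $O(|H|)$. The main obstacle I expect is the detour argument in (b): carefully accounting for how $\overline{st}$ can enter and exit a sequence of disjoint bounding boxes, and showing that the accumulated detour is bounded by a small constant times $\lVert \pi^* \rVert$, is where the concrete constant $\algorithmConstant$ is actually forced. The remaining pieces—spanner construction, boundary traversal, and extreme-point election—are essentially assemblies of known distributed building blocks.
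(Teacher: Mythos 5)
There is a genuine gap in part (b), which is exactly the part you flag as "the main obstacle": the recursive construction "replace each portion of $\overline{st}$ entering a box by the shorter boundary detour and recurse" is not the paper's argument, and its constant-competitiveness is asserted rather than proved. Two problems arise. First, the benchmark: your detours are charged against $\eucl{st}$ (or against "any curve avoiding the corresponding hole"), but the accumulated length of such a recursive detour over a sequence of disjoint boxes cannot in general be bounded by a constant times $\eucl{st}$; it must be compared against the shortest hole-avoiding path $\pi^*$, which may itself cut \emph{through} bounding boxes (between a box boundary and the hole it contains). The paper handles precisely this by comparing the bounding-box path to the shortest path in the Visibility Graph of the actual hole polygons, decomposing $\pi^*$ into $x$- and $y$-monotone pieces between consecutive visited polygons, and constructing (via the GreViRo routine and a case analysis over which box edges are crossed) a box-corner path with the same monotonicity, whence the $\sqrt{2}$ factor via the right-triangle lemma. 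Your "detours compose without interference because boxes are disjoint" is the entire content of that argument and is missing.

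Second, the constants do not compose as you describe. The paper's $\algorithmConstant = 10.68 + \onlineRoutingConstant$ is \emph{additive}: $\onlineRoutingConstant = \sqrt{2}\cdot(\mixedChord + 2)$ already folds in the MixedChordArc stretch \emph{and} a virtual-axis correction (box corners are not nodes of $V$ and must be represented by their nearest \twoDel{} nodes, which costs the additive $2$ in Lemma~\ref{lemma:lengthOfST}), while the $10.68 = 3\cdot\mixedChord$ term pays for the algorithm's initial optimistic greedy phase before the packet first hits a box and is redirected to a representative. Your multiplicative combination of a $\onlineRoutingConstant$ geometric bound with a further $\mixedChord$ spanner stretch gives roughly $28$, not $\algorithmConstant$, and your proposal accounts neither for the representative-embedding issue nor for the initial optimistic phase. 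Part (a) and the storage claim are essentially in the spirit of the paper (which uses hypercubes via pointer jumping plus an overlay tree rather than your aggregation trees), so the construction side is fine modulo details.
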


Note that we do not consider source or destination nodes inside of bounding boxes in this work.
Nevertheless, these can be efficiently handled by a straightforward extension of the protocols presented in this paper by the routing algorithm of \cite{algosensorsPaper} for the case that at the source, the target or both nodes lie inside of a hole abstraction.
We also consider intersecting bounding boxes. For pairwise intersecting bounding boxes, we prove the following
\begin{theorem} \label{theorem:mainTheorem2}
	For any distribution of the nodes in $V$ that ensures that UDG($V$) is connected and of bounded degree, where the bounding boxes of the radio holes do pairwise overlap and the convex hulls of holes do not overlap, our algorithm computes an abstraction of UDG($V$) in $\mathcal{O}(\log^2 n)$ communication rounds using only polylogarithmic communication work at each node so that $28.83$-competitive paths between all source-destination pairs outside of bounding boxes can be found in an online fashion.
	
	The storage needed by the four extreme points of each radio hole is $\mathcal{O}(|H|)$. 
	For every other node, the space requirement is constant.
\end{theorem}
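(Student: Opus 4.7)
The plan is to reuse the overlay construction of \Cref{theorem:mainTheorem}, which already yields the four extreme points of each radio hole in $\mathcal{O}(\log^2 n)$ rounds with polylogarithmic work per node, and to augment it by a post-processing phase in which the extreme points of pairwise-overlapping boxes exchange their coordinates over the cellular infrastructure. Under the pairwise-overlap assumption the whole set $H$ of bounding boxes forms one intersection cluster, but with only four extreme points per hole the total number of new coordinates to be disseminated is $\mathcal{O}(|H|)$; using the same broadcast primitive on the overlay that underlies \Cref{theorem:mainTheorem}, this fits into $\mathcal{O}(\log^2 n)$ further rounds, and only the extreme points themselves need to store the additional information. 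The round count and the storage bounds in the statement follow from this.

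For the competitive ratio, the decisive geometric fact to exploit is that, although the bounding boxes intersect (so the union $U$ of all boxes can contain arbitrarily complex free-space paths, see \Cref{fig:halloween2}), the convex hulls of the holes themselves do not overlap. I would route a pair $(s,t)$ whose endpoints lie outside all boxes as follows: if $\overline{st}$ misses $U$, fall back to the strategy of \Cref{theorem:mainTheorem} and obtain a factor of $\algorithmConstant$; otherwise, detour along the outer boundary of $U$ on the shorter of the two sides, and inside the detour route between successive corner points with the $\onlineRoutingConstantIntersecting$-competitive strategy designed for the intersecting case. The competitive analysis would then decompose into a chord-versus-perimeter inflation of the straight line $\overline{st}$ into the perimeter detour, for which I expect the constant $\mixedChord$, followed by the online-routing inflation $\onlineRoutingConstantIntersecting$; combining these with the single-hole factor $\algorithmConstant$ from the non-intersecting analysis yields the claimed constant $28.83$.

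The main obstacle, and the step where the analysis of \Cref{theorem:mainTheorem} does not immediately carry over, is bounding the chord-versus-perimeter ratio on the outer boundary of $U$: because overlapping bounding boxes produce re-entrant corners on their shared edges, the outer perimeter is in general non-convex, so the bound that sufficed for a single rectangular box has to be re-established. I would handle this by arguing that every re-entrant corner admits a shortcut through free space outside $U$, again using that the convex hulls of the holes are pairwise disjoint, so that effectively the algorithm routes along the convex hull of $U$ and a standard chord-versus-perimeter bound applies. Once this geometric lemma is in place, the remaining pieces are a routine case distinction on whether $\overline{st}$ enters $U$ at all and a numeric combination of $\mixedChord$, $\onlineRoutingConstantIntersecting$, and $\algorithmConstant$ to verify the constant $28.83$.
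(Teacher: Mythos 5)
Your overlay/preprocessing paragraph is consistent with what the paper does (hypercube per hole plus the overlay tree of \cite{DBLP:conf/icalp/GmyrHSS17}, with only the extreme points storing the modified visibility graph), so the round and storage claims are fine. The problem is in the routing argument, where you diverge from the paper in a way that does not work.

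The paper's proof of the $28.83$ bound goes \emph{through} the intersection region, not around it. It builds the modified Bounding Box Visibility Graph in which the outer intersection points of each pair of overlapping boxes are explicit nodes joined by a weighted edge, proves (\Cref{theorem:modifiedBBVisibility}, \Cref{corollary:intersecting}) that paths using these nodes are $\onlineRoutingConstantIntersecting$-competitive, and then supplies a concrete subroutine PIC (\Cref{theorem:pairintbb}) that realizes the edge between two outer intersection points with stretch $\sqrt{2}$ by walking along a bounding-box edge, then along convex-hull edges of the holes, then back to the second intersection point --- this is exactly where the hypothesis that the convex hulls do not overlap is used. The final constant is the sum $10.68 + \sqrt{2}\cdot\onlineRoutingConstantIntersecting = 10.68 + 18.15 = 28.83$, where $10.68 = 3\cdot\mixedChord$ is the cost of the initial optimistic MixedChordArc phase of BBR. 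Your proposal instead detours along the outer boundary of the union $U$ of all boxes. That is not competitive: the lower-bound construction of \Cref{theorem:intersectingBBLowerBound} (and \Cref{fig:halloween2}) shows the shortest path may thread through the overlap region between the two holes, and for two tall, slightly overlapping boxes the detour around $U$ exceeds the through-path by an unbounded factor. Relatedly, you invoke the disjointness of the convex hulls only to shortcut re-entrant corners of $\partial U$, whereas its actual role is to guarantee a corridor between the holes inside the overlap region that PIC can follow. Finally, your numeric combination is not carried out: $\mixedChord\cdot\onlineRoutingConstantIntersecting\approx 45.7$ and no combination of $\mixedChord$, $\onlineRoutingConstantIntersecting$ and $\algorithmConstant$ of the form you sketch produces $28.83$; the additive decomposition $10.68 + \sqrt{2}\cdot 12.83$ is essential and requires the $\sqrt{2}$ bound for PIC that your argument never establishes.
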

For multiple bounding box intersections, we prove:
We prove that in case we can find a $c$-competitive path between outer intersection points of bounding boxes, we can also find a $\left(10.68 + c \cdot \onlineRoutingConstantIntersecting\right)$-competitive path between all source-destination outside of bounding boxes.
Since the computation of $c$-competitive paths between outer intersection points is a hard problem, we provide a heuristic solution in this paper.
We show via simulations that our approach outperforms classical online routing strategies for ad hoc network with holes significantly, both for intersecting and non-intersecting bounding boxes.
\subsection{Related Work}
In the context of geometric routing in ad hoc networks, several routing techniques have been investigated.
One of the early approaches is GPSR \cite{gpsr}, in which greedy routing is used whenever possible.
In case a packet reaches a dead end, the packet is routed along the perimeter of the hole via the right-hand rule.
As soon as greedy routing is applicable again, the routing mode is changed to greedy routing.
A similar approach is Compass Routing \cite{compass}.
The algorithm considers the direct line segment connecting the source node $s$ and the target node $t$.
At every step the edge with smallest slope to the direct line segment is chosen.
This, however, does not lead to a delivery guarantee in all kinds of graphs.
An example for a graph with delivery guarantee is the Delaunay Graph which in addition is a $1.998$-spanner of the Euclidean metric~\cite{xiaDelaunaySpanner}.
The value for $c$ is $3.56$.
MixedChordArc is the latest $c$-competitive routing strategy for Delaunay Graphs which has been recently published by Bonichon et al.\ \cite{DBLP:conf/esa/BonichonBCDHS18}.
The authors in \cite{compass} introduce a strategy that combines compass routing with face routing to obtain a routing strategy with delivery guarantee for all kinds of connected geometric graphs.
Several extensions of these original ideas have been investigated.
Some of these extensions are FACE-I, FACE-II, AFR, OAFR, GOAFR and GOAFR+ \cite{face12,DBLP:conf/dialm/KuhnWZ02,DBLP:conf/mobihoc/KuhnWZ03,DBLP:conf/podc/KuhnWZZ03}.
In \cite{DBLP:conf/mobihoc/KuhnWZ03,DBLP:conf/podc/KuhnWZZ03} it is proven that the strategies GOAFR and GOAFR+ are asymptotically optimal.
GOAFR and GOAFR+ achieve path length which have a quadratic competitiveness compared to the shortest path.
INF \cite{randomForwarding} is an approach that combines greedy forwarding with randomness.
In case a packet gets stuck via greedy routing, a random intermediate location is chosen.
This, however, requires some global knowledge to choose a random node which is not too far away from the location.
Additionally, INF does not ensure delivery guarantee. 

In \cite{yu2001geographical}, also greedy routing with a modification of face routing is used.
To overcome potential bottlenecks which avoid a guaranteed delivery, a restricted flooding procedure is used. 
In a slightly different setting, namely nodes on the grid, a packet is routed along multiple paths and is hence comparable to a restricted flooding procedure \cite{ruhrup2006online}.
In their model, alive node and crashed nodes exist on the grid. 
The crashed nodes behave like obstacles on the grid which have to be avoided by routing paths. 

In addition to the just mentioned local routing, there are also routing strategies that use a portion of global knowledge about the network.
BoundHole~\cite{DBLP:conf/infocom/FangGG04}, for instance, uses a preprocessing phase at each node which is located at the boundary of a hole.
These hole nodes send out a packet which is routed using the right hand rule around the perimeter of the hole until it reaches the source of the message.
On the way, the packet collects information about the boundary of the hole.
With knowledge about the boundary, the authors are able to find better paths than strategies which only use local~information. 
For a survey on all mentioned strategies, we refer the reader to \cite{Ahmed:2005:HPW:1072989.1072992}.\\

To combine local and global routing strategies, where the goal is to use only few global knowledge, Hybrid Communication Networks have been introduced \cite{algosensorsPaper}.
Hybrid Communication Networks have also been proposed in different contexts.
In practical applications, the term Hybrid Communication Network usually combines wired with wireless networks like in \cite{710380,hybridWiredWireless}.
Closer to our application is the scenario presented in \cite{DBLP:conf/icalp/GmyrHSS17}.
The authors assume an external network which is not under control of the network participants.
The participants can, however, control an internal network.
The authors show that the combination of both networks allows to evaluate monitoring problems of the external network much faster than in classical approaches which only use
the links of the external network. 

The approach we extend in this work makes use of global information as well \cite{algosensorsPaper}.
The global information is gathered via a Hybrid Communication Network.
In a Hybrid Communication Network, nodes can communicate with other nodes in their ad hoc range for free.
In addition, they can use long-range links to communicate with any other node of the network.
These long-range links, the Cellular Infrastructure, are costly.
The solution they propose is to compute an Overlay Network in which holes are represented by their convex hulls.
It is assumed that the convex hulls of the holes do not intersect.
The storage requirements for some nodes are asymptotically in the size of the sum of all holes.
In this work, we aim to reduce the storage requirements for these nodes and investigate also the challenging question of $c$-competitive routing through intersections of hole abstractions.\\

\section{Preliminaries} \label{section:preliminaries}
Initially, we define our ad hoc network topology and provide some general results about routing in the ad hoc network.
\subsection{Properties of the ad hoc network}
In this paper, we assume that the nodes of the ad hoc network are in general position, i.e., there are no three nodes on a line and no four nodes on a cycle.
Moreover, we assume that the coordinates of each node are unique and thus there are no two nodes on the same position. We consider a \twoDel{} as topology for the ad hoc network which is related to the Delaunay Graph.
Let $\bigcirc\left(u,v,w\right)$ be the unique circle through the nodes $u, v$ and $w$ and $\triangle \left(u,v,w\right)$ be the triangle formed by the nodes $u,v$ and $w$.
For any $V\subseteq \mathbb{R}^2$, the \emph{Delaunay Graph} $\mathrm{Del}\left(V\right)$ of $V$ contains all triangles $\triangle \left(u,v,w\right)$ for which $\bigcirc\left(u,v,w\right)$ does not contain any further node besides $u,v$ and $w$.
The $2$-localized Delaunay Graph is a structure that only allows edges which do not exceed the transmission range of a node. 
In $k$-localized Delaunay Graphs, a triangle $\triangle \left(u,v,w\right)$ for nodes $u,v,w$ of $V$ satisfies that all edges of $\triangle \left(u,v,w\right)$ have length at most $1$ and the interior of the disk $\bigcirc \left(u,v,w\right)$ does not contain any node which can be reached within $k$ hops from $u,v$ or $w$ in UDG($V$). 
The \emph{$k$-localized Delaunay Graph} $LDel^k\left(V\right)$ is defined to consist of all edges of $k$-localized triangles and all edges $\left(u,v\right)$ for which the circle with diameter $\overline{uv}$ does not contain any further node $w \in V$.
For $k=2$, we obtain the $2$-localized Delaunay Graph which is also a planar graph~\cite{localDelaunay}.
\twoDel s can be constructed in a constant number of communication rounds \cite{algosensorsPaper}.
Since $2$-localized Delaunay Graphs do not contain all edges of a corresponding Delaunay Graph, one cannot simply use routing strategies for Delaunay Graphs in our scenario.
We denote faces of the $2$-localized Delaunay Graph which are not triangles as \emph{holes}.
For the formal definition of holes, we distinguish between \emph{inner} and \emph{outer} holes.
The definition of inner holes is similar to the definition used in~\cite{DBLP:conf/infocom/FangGG04}.
\begin{definition} [Hole] \label{definition:innerHole}
	Let $V \in \mathbb{R}^2$.
	An \emph{inner hole}  is a face of $LDel^2\left(V\right)$ with at least $4$ nodes.
	Furthermore, let $CH\left(V\right)$ be the set of all edges of the convex hull of $V$.
	Define $\overline{LDel^2}\left(V\right)$ to be the graph that contains all edges of the \twoDel{} and $CH\left(V\right)$. 
	An \emph{outer hole} is a face in $\overline{LDel^2}\left(V\right)$ with at least $3$ nodes, that contains an edge $e \in CH\left(V\right)$ with $\|e\| > 1$.
\end{definition}
Nodes lying on the perimeter of a hole are called \emph{hole nodes}.
Note that the hole nodes of the same hole form a ring, i.e., each hole node  is adjacent to exactly two other hole nodes for each hole it is part of.
The choice of the $2$-localized Delaunay Graph as network topology is motivated by its \emph{spanner}-property.
The Delaunay Graph $\mathrm{Del}\left(V\right)$ contains paths between every pair of nodes $v$ and $w$ of $V$ which are not longer than $c$ times their Euclidean distance.
Delaunay Graphs are proven to be geometric $1.998$-spanners~\cite{xiaDelaunaySpanner}.
Xia argues that the bound of $1.998$ also relates to $2$-localized Delaunay Graphs~\cite{xiaDelaunaySpanner}.
However, these graphs are not spanners of the Euclidean metric but of the Unit Disk Graph.
For the ease of notation, whenever we say that there is a $c$-competitive path in the \twoDel{} we mean that the path has length at most $c$ times the length of the shortest possible path in the Unit Disk Graph of the same node set.
\subsection{Competitive Routing in \twoDel{}s}
In general, we cannot apply routing strategies for the Delaunay Graph in \twoDel{}s since \twoDel{}s contain holes.
In this section, however, we prove that \twoDel{}s and Delaunay Graphs do not differ in dense regions and hence we can apply routing strategies for the Delaunay Graph
between visible nodes, i.e., pairs of nodes which direct line segment does not intersect any hole.
\skipSpace
\begin{theorem} \label{theorem:localDelaunayVisibilityPath}
	Let $G_{2Del} = (V,E_{2Del})$ be a \twoDel{} and $s,t \in V$ such that the line segment $\overline{st}$ does not intersect any hole of $G_{2Del}$. 
	Then, there exists a path $p$ between $s$ and $t$ in $G_{2Del}$ such that
	\begin{center}
		$\eucl{p} \leq 1.998 \cdot \eucl{st}.$
	\end{center}
\end{theorem}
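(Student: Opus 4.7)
The plan is to reduce the statement to Xia's classical result that the Delaunay graph of $V$ is a $1.998$-spanner of the Euclidean metric~\cite{xiaDelaunaySpanner} and to argue that the spanner path it produces already lies in $G_{2Del}$ under our hypothesis on $\overline{st}$. By Xia's theorem there is a path $p$ in $\mathrm{Del}(V)$ from $s$ to $t$ with $\eucl{p} \leq 1.998 \cdot \eucl{st}$, and since $\eucl{st}$ is a lower bound on the length of any $s$--$t$ path in $\mathrm{UDG}(V)$, it suffices to exhibit such a $p$ whose edges all lie in $E_{2Del}$.

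First I would recall that Xia's construction is local in a strong sense: $p$ is built inductively along the sequence of Delaunay triangles crossed by the segment $\overline{st}$, and every edge of $p$ is incident to one of these crossed triangles. It is therefore enough to show that every Delaunay triangle that $\overline{st}$ enters is also a face of $G_{2Del}$, for then all three of its edges lie in $E_{2Del}$.

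Suppose, for contradiction, that $\overline{st}$ crosses a Delaunay triangle $\triangle(u,v,w)$ that is not a face of $G_{2Del}$. Because the circumcircle of $\triangle(u,v,w)$ is empty and a Delaunay triangle with all edges of length at most $1$ is automatically a $2$-localized triangle, the only way for $\triangle(u,v,w)$ to be missing from $G_{2Del}$ is that at least one of its edges has length $>1$ and was therefore excluded from $E_{2Del}$. The face of $G_{2Del}$ containing the point where $\overline{st}$ meets $\triangle(u,v,w)$ is then obtained by merging $\triangle(u,v,w)$ with the Delaunay triangle(s) on the far side of every removed long edge, producing a face with at least four boundary nodes; by \Cref{definition:innerHole} this merged face is a hole (an outer hole if the removed long edge happens to lie on the convex hull), and $\overline{st}$ crosses it — contradicting the hypothesis. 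Consequently every Delaunay triangle crossed by $\overline{st}$ has all edges of length $\leq 1$ and persists as a face in $G_{2Del}$, so every edge of $p$ lies in $E_{2Del}$ and $p$ is a path in $G_{2Del}$ with the desired length bound.

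The main technical obstacle is pinning down the locality property of Xia's spanner path — that $p$ uses only edges of Delaunay triangles crossed by $\overline{st}$ — cleanly enough to support the hole-merging argument. Xia proceeds inductively on the crossed triangles, but a careful restatement of that construction in our notation is needed before the above contradiction can be invoked edge by edge. If that locality is inconvenient to extract verbatim, a safe fallback is to re-run Xia's spanner proof directly on the subgraph induced by the crossed triangles and observe that, by the hole-merging argument, this subgraph is identical in $\mathrm{Del}(V)$ and in $G_{2Del}$; the resulting path then inherits both the $1.998$-bound and membership in $E_{2Del}$.
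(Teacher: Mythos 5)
Your proposal follows essentially the same route as the paper: the paper also reduces to Xia's $1.998$-spanner construction, observes that the spanner path is confined to the chain of Delaunay triangles crossed by $\overline{st}$ (phrased there via Bose et al.'s upper/lower chains), and shows in its Lemma~1 that every edge of that chain has length at most $1$ --- because a longer edge would merge triangles into a face with at least four nodes, i.e.\ a hole crossed by $\overline{st}$. Your hole-merging contradiction is the same argument, so the proposal matches the paper's proof in substance.
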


To prove \Cref{theorem:localDelaunayVisibilityPath}, we make use of a definition which was introduced by Bose et al.\ \cite{competitiveRoutingInDelaunayImproved}. 
Let $s$ and $t$ be nodes of a Delaunay Graph.
Bose et al.\ considered the chain of triangles intersected by the line segment $\overline{st}$. 
Each of these triangles contains an edge which either lies completely above or below $\overline{st}$. 
If we consider only these edges, we can see that these edges form a polygon.
Walking along all edges lying above $\overline{st}$ describes a path between $s$ and $t$.
This path is called \textit{upper chain} of $s$ and $t$ ($\mathrm{UC}(s,t)$) and the corresponding path for all edges which lie below $\overline{st}$ is called \textit{lower chain} of $s$ and $t$ ($\mathrm{LC}(s,t)$).
Xia has proven that between any pair of nodes $s$ and $t$ in a Delaunay Graph a path with length at most $1.998 \cdot \eucl{st}$ exists \cite{xiaDelaunaySpanner}.
The path construction of Xia uses only edges which connect nodes of $\mathrm{UC}(s,t)$ and $\mathrm{LC}(s,t)$.
We use this knowledge and show that in Delaunay Graphs a polygon described by an upper and a lower chain of nodes $s$ and $t$ never contains any edge with a length larger than $1$, provided $s$ and $t$ are visible from each other in the corresponding \twoDel.
Afterward, we conclude that between any pair of visible nodes $s$ and $t$ in a \twoDel{} a path with length at most $1.998 \cdot \eucl{st}$ exists.
\skipSpace
\begin{lem} \label{lemma:visibilityPath}
	Given a \twoDel{} $G_{2Del}~=~(V,E_{2Del})$ and two nodes $s$ and $t$ such that the line segment $\overline{st}$ does not intersect any hole of $G_{2Del}$.
	Let $G_{Del}~=~(V,E_{Del})$ be the Delaunay Graph to the same point set $V$.
	The polygon described by $\mathrm{UC}(s,t)$ and $\mathrm{LC}(s,t)$ in $G_{Del}$ does not contain any edge $e$ with $\eucl{e} >1$.
\end{lem}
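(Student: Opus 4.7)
The plan is to argue by contrapositive: suppose some edge $e = (u,v)$ of the polygon described by $\mathrm{UC}(s,t) \cup \mathrm{LC}(s,t)$ in $G_{Del}$ satisfies $\eucl{uv} > 1$. I will then exhibit a hole of $G_{2Del}$ through which $\overline{st}$ passes, contradicting the hypothesis. Without loss of generality assume $e \in \mathrm{UC}(s,t)$. By the definition of the upper chain, $e$ bounds some Delaunay triangle $\triangle(u,v,w)$ whose interior is crossed by $\overline{st}$, with $w$ lying on the opposite side of the line through $u$ and $v$ from the remainder of the upper chain.

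The first observation I would exploit is that $\eucl{uv} > 1$ forces $e \notin \mathrm{UDG}(V)$, and hence $e \notin E_{2Del}$. Because $G_{2Del}$ is a subgraph of $G_{Del}$, every face of $G_{2Del}$ is a union of one or more Delaunay faces; let $F$ denote the face of $G_{2Del}$ whose interior contains the interior of $\triangle(u,v,w)$. Since the bounding edge $e$ of $\triangle(u,v,w)$ is missing from $G_{2Del}$, the face $F$ must extend across $e$ to the opposite side of the supporting line of $\overline{uv}$.

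A brief case analysis then completes the argument. If $e$ is an interior Delaunay edge, there is a second Delaunay triangle $\triangle(u,v,w')$ with $w' \ne w$ sharing $e$ with $\triangle(u,v,w)$; by the previous step $F$ contains both triangles, so its boundary cycle includes the four distinct vertices $u, v, w, w'$, making $F$ an inner hole by \Cref{definition:innerHole}. If instead $e$ lies on the convex hull of $V$, then $e$ reappears in $\overline{LDel^2}(V)$ as a convex hull edge of length greater than $1$, and the face of $\overline{LDel^2}(V)$ containing $\triangle(u,v,w)$ has at least the three vertices $u, v, w$ on its boundary together with the convex hull edge $e$, so it qualifies as an outer hole. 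In both cases the segment $\overline{st}$ crosses $\triangle(u,v,w) \subseteq F$, hence it intersects a hole of $G_{2Del}$, which is the desired contradiction.

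I expect the main obstacle to be the convex-hull case, which forces the detour through $\overline{LDel^2}(V)$: one has to verify that the hole identified really contains the long convex hull edge $e$, as required by the outer-hole definition, rather than only short boundary edges exposed by other missing Delaunay edges. The interior-edge case, in contrast, reduces immediately to the observation that removing a single shared edge merges two adjacent Delaunay triangles into a quadrilateral with four distinct vertices, and the subgraph relation $G_{2Del} \subseteq G_{Del}$ guarantees that no new faces strictly smaller than a Delaunay triangle can appear in $G_{2Del}$.
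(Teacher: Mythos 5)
Your proof is correct and follows essentially the same route as the paper's: an edge of the polygon longer than $1$ cannot belong to $G_{2Del}$, so the Delaunay triangle it bounds (which $\overline{st}$ crosses) merges into a face of $G_{2Del}$ with at least four vertices, i.e., a hole, contradicting the visibility hypothesis. You are in fact somewhat more careful than the paper, whose proof splits by the edge's position relative to $\overline{st}$ (crossing, above, below) and never explicitly treats the convex-hull/outer-hole case that you isolate.
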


\begin{proof}
	There are three types of edges which are part of the polygon with boundaries $\mathrm{UC}(s,t)$ and $\mathrm{LC}(s,t)$ in $G_{Del}$.
	Edges that cross the line segment $\overline{st}$, edges which lie completely above $\overline{st}$ and edges that lie completely below $\overline{st}$.
	We prove for every type of edges that these cannot be larger than one in case $s$ and $t$ are visible from each other in $G_{2Del}$. \\
	\skipSpace
	\textbf{Case 1:} Edges crossing $\overline{st}$: \\
	\skipSpace
	Without loss of generality, let $\triangle abc$ be a triangle which is intersected by $\overline{st}$ and $\overline{ab}$ an edge that crosses $\overline{st}$.
	Assume $\eucl{ab} > 1$.
	This immediately implies that $\overline{st}$ crosses a hole since $\overline{st}$ intersects a face with at least $4$ nodes which is a contradiction to our assumption.  \\
	\skipSpace
	\textbf{Case 2:} Edges above $\overline{st}$: \\
	\skipSpace
	Without loss of generality, let $\triangle abc$ be a triangle which is intersected by $\overline{st}$ and $\overline{ab}$ an edge that lies above $\overline{st}$ with $\eucl{ab} > 1$.
	With this knowledge we can conclude that $\overline{ac\vphantom{b}}$ and $\overline{bc}$ potentially lie on the perimeter of a hole if $\overline{ac\vphantom{b}}$ and $\overline{bc}$ are not hole edges themselves. 
	Again we can easily see that $\overline{st}$ would cross a hole in this case which is a contradiction to our assumption. \\
	\skipSpace
	\textbf{Case 3:} Edges below $\overline{st}$: \\
	\skipSpace
	We can apply the same argumentation as for Case $2$ here. \\
	\skipSpace
	Since every possible type of edges cannot have a length larger than $1$, we have proven \Cref{lemma:visibilityPath}.

\end{proof}
\skipSpace
\Cref{lemma:visibilityPath} implies that we can apply routing strategies for Delaunay Graphs also between visible nodes in \twoDel{}s.
This leads to the relation between our routing strategy and Visibility Graphs.
In the Visibility Graph $Vis\left(V\right)$ of a set of polygons, $V$ represents the set of corners of the polygons, and there is an edge $\{v,w\}$ in $Vis\left(V\right)$ if and only if a line can be drawn from $v$ to $w$ without crossing any polygon, i.e., $v$ is visible from $w$.
De Berg et al.\ showed that it is enough to consider nodes of obstacle polygons for finding shortest paths in polygonal domains~\cite{computationalGeometryDeBerg}.
Hence, if we consider the Visibility Graph of holes of the \twoDel{}, we can translate a path in the Visibility Graph to a path in \twoDel{} by applying a routing strategy for Delaunay Graphs along every edge on the path in the Visibility Graph.
As we do not want to store large routing tables, we are interested in online routing strategies for the Delaunay Graph.
In this work, we make use of the online strategy MixedChordArc \cite{DBLP:conf/esa/BonichonBCDHS18} which finds $\mixedChord$-competitive paths between every source and target node in the Delaunay Graph. 
To sum it up, knowledge about the Visibility Graph of holes enables us to find $\mixedChord$-competitive paths in the \twoDel{} between any pair of nodes $(s,t)$ by applying the MixedChordArc-strategy along every edge of the shortest path between $s$ and $t$ in the Visibility Graph.
\section{Geometric properties of bounding box paths} \label{section:geometry}
We have seen that knowledge about the Visibility Graph of holes enables us to find $c$-competitive paths in the \twoDel{}.
In general, however, the node set of a Visibility Graph can be very large since potentially many nodes could lie on the boundary of holes.
To reduce the space constraints and to speed up the computation of $c$-competitive paths, we aim for a reduction of the number of nodes in the Visibility Graph while still being able to find $c$-competitive paths.
To do so, we reduce the Visibility Graph by only considering the bounding boxes of holes.
The following definition defines the axis-parallel bounding box of a hole.
\skipSpace
\begin{definition}[Bounding Box] \label{definition:boundingBox}
	\skipSpace
	Let $p$ be a polygon. 
	Let $min_x = \min_{v \in p} (x(v))$ and $max_x, min_y$ and $max_y$ be defined analogously. 
	These points are called \emph{extreme points} of $p$.
	The (axis-parallel) \emph{Bounding Box} of $p$ is a polygon with the following nodes: 
	\begin{enumerate}
		\item $bb_{t\ell}(p) = (x(min_x),y(max_y))$ (top-left)
		\item $bb_{tr}(p) = (x(max_x),y(max_y))$ (top-right)
		\item $bb_{b\ell}(p) = (x(min_x),y(min_y))$ (bottom-left)
		\item $bb_{br}(p) = (x(max_x),y(min_y))$ (bottom-right)
	\end{enumerate}
	The nodes are connected via the direct line segments $\overline{bb_{t\ell}(p)bb_{tr}(p)}, \overline{bb_{tr}(p),bb_{br}(p)},$ $\overline{bb_{t\ell}(p)bb_{b\ell}(p)}$ and $\overline{bb_{b\ell}(p)bb_{br}(p)}$.
\end{definition}
In the following, we see how we can embed bounding boxes of holes in the \twoDel{} and that considering only bounding boxes of holes allows us to find $c$-competitive paths between every source
and target node that lies outside of any bounding box.

\subsection{Embedding of Bounding Boxes} \label{section:embedding}
In general, nodes of bounding boxes of holes do not match with any nodes of the ad hoc network (see \Cref{fig:bbLocal,fig:bbNotLocal}).
In this section, we propose an embedding of bounding boxes in the \twoDel{} and prove later on that we can find $c$-competitive paths with help of the embedding.
Since we consider a given \twoDel{} with node set $V$ and edge set $E$, we have to find nodes in $V$ that represent nodes of bounding boxes.
Nodes of bounding boxes of holes are called \textit{real} bounding box nodes whereas nodes of $V$ that represent real bounding box nodes are denoted as \textit{representatives} of a real bounding box.

\begin{figure}[htbp]
	% minipage mit (Blind-)Text
	\begin{minipage}{0.49\textwidth} 
		\includegraphics[width=\textwidth]{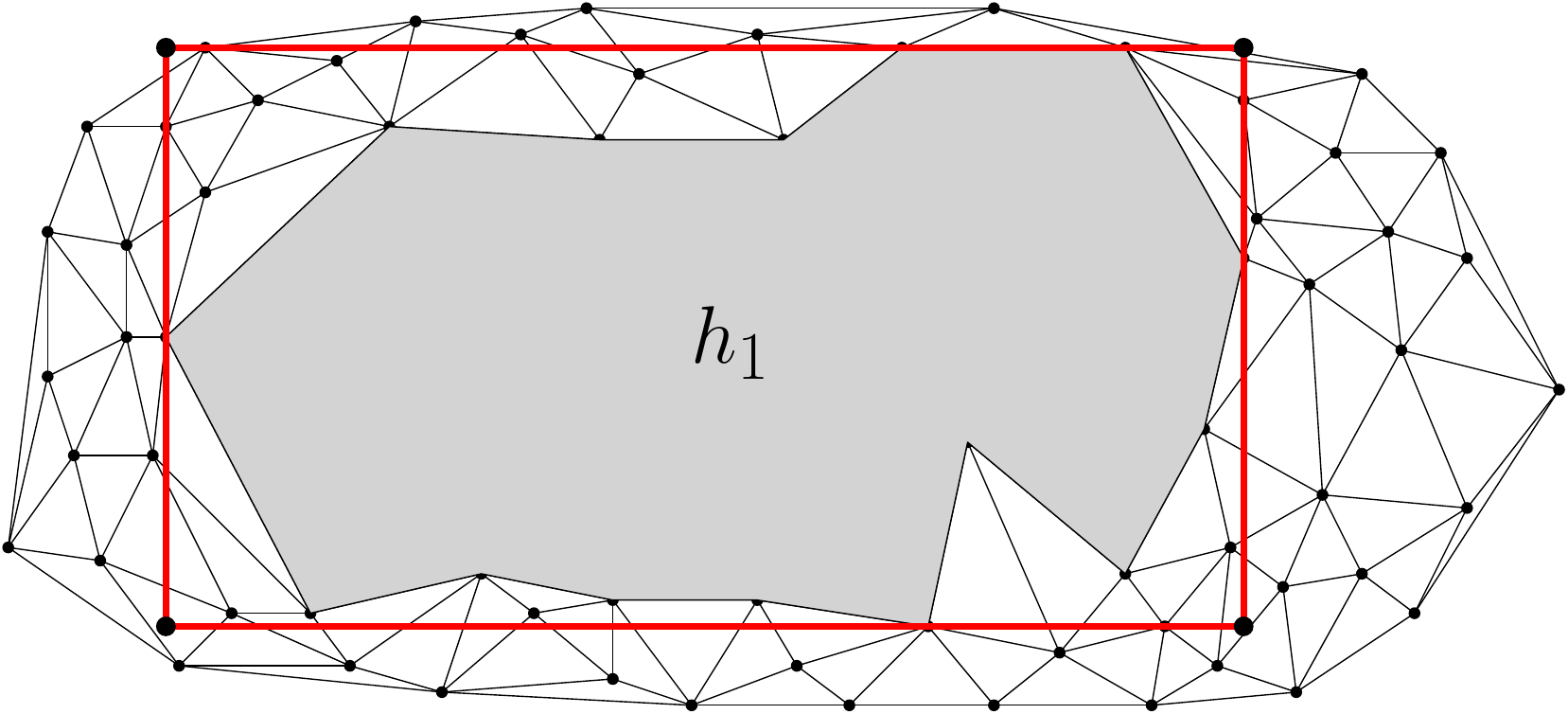}
		\caption[A bounding box in a \twoDel{}.]{The nodes of the red bounding box are not part of the \twoDel.}
		\label{fig:bbNotLocal} 
	\end{minipage}
	% Auffüllen des Zwischenraums
	\hfill
	% minipage mit Grafik
	\begin{minipage}{0.49\textwidth}
		% \textwidth bezieht sich nun auf die Minipage
		\includegraphics[width=\textwidth]{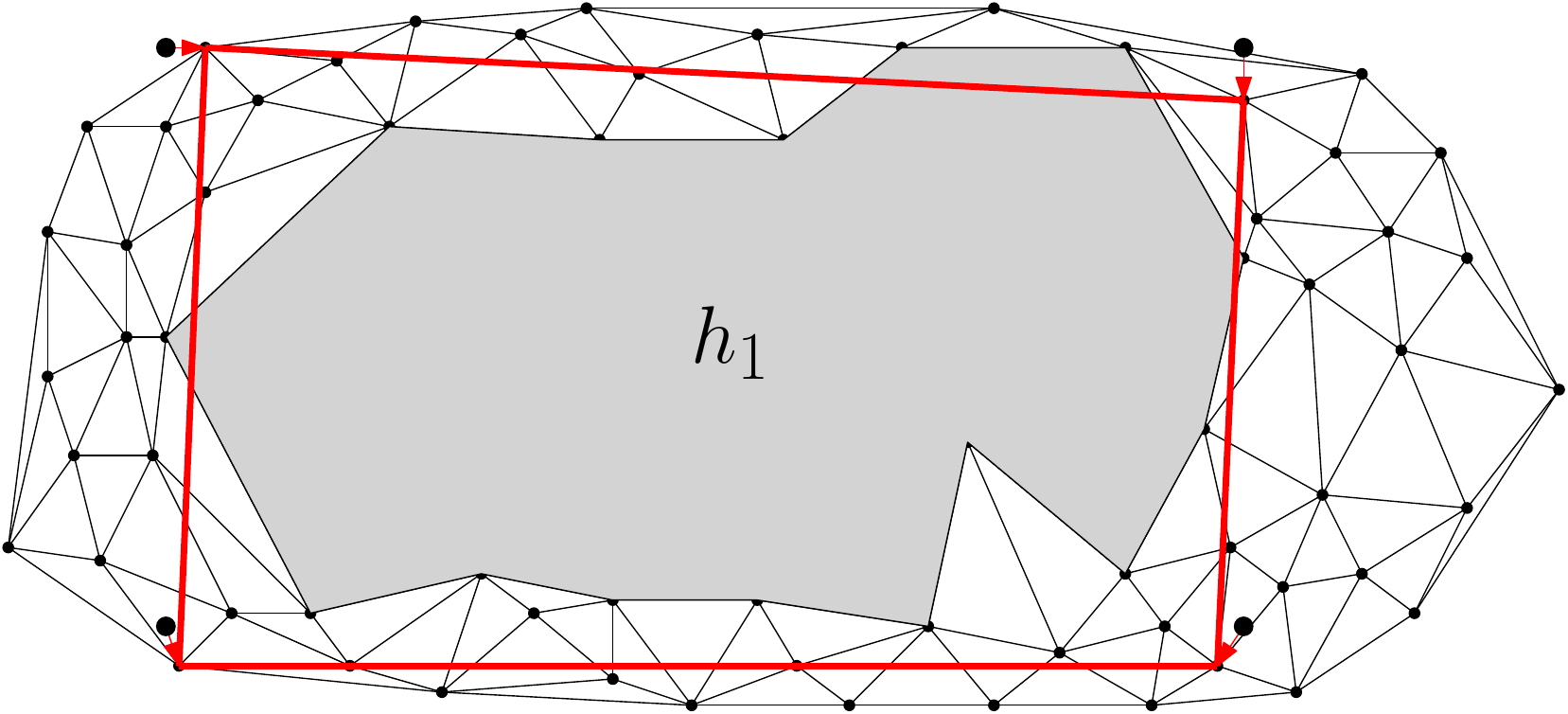}
		\caption[A bounding box of representatives in a \twoDel{}.]{A bounding box described by its representatives of the ad hoc network.}
		\label{fig:bbLocal} 
	\end{minipage}
	% \caption{noch eine Caption}
\end{figure}
\noindent
The solution is to choose those nodes of $V$ as representatives of real bounding box nodes which have the shortest distance to a real bounding box node.
More formally, let $G_{2Del} = (V,E)$ be a \twoDel{} with corresponding Voronoi Diagram $\mathrm{Vor}(V)$. 
A real bounding box node $b$ is represented by the node of the Voronoi Cell $c \in \mathrm{Vor}(V)$ with $b \in c$.
The resulting bounding box (see \Cref{fig:bbLocal}) does not necessarily enclose the entire hole anymore but we prove that it has similar properties as a real bounding box.
Since we have proven that $c$-competitive paths between visible nodes in \twoDel{} exist, our idea is to use the direct line between real bounding box nodes for routing decisions.
We call this direct line \textit{virtual Axis}.
\skipSpace
\begin{definition}[Virtual Axis]
	\skipSpace
	Consider a \twoDel{} $G_{2Del}= (V,E)$ with nodes $s,t \in V$.
	Let $C_s$ and $C_t$ be the cells of the corresponding Voronoi Diagram with $s \in C_s$ and $t \in C_t$. 
	Additionally let $a,b \in \mathbb{R}^2$ with $a \in C_s$ and $b \in C_t$ but $a,b \notin V$ and $a \neq b$.
	We call the line segment $\overline{ab}$ a \emph{virtual Axis} between $s$ and $t$ in $G_{2Del}$.
	For the ease of notation, we simply write ${\mathrm{vAxis}(s,t)}$.
\end{definition}
\skipSpace
In our scenario, we use a virtual Axis between visible real bounding box nodes. 

\noindent
After clarifying the definition of virtual Axes, we can introduce the main theorem of this section. 
We prove that there exists a path with length at most $3.996$ times the Euclidean distance between the real bounding box nodes between two nodes $s$ and $t$ representing two adjacent real bounding box nodes. 
\skipSpace

\begin{theorem} \label{theorem:virtualAxisPaths}
	Let $G_{2Del} = (V,E)$ be a \twoDel{} with $s,t \in V$.
	For any $\mathrm{vAxis}(s,t)$ with endpoints $bb_{t\ell}$ and $bb_{tr}$ that does not intersect any hole of $G_{2Del}$, there exists a path $p$ between $s$ and $t$ in $G_{2Del}$ with length at most:
	\begin{center}
	$\eucl{p} \leq 3.996 \cdot \eucl{bb_{t\ell}bb_{tr}}.$
	\end{center}
\end{theorem}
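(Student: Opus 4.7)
The plan is to reduce the statement to Theorem \ref{theorem:localDelaunayVisibilityPath} by establishing two facts: (a) the direct segment $\overline{st}$ does not intersect any hole of $G_{2Del}$, and (b) $\eucl{st} \leq 2 \cdot \eucl{bb_{t\ell}bb_{tr}}$. Combining both with the 1.998-spanner property for visible pairs then yields a path $p$ in $G_{2Del}$ of length $\eucl{p} \leq 1.998 \cdot \eucl{st} \leq 1.998 \cdot 2 \cdot \eucl{bb_{t\ell}bb_{tr}} = 3.996 \cdot \eucl{bb_{t\ell}bb_{tr}}$, which is exactly the target bound.

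The key geometric observation for (b) is that the topmost extreme point $max_y$ of the hole lies on the segment $\overline{bb_{t\ell}bb_{tr}}$: its $y$-coordinate coincides with the $y$-coordinate of both corners by construction of the bounding box, and its $x$-coordinate lies between $x(min_x) = x(bb_{t\ell})$ and $x(max_x) = x(bb_{tr})$. Because $s$ is the representative of $bb_{t\ell}$, i.e., the Voronoi cell of $s$ contains $bb_{t\ell}$, the Voronoi closeness property gives $\eucl{s\,bb_{t\ell}} \leq \eucl{max_y\,bb_{t\ell}}$, and analogously $\eucl{t\,bb_{tr}} \leq \eucl{max_y\,bb_{tr}}$. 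Since $max_y$ lies on the segment, adding these two inequalities yields $\eucl{s\,bb_{t\ell}} + \eucl{bb_{tr}\,t} \leq \eucl{max_y\,bb_{t\ell}} + \eucl{max_y\,bb_{tr}} = \eucl{bb_{t\ell}bb_{tr}}$. A single application of the triangle inequality along $s \to bb_{t\ell} \to bb_{tr} \to t$ then gives $\eucl{st} \leq \eucl{s\,bb_{t\ell}} + \eucl{bb_{t\ell}bb_{tr}} + \eucl{bb_{tr}\,t} \leq 2 \cdot \eucl{bb_{t\ell}bb_{tr}}$.

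For (a), I would argue that the entire quadrilateral region with corners $s, bb_{t\ell}, bb_{tr}, t$ is hole-free. The segments $\overline{s\,bb_{t\ell}}$ and $\overline{bb_{tr}\,t}$ lie in the convex Voronoi cells of $s$ and $t$ respectively; the top edge $\overline{bb_{t\ell}bb_{tr}}$ is the virtual axis and is hole-free by assumption. Since the bounding box corners $bb_{t\ell}$ and $bb_{tr}$ sit on the upper edge of the axis-parallel bounding box of the hole, the entire hole lies (weakly) below the virtual axis, touching it only at $max_y$. Consequently the quadrilateral bounded by these three segments and $\overline{st}$ cannot contain any hole face of $G_{2Del}$, so in particular the direct segment $\overline{st}$ avoids all holes, which is exactly the precondition to invoke Theorem \ref{theorem:localDelaunayVisibilityPath}.

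I expect step (a) to be the main obstacle: although the bound in (b) follows from a clean triangle-inequality argument that only uses the Voronoi-closeness of the representatives together with the geometric placement of $max_y$ on the virtual axis, establishing that $\overline{st}$ itself stays hole-free requires care when $s$ or $t$ is far from the axis. The argument must combine the hole-free assumption on the virtual axis with the fact that the hole is enclosed by its bounding box and touches the top edge only at an extreme point, in order to rule out that the slanted segment $\overline{st}$ dips into a hole on either side.
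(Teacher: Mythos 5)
Your step (b) is fine and in fact a nice alternative to the paper's argument: the paper bounds $\eucl{s\,bb_{t\ell}}$ and $\eucl{bb_{tr}\,t}$ by $\tfrac12$ each (since each corner lies in a triangle incident to its representative) and then uses $\eucl{bb_{t\ell}bb_{tr}} > 1$, whereas your Voronoi-closeness comparison against the hole node $max_y$ lying on the axis gives the same bound $\eucl{s\,bb_{t\ell}} + \eucl{bb_{tr}\,t} \leq \eucl{bb_{t\ell}bb_{tr}}$ more directly.

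The gap is exactly where you suspected: step (a) is false in general, and the reduction to \Cref{theorem:localDelaunayVisibilityPath} via the segment $\overline{st}$ cannot be repaired. The hole whose bounding box defines the axis reaches \emph{up to} the virtual axis: its node $max_y$ lies on $\overline{bb_{t\ell}bb_{tr}}$, and the hole face is incident to $max_y$ from below. The representatives $s$ and $t$ generically lie strictly below the corners $bb_{t\ell}$ and $bb_{tr}$ (e.g., each a constant fraction of a unit below), so $\overline{st}$ runs strictly below the line $y = y(max_y)$ while still spanning the full horizontal extent of the bounding box; it therefore cuts through the top portion of the hole near $max_y$ (picture a hole shaped like a tall tent with apex $max_y$). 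So your quadrilateral $s\,bb_{t\ell}\,bb_{tr}\,t$ \emph{does} contain part of the hole, $s$ and $t$ need not be visible from each other, and \Cref{theorem:localDelaunayVisibilityPath} does not apply. The paper avoids this entirely: it never routes along $\overline{st}$. Instead it forms the chain of disks $\mathcal{O}$ from the circumcircles of the triangles intersected by the \emph{virtual axis} (which is hole-free by hypothesis), observes that $s$ and $t$ are terminals of this chain, bounds the shortest polyline by the candidate polyline $s\,bb_{t\ell}\,bb_{tr}\,t$ so that $D_{\mathcal{O}}(s,t) \leq 2\cdot\eucl{bb_{t\ell}bb_{tr}}$, and then invokes Xia's bound $P_{\mathcal{O}}(s,t) \leq 1.998\cdot D_{\mathcal{O}}(s,t)$ to get $3.996\cdot\eucl{bb_{t\ell}bb_{tr}}$. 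To fix your proof you would need to replace the visibility reduction by this chain-of-disks argument (or something equivalent that follows the axis rather than $\overline{st}$); your bound (b) can then be slotted in to control the candidate polyline.
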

\skipSpace
We split the proof of Theorem \ref{theorem:virtualAxisPaths} into two lemmas. \Cref{lemma:lengthOfST} bounds the length of the line segment $\overline{st}$ and \Cref{lemma:shortestPolylineST} proves that $\mathrm{vAxis(s,t)}$ is a candidate for the shortest polyline (see \Cref{definition:shortestPolyline}) between $s$ and $t$. The combination of both yields \Cref{theorem:virtualAxisPaths}.
\skipSpace
\begin{lem} \label{lemma:lengthOfST}
	Let $G_{2Del} = (V,E,w)$ be a \twoDel{} with a pair of nodes $s,t \in V$ and let $\mathrm{vAxis}(s,t)$ be a virtual Axis between $s$ and $t$ that does not intersect any hole of $G_{2Del}$.
	The endpoints of $\mathrm{vAxis}(s,t)$ are denoted as $bb_{t\ell}$ and $bb_{tr}$.
	Then, 
	\begin{center}
	$\eucl{st} \leq 2 \cdot \eucl{bb_{t\ell}bb_{tr}}.$
	\end{center}
\end{lem}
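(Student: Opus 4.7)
The plan is to locate a specific node $v\in V$ that sits on the virtual axis itself and then use $v$ simultaneously in two roles: as a Voronoi-witness that bounds $\eucl{s\,bb_{t\ell}}$ and $\eucl{t\,bb_{tr}}$, and as an intermediate waypoint for a triangle-inequality chain from $s$ to $t$. The existence of such a $v$ is the geometric content that makes the sharp factor $2$ possible.

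Concretely, I would first observe that, because $bb_{t\ell}$ and $bb_{tr}$ are the top-left and top-right corners of the same hole's bounding box, both share the $y$-coordinate $\max_y$ of the underlying hole. By the definition of bounding box, there is a hole node $v\in V$ with $y(v)=\max_y$ and $x(v)\in[\min_x,\max_x]$; hence $v$ lies on the segment $\overline{bb_{t\ell}\,bb_{tr}}$. It cannot coincide with either endpoint, since the virtual-axis definition forbids endpoints from belonging to $V$. Writing $\alpha:=\eucl{bb_{t\ell}\,v}$ and $\beta:=\eucl{v\,bb_{tr}}$, I have $\alpha+\beta=\eucl{bb_{t\ell}\,bb_{tr}}$ with $\alpha,\beta>0$.

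Next I would feed $v$ into the Voronoi property twice. Since $bb_{t\ell}\in C_s$, the node $s$ minimises distance to $bb_{t\ell}$ over all of $V$, and in particular $\eucl{s\,bb_{t\ell}}\leq\eucl{v\,bb_{t\ell}}=\alpha$; symmetrically $\eucl{t\,bb_{tr}}\leq\beta$. The triangle inequality through $v$ then gives
\begin{equation*}
\eucl{s\,v}\leq\eucl{s\,bb_{t\ell}}+\alpha\leq 2\alpha,\qquad \eucl{t\,v}\leq\eucl{t\,bb_{tr}}+\beta\leq 2\beta,
\end{equation*}
and a final triangle inequality yields $\eucl{st}\leq\eucl{sv}+\eucl{vt}\leq 2(\alpha+\beta)=2\,\eucl{bb_{t\ell}\,bb_{tr}}$.

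The only non-routine step is spotting $v$ and exploiting it in both roles at once. A naive triangle-inequality chain along the axis, $\eucl{st}\leq\eucl{s\,bb_{t\ell}}+\eucl{bb_{t\ell}\,bb_{tr}}+\eucl{bb_{tr}\,t}$, together with the two Voronoi bounds only yields the weaker estimate $3\,\eucl{bb_{t\ell}\,bb_{tr}}$, because each of the two Voronoi distances can in principle saturate at the full $\eucl{bb_{t\ell}\,bb_{tr}}$ (using $v$ itself as the witness on either side). Routing the triangle inequality \emph{through} $v$ is precisely what couples the two Voronoi summands so that they collectively contribute only $2\,\eucl{bb_{t\ell}\,bb_{tr}}$ rather than $3\,\eucl{bb_{t\ell}\,bb_{tr}}$.
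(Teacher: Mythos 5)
Your proof is correct, but it takes a genuinely different route from the paper's. The paper bounds the two end segments by the absolute constant $\frac{1}{2}$ each, arguing that $bb_{t\ell}$ lies in a triangle of the \twoDel{} whose edges all have length at most $1$ and whose nearest vertex is $s$, and then absorbs the resulting additive $+1$ into a multiplicative factor via the claim $\eucl{bb_{t\ell}bb_{tr}} > 1$, attributed to the definition of holes. You instead use the topmost hole node $v$, which necessarily lies on the top edge of the bounding box, as a Voronoi witness, so that $\eucl{s\,bb_{t\ell}} \leq \alpha$ and $\eucl{t\,bb_{tr}} \leq \beta$ with $\alpha + \beta = \eucl{bb_{t\ell}bb_{tr}}$; the end segments are thus bounded \emph{relative to the axis length itself} and the factor $2$ falls out with no absolute length assumptions. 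This is arguably cleaner: you need neither the lower bound $\eucl{bb_{t\ell}bb_{tr}} > 1$ nor the paper's ``within $\frac{1}{2}$ of a triangle vertex'' claim (which is in fact slightly optimistic, since for a near-equilateral unit triangle the circumcenter is at distance up to $1/\sqrt{3} > 1/2$ from every vertex). The trade-off is that your argument only applies when the axis is an edge of a single hole's bounding box, so that an extreme hole node lies on it, whereas the paper's additive bound $\eucl{st} \leq \eucl{bb_{t\ell}bb_{tr}} + 1$ would also cover axes joining corners of different boxes. One correction to your closing remark: the detour through $v$ is not actually what saves the factor. Once the single witness $v$ gives $\eucl{s\,bb_{t\ell}} \leq \alpha$ and $\eucl{t\,bb_{tr}} \leq \beta$, the plain chain $\eucl{st} \leq \eucl{s\,bb_{t\ell}} + \eucl{bb_{t\ell}bb_{tr}} + \eucl{bb_{tr}t} \leq \alpha + (\alpha+\beta) + \beta = 2\cdot\eucl{bb_{t\ell}bb_{tr}}$ already suffices; your claim that it only yields $3\cdot\eucl{bb_{t\ell}bb_{tr}}$ overlooks that the two Voronoi bounds are coupled through the same witness.
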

\skipSpace
\begin{proof}
	By the triangle inequality $\eucl{st} \leq \eucl{sbb_{t\ell}} + \eucl{bb_{t\ell}bb_{tr}} + \eucl{bb_{tr}t}$.
	Next, we bound the length of the line segments $\overline{sbb_{t\ell}}$ and $\overline{bb_{tr}t}$.
	\noindent
	We argue that the length of each line segment is at most $\frac{1}{2}$.
	$s$ is the representative of $bb_{t\ell}$ since $s$ is the node with smallest distance to $bb_{t\ell}$ of all nodes in $V$.
	$bb_{t\ell}$ lies either in or on the boundary of a triangle $t_s$ that has $s$ as a node. 
	Each edge of this triangle has length at most $1$ due to the properties of \twoDel{}s.
	When considering triangles, it is easy to see that the endpoints of a triangle -- $t_s$ in our case -- are those nodes which have the largest distances to each other in the triangle. 
	Consequently, the worst possible case is that $bb_{t\ell}$ falls exactly on the half of an edge with length $1$. 
	In this case the closest point is at distance $\frac{1}{2}$ which is an upper bound for $\eucl{bb_{t\ell}s}$.
	We can use the same arguments for the line segment $\overline{bb_{tr}t}$.
	
	\noindent
	Thus, we can bound the length of $\overline{st}$ as follows:
	\begin{align*}
	&\phantom{what} & \eucl{st} &\leq \eucl{sbb_{t\ell}} + \eucl{bb_{t\ell}bb_{tr}} + \eucl{bb_{tr}t} \\
	&&&\leq \frac{1}{2} + \eucl{bb_{t\ell}bb_{tr}} + \frac{1}{2}\\
	&&&= \eucl{bb_{t\ell}bb_{tr}} +1
	\end{align*}
	Further, $\eucl{bb_{t\ell}bb_{tr}} > 1$, due to the definition of holes. 
	Thus we obtain a final bound on $\eucl{st}$:
	\begin{align*}
	& \phantom{what} &\eucl{st} &\leq  \eucl{bb_{t\ell}bb_{tr}} +1 \\
	&&&\leq \eucl{bb_{t\ell}bb_{tr}} + \eucl{bb_{t\ell}bb_{tr}} \\
	&&&= 2 \cdot \eucl{bb_{t\ell}bb_{tr}}.
	\end{align*}
\end{proof}
\noindent	
After being able to express the length of $\overline{st}$ in terms of $\eucl{bb_{t\ell}bb_{tr}}$, we start with proving that a $c$-competitive path between $s$ and $t$ exists.
The proof is inspired by the path construction for Delaunay Graphs introduced by Xia.
We need two definitions which have been introduced by Xia \cite{xiaDelaunaySpanner}.
\skipSpace
\begin{defi}[Chain of Disks]
	\skipSpace
	A finite sequence of disks $\mathcal{O} = (O_1, O_2, \dots, O_n)$ is called \emph{chain of disks} if it has the following two properties: 
	
	\noindent
	\textbf{Property 1:} \\
	Every pair of consecutive disks $O_i$ and $O_{i+1}$ intersects but neither disk contains the other. 
	
	\noindent
	Denote by $C_i^{(i-1)}$ and $C_i^{(i+1)}$ the arcs on the boundary of $O_i$ that are in $O_{i-1}$ and $O_{i+1}$ respectively. 
	These arcs are denoted as \emph{connecting arcs} of $O_i$. 
	
	\noindent
	\textbf{Property 2:} \\
	The connecting arcs of $O_i$ do not overlap for $2 \leq i \leq n-1$, however they can share an endpoint.
	
	\noindent
	Two points $u$ and $v$ are called \emph{terminals} of $\mathcal{O}$ if $u$ lies on the boundary of $O_1$ and is not in the interior of $O_2$ and $v$ lies on the boundary of $O_n$ and is not in the interior of $O_{n-1}$.
\end{defi}
\skipSpace
\begin{defi}[Shortest polyline between $u$ and $v$] \label{definition:shortestPolyline}
	\skipSpace
	Given a chain of disks $\mathcal{O} = (O_1,O_2, \dots, O_n)$ with terminals $u$ and $v$.
	Let $o_1, \dots , o_n$ be the centers of $O_1, \dots, O_n$.
	The polyline $uo_1\dots o_nv$ is called the \emph{centered polyline} between $u$ and $v$.
	For $1 \leq i \leq n-1$, let $a_i$ and $b_i$ be the intersections of the boundaries of $O_i$ and $O_{i+1}$.
	Without loss of generality, all $a_{i}$'s are assumed to be on one side of the centered polyline and all $b_i$'s are on the other side. 
	For notational convenience, define $a_0 = b_0 = u$ and $a_n = b_n = v$.
	Let $D_{\mathcal{O}}(u,v) = up_1 \dots p_{n-1}v$ be the \emph{shortest polyline} from $u$ to $v$ that consists of line segments $\overline{up_1\vphantom{v}}, \overline{p_1p_2}, \dots \overline{p_{n-1}v}$ where $p_i \in \overline{a_ib_i}$ for $1 \leq i \leq n-1$.
\end{defi}
\noindent

\begin{figure}[h]
	% minipage mit (Blind-)Text
	\centering
	\includegraphics[width=0.75\textwidth]{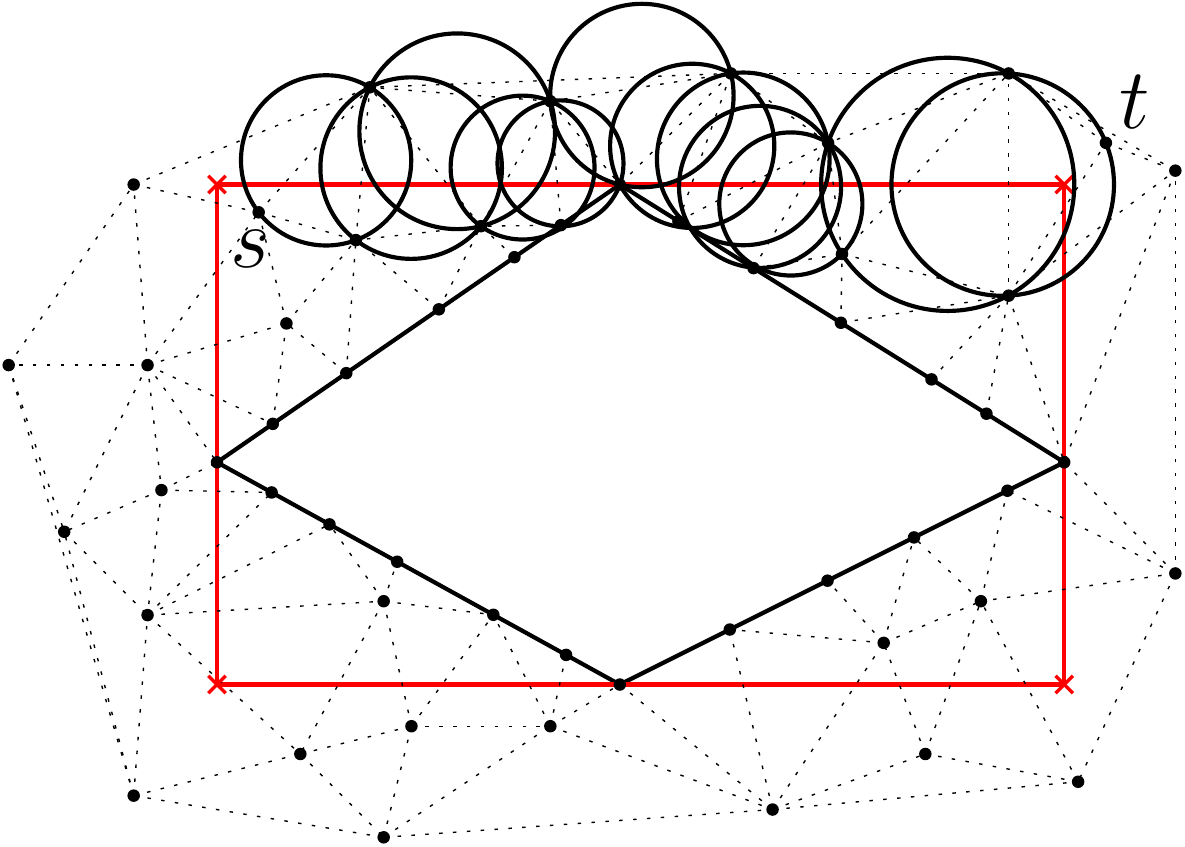}
	\caption{The Chain of Disks $\mathcal{O}$ from $s$ to $t$ along $\mathrm{vAxis}(s,t)$.}.
	\label{fig:diskChainst} 
\end{figure}
\noindent
With these definitions, we can state Lemma \ref{lemma:shortestPolylineST}.
\skipSpace
\begin{lem} \label{lemma:shortestPolylineST}
	Let $G_{2Del} = (V,E,w)$ be a \twoDel{} with a pair of nodes $s,t \in V$ and let $\mathrm{vAxis}(s,t)$ be a virtual Axis that does not intersect any hole of $G_{2Del}$
	with endpoints $bb_{t\ell}$ and $bb_{tr}$.
	Further let $\mathcal{O}$ be the chain of disks with terminals $s$ and $t$ obtained by the circumcircles of all triangles intersected by $\mathrm{vAxis}(s,t)$.
	Then, we can bound the shortest polyline  $D_{\mathcal{O}}(s,t)$ as follows:
	\begin{center}
	$D_{\mathcal{O}}(s,t) \leq 2 \cdot \eucl{bb_{t\ell}bb_{tr}}.$
	\end{center}
\end{lem}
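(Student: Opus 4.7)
The plan is to upper-bound the minimum-length polyline by exhibiting one concrete polyline of the required form and computing its length. Since $D_{\mathcal{O}}(s,t)$ is by definition the minimum over all polylines $sp_1\ldots p_{n-1}t$ with $p_i \in \overline{a_ib_i}$, any such polyline yields a valid upper bound. The natural candidate is to define $p_i$ as the intersection point of $\mathrm{vAxis}(s,t)$ with the chord $\overline{a_ib_i}$.

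First I would argue that these $p_i$ actually exist. Two consecutive triangles in the chain induced by $\mathrm{vAxis}(s,t)$ share an edge whose endpoints are common vertices of both circumcircles, and these endpoints are precisely $a_i$ and $b_i$; hence $\overline{a_ib_i}$ is the shared edge between the triangles of $O_i$ and $O_{i+1}$. Since $\mathrm{vAxis}(s,t)$ passes from one triangle of the chain into the next, it must cross this shared edge at exactly one point $p_i$, so $p_i \in \overline{a_ib_i}$ as required.

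Next I would bound the length of the candidate polyline. For the first segment, the triangle inequality gives $\|sp_1\| \leq \|s\,bb_{t\ell}\| + \|bb_{t\ell}\,p_1\|$, and analogously $\|p_{n-1}t\| \leq \|p_{n-1}\,bb_{tr}\| + \|bb_{tr}\,t\|$. The interior points $p_1,\ldots,p_{n-1}$ all lie on the straight line segment $\mathrm{vAxis}(s,t)$, and they appear in order along this segment (because they are in order of the triangles crossed by $\mathrm{vAxis}(s,t)$). Therefore
\begin{equation*}
\|bb_{t\ell}\,p_1\| + \sum_{i=1}^{n-2}\|p_ip_{i+1}\| + \|p_{n-1}\,bb_{tr}\| = \|bb_{t\ell}\,bb_{tr}\|.
\end{equation*}
Summing the three pieces yields $D_{\mathcal{O}}(s,t) \leq \|s\,bb_{t\ell}\| + \|bb_{t\ell}\,bb_{tr}\| + \|bb_{tr}\,t\|$.

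Finally, the proof of \Cref{lemma:lengthOfST} already establishes that $\|s\,bb_{t\ell}\| \leq \tfrac{1}{2}$ and $\|bb_{tr}\,t\| \leq \tfrac{1}{2}$ (because each bounding-box corner lies in a triangle incident to its representative, and all triangle edges have length at most $1$). Combined with $\|bb_{t\ell}\,bb_{tr}\| > 1$ from the definition of a hole, we get $D_{\mathcal{O}}(s,t) \leq \|bb_{t\ell}\,bb_{tr}\| + 1 \leq 2\|bb_{t\ell}\,bb_{tr}\|$, as desired. The only real subtlety is the correspondence between $\overline{a_ib_i}$ and the shared edges of the triangle chain, together with the fact that $\mathrm{vAxis}(s,t)$ never leaves this chain of triangles (which follows from the assumption that it does not cross any hole of $G_{2Del}$).
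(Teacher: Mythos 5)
Your proof is correct and follows essentially the same route as the paper: exhibit a candidate polyline lying along $\mathrm{vAxis}(s,t)$, bound its length by $\eucl{s\,bb_{t\ell}} + \eucl{bb_{t\ell}bb_{tr}} + \eucl{bb_{tr}t}$, and then invoke the bounds from \Cref{lemma:lengthOfST}. In fact you are slightly more careful than the paper, which simply asserts that $s\,bb_{t\ell}\,bb_{tr}\,t$ "is a candidate for the shortest polyline," whereas you explicitly construct the admissible points $p_i$ as the crossings of $\mathrm{vAxis}(s,t)$ with the shared edges $\overline{a_ib_i}$ and verify they appear in order.
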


\begin{proof}
	In \cite{xiaDelaunaySpanner}, the author argues that any sequence of disks obtained by the circumcircles of triangles along a line segment in a Delaunay Graph is a chain of disks.
	Due to Lemma \ref{lemma:visibilityPath} we know that $\mathrm{vAxis}(s,t) = \overline{bb_{t\ell}bb_{tr}}$ does not intersect any hole.
	Consider the chain of disks $\mathcal{O}$ with terminals $s$ and $t$ obtained by the circumcircles of all triangles intersected by $\mathrm{vAxis}(s,t)$.
	See Figure \ref{fig:diskChainst} for a visualization.
	The main observation for our proof is that the polyline $sbb_{t\ell}bb_{tr}t$ fulfills the requirements of \Cref{definition:shortestPolyline} and is a candidate for the shortest polyline between $s$ and $t$.
	Thus, $sbb_{t\ell}bb_{tr}t$ is an upper bound for $D_{\mathcal{O}}(s,t)$.
	Hence:
	\begin{align*}
	&\phantom{space}&D_{\mathcal{O}}(s,t) &\overset{\phantom{Lemma 1}}{\leq} \eucl{sbb_{t_\ell}} + \eucl{bb_{t\ell}bb_{tr}} + \eucl{bb_{tr}t} \hspace*{2em}\\
	&&&\overset{\Cref{lemma:lengthOfST}}{\leq}\hspace*{0.5em} 2 \cdot \eucl{bb_{t\ell}bb_{tr}}
	\end{align*}

\end{proof}
\noindent	
The combination of \Cref{lemma:lengthOfST} and \Cref{lemma:shortestPolylineST} helps us to prove \Cref{theorem:virtualAxisPaths}. 
Xia states that the shortest connection $P_{\mathcal{O}}(s,t)$ between two terminal nodes $s$ and $t$ along a chain of disks is at most $1.998 \cdot D_{\mathcal{O}}(s,t)$ \cite{xiaDelaunaySpanner}.
Thus, we obtain that there exists a path between $s$ and $t$ with length at most:
\begin{align*}
&\phantom{space}&P_{\mathcal{O}}(s,t) &\overset{\phantom{\Cref{lemma:shortestPolylineST}}}{\leq} 1.998 \cdot D_{\mathcal{O}}(s,t) \hspace*{2em} \\
&&&\overset{\Cref{lemma:shortestPolylineST}}{\leq}  1.998 \cdot 2 \cdot  \eucl{bb_{t\ell}bb_{tr}} = 3.996 \cdot  \eucl{bb_{t\ell}bb_{tr}}.
\end{align*}
\noindent
\skipSpace
So far, we concentrated on proving the existence of such a path.
Nevertheless, we are also able to find a $c$-competitive path via the MixedChordArc-algorithm.
To do so, we slightly modify the algorithm such that we do not use the direct line segment between two representatives as referencing segment but the virtual axis connecting the real bounding box vertices.
The analysis of MixedChordArc \cite{DBLP:conf/esa/BonichonBCDHS18} proves that the path found along the virtual axis has length at most $\mixedChord$ times the length of the virtual axis.
The entire path has length at most $5.56$ times the length of the virtual axis since the connection between $s$ and the first node along the path and $t$ and the last node on the path has length at most $2$ times the length of the virtual axis. 
This leads to the following corollary.
\skipSpace
\begin{corollary} \label{corollary:onlineRouting}
	Let $G_{2Del} = (V,E)$ be a \twoDel{} with $s,t \in V$.
	For any $\mathrm{vAxis}(s,t)$ with endpoints $bb_{t\ell}$ and $bb_{tr}$ that does not intersect any hole of $G_{2Del}$, there exists an online routing strategy that finds a path $p$ between $s$ and $t$ in $G_{2Del}$ with length at most:
	\begin{center}
	$\eucl{p} \leq 5.56 \cdot \eucl{bb_{t\ell}bb_{tr}}.$
	\end{center}
\end{corollary}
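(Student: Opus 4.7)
The plan is to upgrade the existence statement of \Cref{theorem:virtualAxisPaths} into an online algorithm by instantiating MixedChordArc~\cite{DBLP:conf/esa/BonichonBCDHS18} on the chain of Delaunay triangles that $\mathrm{vAxis}(s,t)$ crosses. The existence proof behind \Cref{theorem:virtualAxisPaths} goes through Xia's $1.998$-spanner construction applied to this chain of disks. MixedChordArc is the online counterpart of that construction: its analysis bounds the length of the returned path by $\mixedChord$ times the length of the reference segment used at each hop. I would therefore run MixedChordArc with $\mathrm{vAxis}(s,t)$ in place of the usual reference segment $\overline{st}$, carrying the two endpoints $bb_{t\ell}$ and $bb_{tr}$ in the packet header so that every intermediate node can locally evaluate the next-hop rule from its neighborhood in the \twoDel{} alone.

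To see that this swap is harmless, I would reuse the chain-of-disks argument from the proof of \Cref{lemma:shortestPolylineST}: because $\mathrm{vAxis}(s,t)$ avoids every hole (the hypothesis of the corollary), the Delaunay circumcircles of the triangles it crosses form a valid chain of disks. The analysis of \cite{DBLP:conf/esa/BonichonBCDHS18} then applies verbatim to this chain and yields a path $p'$ whose endpoints $u$ and $v$ are vertices of the first and last triangle of the chain and whose length satisfies $\eucl{p'}\leq \mixedChord \cdot \eucl{bb_{t\ell}bb_{tr}}$.

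It remains to glue $s$ and $t$ to that path. Since $bb_{t\ell}$ lies in the Voronoi cell of $s$, the node $s$ is a vertex of the first triangle of the chain, so either $s=u$ or $s$ is connected to $u$ by a triangle edge, which has length at most $1$ in the \twoDel{}; the symmetric reasoning gives $\eucl{vt}\leq 1$. Using $\eucl{bb_{t\ell}bb_{tr}}>1$ (which follows from the definition of a hole, already invoked in the proof of \Cref{lemma:lengthOfST}), I obtain $\eucl{su}+\eucl{vt}\leq 2\leq 2\cdot\eucl{bb_{t\ell}bb_{tr}}$, and adding the two contributions gives
\[ \eucl{p} \leq \mixedChord \cdot \eucl{bb_{t\ell}bb_{tr}} + 2 \cdot \eucl{bb_{t\ell}bb_{tr}} = 5.56 \cdot \eucl{bb_{t\ell}bb_{tr}}. \]
The step I expect to require the most care is the formal verification that the MixedChordArc analysis is truly reference-segment agnostic, i.e.\ that every geometric predicate the algorithm evaluates against $\overline{st}$ in \cite{DBLP:conf/esa/BonichonBCDHS18} can be replayed with $\mathrm{vAxis}(s,t)$ while keeping $s$ and $t$ as genuine terminals of the chain of disks so that the protocol really terminates at $t$; the two numerical bounds above are then routine.
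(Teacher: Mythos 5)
Your proposal follows essentially the same route as the paper: run MixedChordArc with $\mathrm{vAxis}(s,t)$ substituted for the usual reference segment to get a $\mixedChord\cdot\eucl{bb_{t\ell}bb_{tr}}$ bound on the main path, then charge the connections of $s$ and $t$ to the ends of that path with an additional $2\cdot\eucl{bb_{t\ell}bb_{tr}}$, summing to $5.56$. You are in fact somewhat more explicit than the paper about where the additive $2$ comes from (triangle edges of length at most $1$ plus $\eucl{bb_{t\ell}bb_{tr}}>1$) and about the need to check that MixedChordArc's analysis tolerates the swapped reference segment, which the paper simply asserts.
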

\skipSpace
The results of this section enable us to reduce the problem of finding $c$-competitive paths in the \twoDel{} to finding $c$-competitive paths in Visibility Graphs as it is done in the next section.
\subsection{Competitive Paths via non-intersecting Bounding Boxes} \label{section:shortestBBPaths}
Based on our results of \Cref{section:embedding}, we reduce the problem of finding $c$-competitive paths via bounding boxes to finding $c$-competitive paths in Visibility Graphs.
Therefore, we introduce a special class of Visibility Graphs, namely \textit{Bounding Box Visibility Graphs}. 
In Bounding Box Visibility Graphs, each obstacle (holes in our case) is represented by its axis-parallel bounding box. 
Consequently, $V$ consists of the nodes of the axis-parallel bounding box of each obstacle.
Moreover, $E$ consists of the edges of each bounding box as well as of edges between visible nodes of different bounding boxes. 
In this setting, we call two nodes to be visible from each other in case their direct line segment does not intersect any bounding box. \\
Let $O$ be a set of polygonal obstacles, and $s,t \in \mathbb{R}^2$ a source- and a target-location. 
Further let $bb_{t\ell}(p), bb_{tr}(p), bb_{b\ell}(p)$ and $bb_{br}(p)$ be the nodes of an axis-parallel bounding box representing a polygon $p \in O$. 
A Bounding Box Visibility Graph is defined as follows:
\begin{definition}[Bounding Box Visibility Graph]
	A geometric graph $G = (V_{BB},E)$ is called \emph{Bounding Box Visibility Graph} if $bb_{t\ell}(p), bb_{tr}(p), bb_{b\ell}(p)$ and $bb_{br}(p) \in V_{BB}, \forall p \in O$. 
	Additionally, $\set{bb_{t\ell}(p),bb_{tr}(p)},\set{bb_{t\ell}(p),bb_{b\ell}(p)}, \set{bb_{tr}(p),bb_{br}(p)}$ and $\set{bb_{b\ell}(p),bb_{br}(p)} \in E, \forall p \in O$.
	For two nodes of different bounding boxes $u,v \in V_{BB}$, the edge $\{u,v\} \in E$ if $\overline{uv}$ does not intersect the bounding box of any obstacle $p \in O$.
\end{definition}

\begin{figure}[h]
	% minipage mit (Blind-)Text
	\centering
	\includegraphics[width=0.9\textwidth]{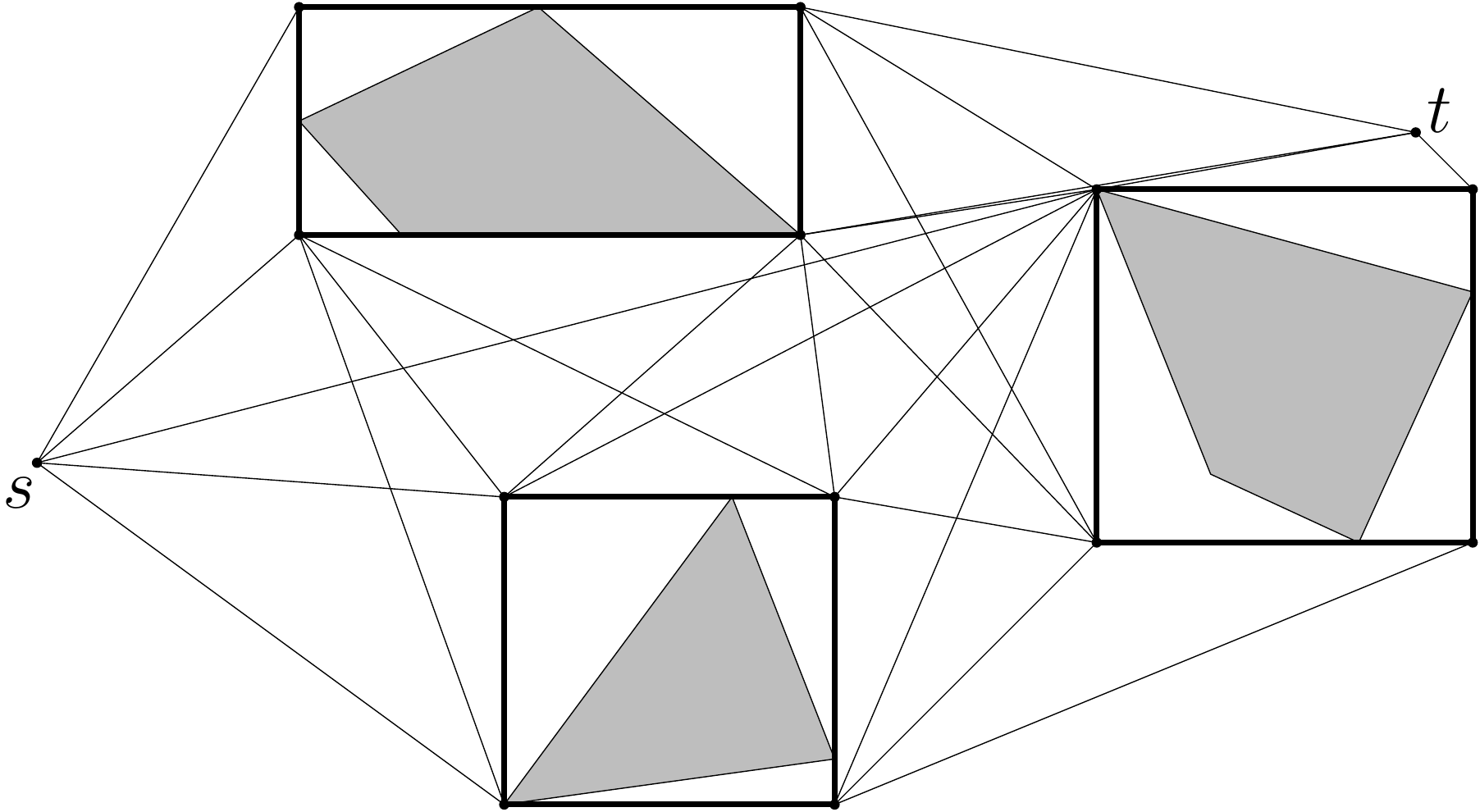}
	\caption[A Bounding Box Visibility Graph for three obstacles.]{A Bounding Box Visibility Graph for three obstacles.}
	\label{fig:visibilityBB} 
\end{figure}

\newpage
\noindent
Figure \ref{fig:visibilityBB} provides a visualization of a Bounding Box Visibility Graph.
To provide an intuition about the proof ideas, we initially assume that the considered Bounding Box Visibility Graph contains only a single bounding box. 
Additionally, we assume that the starting-location $s$ and the target-location $t$ do not lie inside of the bounding box.
The node-set $V$ of the corresponding Bounding Box Visibility Graph contains $s,t$, and the nodes of the bounding box.
More formally: $ V = \{ s,t, bb_{t\ell},bb_{tr},bb_{b\ell},bb_{br} \}$. 
\noindent
The rest of the section deals with proving Theorem~\ref{theorem:singleBB} which states that there always exists 
a path of length at most $\sqrt{2} \cdot d_{\mathrm{UDG}}(s,t)$ between $s$ and $t$ in the described setting. \\

\begin{theorem} \label{theorem:singleBB}
	Let $G = (V,E)$ be a Bounding Box Visibility Graph containing a single bounding box $b$ with a starting-location $s \notin b$ and a target-location $t \notin b$.
	Then, there exists a path between $s$ and $t$ in $G$ with length at most $\sqrt{2}\cdot d_{\mathrm{UDG}}(s,t)$.
\end{theorem}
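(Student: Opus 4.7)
The plan is to case-split on whether the line segment $\overline{st}$ intersects the bounding box $b$. If $\overline{st}$ does not intersect $b$, then by the definition of the Bounding Box Visibility Graph the edge $\{s,t\}$ lies in $E$, so the single-edge path has length $\|st\|$, and since $d_{\mathrm{UDG}}(s,t) \geq \|st\|$ this trivially satisfies the $\sqrt{2}$ bound. All the real work is in the remaining case.

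When $\overline{st}$ crosses $b$, every $s$-$t$ path in $G$ must route around $b$ via its four corners. I would consider the two canonical candidate paths: the upper path of length $L_{\mathrm{top}}$ through $bb_{t\ell}$ and/or $bb_{tr}$, and the symmetric lower path of length $L_{\mathrm{bot}}$ through $bb_{b\ell}$ and/or $bb_{br}$, in each case skipping a corner if the corresponding sub-segment from $s$ or $t$ already avoids $b$. The goal is to prove the sharper inequality $\min(L_{\mathrm{top}}, L_{\mathrm{bot}}) \leq \sqrt{2} \cdot d_{\mathrm{UDG}}(s,t)$. The crucial lower bound on $d_{\mathrm{UDG}}(s,t)$ comes from the fact that the hole abstracted by $b$ is a face of $\mathrm{LDel}^2$ whose extreme points lie on all four sides of $b$; any path from $s$ to $t$ avoiding the hole must detour around one of these extreme points, and in particular an over-the-top detour has length at least $\|s - P_T\| + \|P_T - t\|$, where $P_T$ is the top extreme point on the top edge of $b$. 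I would then bound $L_{\mathrm{top}}$ against $\|s - P_T\| + \|P_T - t\|$ (and symmetrically $L_{\mathrm{bot}}$ against the bottom extreme point) by elementary planar trigonometry, taking the min of the two sides. The tight configuration appears when $s$ and $t$ sit symmetrically on the horizontal midline of $b$ just outside its left and right sides and $P_T$ is the midpoint of the top edge: there $L_{\mathrm{top}} \to 2$ while the detour length approaches $\sqrt{2}$, giving the exact ratio $\sqrt{2}$.

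The main obstacle is the geometric case analysis for Case~2. One must distinguish (i) which pair of sides of $b$ the segment $\overline{st}$ crosses (two opposite sides versus two adjacent sides), (ii) whether $s$ or $t$ already has unobstructed line-of-sight to one of the two relevant corners so that the corner can be skipped in the upper/lower path, and (iii) the possible positions of the hole's extreme points on the sides of $b$. Showing the $\sqrt{2}$ factor holds uniformly — in particular that whenever the bounding-box path shortens by skipping a corner, the competing lower bound on $d_{\mathrm{UDG}}(s,t)$ does not weaken faster — is the technically delicate step. The tight-case computation above suggests that a reflection argument (reflecting $s$ across the relevant edge of $b$) should collapse the required inequality into a clean planar distance comparison, making the $\sqrt{2}$ factor natural once the right sub-case bookkeeping is in place.
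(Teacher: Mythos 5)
Your approach is essentially the paper's: split on whether the box obstructs the segment $\overline{st}$, lower-bound $d_{\mathrm{UDG}}(s,t)$ by the detour through the hole's topmost (or bottommost) extreme point, and extract the $\sqrt{2}$ from the legs-versus-hypotenuse inequality for right triangles. The ``technically delicate step'' you defer is resolved in the paper simply by observing that the axis-parallel path $s \to (x(s),y(p_{y_{max}})) \to (x(t),y(p_{y_{max}})) \to t$ is exactly the concatenation of the legs of the two right triangles with hypotenuses $\overline{sp_{y_{max}}}$ and $\overline{p_{y_{max}}t}$, so the bound follows from two applications of \Cref{lemma:rightTriangle} (plus a direct comparison against $\|st\|$ when the box fits inside the $s$--$t$ rectangle), with no further trigonometry or reflection needed.
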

\skipSpace
For the proof of Theorem \ref{theorem:singleBB} we need a useful property of right triangles. 
Whenever we consider a right triangle with legs $a$ and $b$ and hypotenuse $c$, we can prove that walking along $a$ and $b$ is not longer than a constant times walking along $c$. \Cref{lemma:rightTriangle} deals with this property.
\skipSpace
\begin{lem} \label{lemma:rightTriangle}
	Let $a,b$ be the legs of a right triangle with hypotenuse $c$.  
	Then, $a+b \leq \sqrt{2}\cdot c$.
\end{lem}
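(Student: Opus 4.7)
The plan is to reduce the inequality to a standard instance of the Cauchy–Schwarz (equivalently, QM-AM) inequality. First, I would invoke the Pythagorean theorem, which gives $c = \sqrt{a^2 + b^2}$, so that the claim $a + b \leq \sqrt{2}\,c$ is equivalent to
\[
a + b \;\leq\; \sqrt{2(a^2 + b^2)}.
\]

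Next, since both sides are nonnegative, I would square them and show the equivalent statement $(a+b)^2 \leq 2(a^2+b^2)$. Expanding the left-hand side yields $a^2 + 2ab + b^2$, so the desired inequality reduces to $2ab \leq a^2 + b^2$, i.e.\ $0 \leq (a-b)^2$, which is trivially true.

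Alternatively, one could quote the Cauchy–Schwarz inequality directly in the form
\[
a + b \;=\; 1\cdot a + 1 \cdot b \;\leq\; \sqrt{1^2 + 1^2}\cdot \sqrt{a^2 + b^2} \;=\; \sqrt{2}\cdot c,
\]
which avoids the squaring step entirely. There is no real obstacle here; the only thing to remark on is the equality case, which occurs exactly when $a = b$, i.e.\ for an isoceles right triangle. This tightness observation is worth recording since the constant $\sqrt{2}$ will propagate into the competitive ratios later in the paper.
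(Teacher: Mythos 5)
Your proof is correct and follows essentially the same route as the paper: apply the Pythagorean theorem, square the inequality, and reduce it to $2ab \leq a^2 + b^2$, i.e.\ $0 \leq (a-b)^2$. The Cauchy--Schwarz one-liner and the remark on the equality case $a=b$ are nice additions but do not change the substance of the argument.
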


\begin{proof}
	The Pythagorean Theorem states $c^2 = a^2 + b^2 \iff c = \sqrt{a^2+b^2}$. 
	We compute the ratio $x$ of $a+b$ and $c$ and bound it afterward. 
	
	\begin{center}
		$\frac{a+b}{c} = \frac{a+b}{\sqrt{a^2+b^2}} \leq x \iff \frac{(a+b)^2}{a^2+b^2} \leq x^2 $
	\end{center}
	
	\noindent
	The equivalence holds as $a,b$, and $c$ are greater than zero.
	After applying the Binomial Theorem, we obtain:
	\begin{center}
	$\frac{(a+b)^2}{a^2+b^2} = \frac{a^2 + 2ab + b^2}{a^2 + b^2} = 
	\frac{a^2 + b^2}{a^2+b^2} + \frac{2ab}{a^2+b^2} = 1 + \frac{2ab}{a^2+b^2}$
	\end{center}
	\noindent
	We analyze the properties of the latter addend and prove:  
	\begin{align*}
	&\phantom{space}&\frac{2ab}{a^2+b^2} \leq 1 &\iff 2ab \leq a^2 + b^2 \iff 0 \leq a^2 - 2ab + b^2 \\
	&&&\iff 0 \leq (a-b)^2
	\end{align*}
	Since quadratic numbers are always positive, our claim holds and we can finally plug all results together and finish the proof.
	\begin{center}
	$\frac{(a+b)^2}{a^2+b^2} = 1 + \frac{2ab}{a^2+b^2} \leq 1 + 1 = 2 = x^2 \iff \sqrt{2} = x$
	\end{center}
\end{proof}

\noindent
With the knowledge of \Cref{lemma:rightTriangle} we are able to prove Theorem \ref{theorem:singleBB}.
\skipSpace
\begin{proof}[Proof of Theorem \ref{theorem:singleBB}]
	Without loss of generality, we assume $x(s) < x(t)$.
	We distinguish two cases concerning the size of the bounding box.
	In Case $1$, we consider bounding boxes that fit completely into the bounding box around $s$ and $t$.
	Case $2$ deals with bounding boxes that exceed the bounding box around $s$ and $t$.
	We consider the surrounding bounding box with nodes $s, t, u = (x(s),y(t))$ and $v = (x(t),y(s))$.
	See Figure \ref{fig:singleBBSmallSurrounded} for a visualization. \\
	\begin{figure}[h]
		% minipage mit (Blind-)Text
		\centering
		\includegraphics[height = 0.35 \textheight]{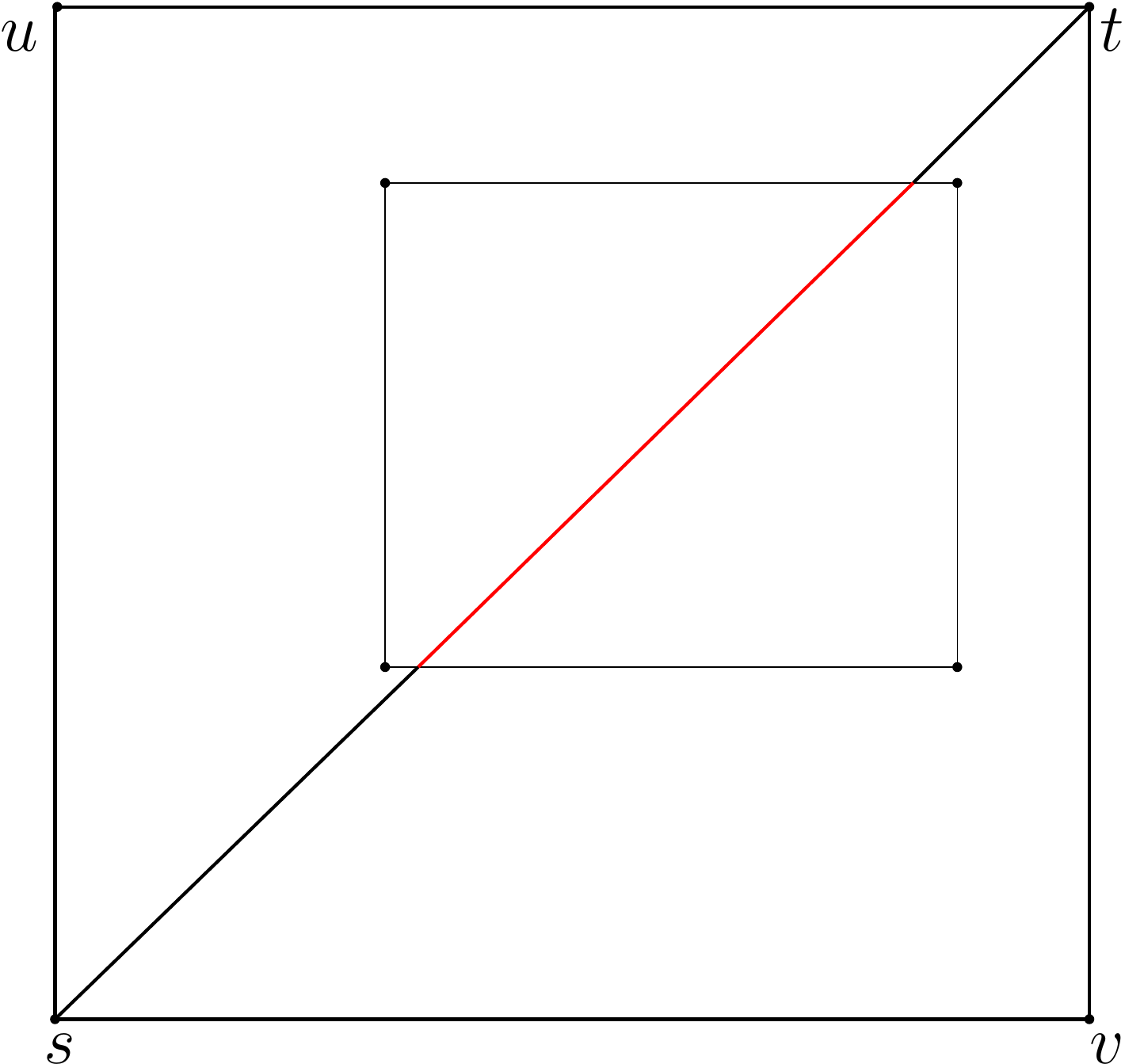}
		\caption{The surrounding bounding box around $s$ and $t$.}
		\label{fig:singleBBSmallSurrounded} 
	\end{figure}
	\skipSpace
	\textbf{Case 1:} The bounding box  of the obstacle is completely cointained in the bounding box with nodes $s,t,u$ and $v$. \\
	\skipSpace
	The worst case is that the bounding box of the hole and the bounding box with nodes $s,t,u$ and $v$ coincide.
	In that case, we see that the shortest connection from $s$ to $t$ is the combination of the line segments $\overline{su\vphantom{t}}$ and $\overline{ut}$. The combination
	of $\overline{sv\vphantom{t}}$ and $\overline{vt}$ has the same length.
	By applying \Cref{lemma:rightTriangle}, we obtain:
	\begin{center}
	$\eucl{su} + \eucl{ut} \leq \sqrt{2} \cdot \eucl{st} \leq \sqrt{2} \cdot d_{\mathrm{UDG}}(s,t).$
	\end{center}
	\skipSpace
	The last inequality holds because the Euclidean distance between $s$ and $t$ is the shortest possible length of a path between $s$ and $t$. 
	The distance of $s$ and $t$ in the Unit Disk Graph has to be larger or equal.
	If the two mentioned bounding boxes do not coincide, the shortest path among nodes of bounding boxes is still smaller than a path along the comprising bounding box.
	Thus, we obtain $\sqrt{2} \cdot \eucl{st}$ as an upper bound for the length of the shortest path among nodes of a bounding box. \\
	\skipSpace
	\textbf{Case 2:} The bounding box of the obstacle exceeds the bounding box with nodes $s,t,u$ and~$v$. \\
	\skipSpace
	In this case, it is enough to consider the cases in which $\overline{ut}$ and $\overline{sv\vphantom{t}}$ or $\overline{us\vphantom{t}}$ and $\overline{tv}$ are intersected by $\overline{st}$.
	Otherwise, we can apply the same argumentation as in Case $1$ since there is a path which is completely contained in the box with nodes $s,t,u$ and $v$.
	Without loss of generality, we assume that both $\overline{ut}$ and $\overline{sv\vphantom{t}}$ are intersected. We can use the same proof for the other case by turning the view about 90 degrees.
	Let $p_{y_{max}}$ be the highest point of the hole polygon and $p_{y_{min}}$ the lowest point respectively. 
	The shortest geometric connection between $s$ and $t$ has to pass either $p_{y_{max}}$ or $p_{y_{min}}$.
	Without loss of generality, we assume that the shortest geometric connection passes $p_{y_{max}}$. 
	The shortest possible (not necessarily realistic) connection would be $\overline{sp_{y_{max}}\vphantom{t}}$ and $\overline{p_{y_{max}}t}$.
	We can upper bound the length of this connection by giving the legs of two right triangles as maximal path length. 
	Consider the points $s_{y_{max}} = (x(s),y(p_{y_{max}}))$ and $t_{y_{max}} = (x(t),y(p_{y_{max}}))$.
	The longest possible path over bounding box points would be $\overline{ss_{y_{max}}\vphantom{t}}$,$\overline{s_{y_{max}}t_{y_{max}}}$ and $\overline{t_{y_{max}}t}$. 
	Since this path uses the legs of right triangles with hypotenuses $\overline{sp_{y_{max}}\vphantom{t}}$ and $\overline{p_{y_{max}}t}$, we can apply \Cref{lemma:rightTriangle} and obtain that the maximal length is at most $\sqrt{2} \cdot (\eucl{sp_{y_{max}}} + \eucl{p_{y_{max}}t}) \leq \sqrt{2} \cdot d_{\mathrm{UDG}}(s,t)$.
	See Figure \ref{fig:singleBBOpp} for a visualization of both right triangles.
	The last inequality holds since the distance between $s$ and $t$ in the Unit Disk Graph cannot be smaller than the shortest possible geometric connection between $s$ and $t$.
	
	\begin{figure}[htbp]
		% minipage mit (Blind-)Text
		\centering
		\includegraphics[height=0.32\textheight]{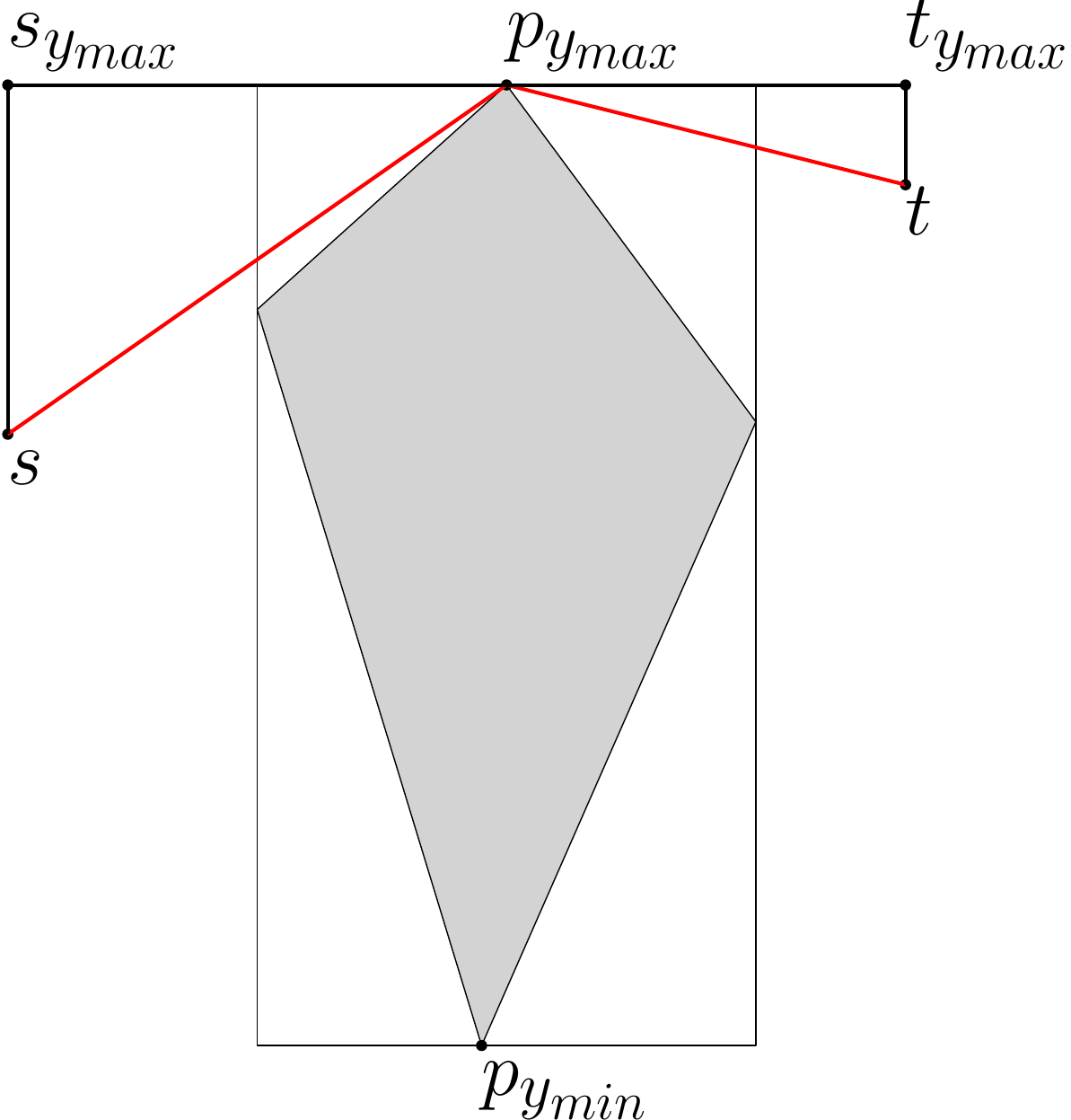}
		\caption{A visualization of Case $2$.}
		\label{fig:singleBBOpp} 
		% Auffüllen des Zwischenraums
		\hfill
		% minipage mit Grafik
		
		% \caption{noch eine Caption}
	\end{figure}
	
\end{proof}

\noindent
So far, we have analyzed the maximal length of $c$-competitive paths we can achieve for a single bounding box in Bounding Box Visibility Graphs.
It remains to combine these insights with our knowledge about virtual Axes to bound the length of bounding box-paths in \twoDel{}s. \\

\begin{corollary} \label{corollary:singleBB}
	Consider a \twoDel{} $G~=~(V,E)$ which contains a single hole with bounding box $b$. 
	Between any pair of nodes $s$ and $t$ with $s,t \in V$ but $s,t \notin b$, there exists a path $p$ from $s$ to $t$ that contains representatives of bounding boxes such that: 
	\begin{center}
	 $\eucl{p} \leq 5.66 \cdot d_{\mathrm{UDG}}(s,t).$
	\end{center}
	\noindent
	Additionally, there exists an online routing strategy that finds a path $p_{on}$ from $s$ to $t$ such that:
	\begin{center}
	$ \eucl{p_{on}} \leq 7.87 \cdot d_{\mathrm{UDG}}(s,t).$
	\end{center}
\end{corollary}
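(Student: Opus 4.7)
The plan is to compose Theorem~\ref{theorem:singleBB}, which bounds shortest paths in the Bounding Box Visibility Graph, with Theorem~\ref{theorem:virtualAxisPaths} and Corollary~\ref{corollary:onlineRouting}, which convert a single virtual-axis segment into a path in the \twoDel{}. Concretely, I would first invoke Theorem~\ref{theorem:singleBB} on the Bounding Box Visibility Graph with node set $\{s, t, bb_{t\ell}(b), bb_{tr}(b), bb_{b\ell}(b), bb_{br}(b)\}$, obtaining a polyline $\pi = (p_0 = s, p_1, \dots, p_{k-1}, p_k = t)$ of total Euclidean length at most $\sqrt{2}\cdot d_{\mathrm{UDG}}(s,t)$ whose interior vertices are corners of the real bounding box $b$. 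Because every segment of $\pi$ is a visibility-graph edge, it avoids $b$, and since $b$ encloses the hole, each segment also avoids the only hole of $G$.

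Next I would replace each segment $\overline{p_i p_{i+1}}$ by a sub-path in $G$ between the corresponding representatives. For segments whose endpoints are both corners of $b$, Theorem~\ref{theorem:virtualAxisPaths} directly supplies a sub-path of length at most $3.996 \cdot \|p_i p_{i+1}\|$. For the first segment $\overline{s p_1}$ and the last segment $\overline{p_{k-1} t}$, where one endpoint is a real node of $V$ rather than a bounding box corner, I would view $s$ and $t$ as degenerate representatives of themselves: since $s, t \in V$ each trivially lies in its own Voronoi cell, the slack bound $\|s - \mathrm{rep}(s)\| = 0 \le \frac{1}{2}$ used inside Lemma~\ref{lemma:lengthOfST} holds automatically, and the same $3.996$ blow-up applies on these terminal segments as well. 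Concatenating all sub-paths yields
\[
\|p\| \;\le\; 3.996 \sum_{i=0}^{k-1}\|p_i p_{i+1}\| \;\le\; 3.996\cdot\sqrt{2}\cdot d_{\mathrm{UDG}}(s,t) \;\le\; 5.66\cdot d_{\mathrm{UDG}}(s,t),
\]
which is the existential bound. The online bound follows by exactly the same argument with Corollary~\ref{corollary:onlineRouting} in place of Theorem~\ref{theorem:virtualAxisPaths}, giving $5.56\cdot\sqrt{2} \le 7.87$ as the overall competitive ratio, and the underlying routing algorithm is just ``follow $\pi$ and run MixedChordArc along each virtual axis.''

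The one step I expect to require a bit of care is justifying Theorem~\ref{theorem:virtualAxisPaths} on the two terminal segments of $\pi$, since that theorem is stated only for virtual axes whose endpoints are real bounding box vertices with chosen representatives. The fix described above is essentially bookkeeping: Lemma~\ref{lemma:lengthOfST} and the chain-of-disks construction in Lemma~\ref{lemma:shortestPolylineST} only invoke the representative property through the statement ``the axis endpoint lies in the Voronoi cell of its representative,'' which is vacuous when the representative coincides with that endpoint. Once this is settled, the rest is a straightforward linear combination of the two competitive ratios.
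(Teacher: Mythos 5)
Your proposal is correct and follows essentially the same route as the paper: invoke Theorem~\ref{theorem:singleBB} to get a $\sqrt{2}$-competitive polyline over the bounding box corners, then replace each segment by a virtual-axis sub-path via Theorem~\ref{theorem:virtualAxisPaths} (resp.\ Corollary~\ref{corollary:onlineRouting} for the online bound) and multiply the two factors. Your explicit treatment of the terminal segments $\overline{sp_1}$ and $\overline{p_{k-1}t}$, where $s$ and $t$ act as their own representatives with zero slack, is a detail the paper's proof glosses over, but it does not change the argument.
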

\noindent
\begin{proof}
	If we reduce the \twoDel{} with $b$ to a Bounding Box Visibility Graph only containing $s,t$ and all nodes of $b$, we know that it contains a path between $s$ and $t$ with length at most $\sqrt{2} \cdot d_{\mathrm{UDG}}(s,t)$.
	If we use the nodes in $G$ that are closest to the nodes of $b$ as representatives for $b$, we can apply virtual Axis routing and obtain that between two adjacent representatives of $b$ a path of length at most $3.996$ times their Euclidean distance exists (\Cref{theorem:virtualAxisPaths}).
	Finally, we can combine both insights and conclude that in \twoDel{}s containing a single hole with bounding box $b$ there is a path between any pair of nodes $s,t \notin b$ of length at most $\sqrt{2} \cdot 3.996 \cdot d_{\mathrm{UDG}}(s,t) \leq 5.66 \cdot d_{\mathrm{UDG}}(s,t)$ which proves the Corollary.
	For the online routing strategy, we can conclude based on \Cref{corollary:onlineRouting} that we can find paths of length $\sqrt{2} \cdot 5.56 \cdot d_{\mathrm{UDG}}(s,t) \leq 7.87 \cdot d_{\mathrm{UDG}}(s,t)$.
	
\end{proof}

\skipSpace
We continue with considering multiple non-intersecting bounding boxes.
\begin{theorem} \label{theorem:nonIntersectingBB}
	Let $G_{BB} = (V_{BB},E)$ be a Bounding Box Visibility Graph that contains multiple non-intersecting bounding boxes and a source- and a target-location $s$ and $t$. 
	There exists a path $p^{BB}_{st}$ between $s$ and $t$ in $G_{BB}$ with:
	\begin{center}
	$\eucl{p^{BB}_{st}} \leq \sqrt{2} \cdot d_{\mathrm{UDG}}(s,t).$
	\end{center}
\end{theorem}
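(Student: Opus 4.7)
The plan is to construct a monotone (staircase) path from $s$ to $t$ in $G_{BB}$ and bound its length using an $L_1$-style argument together with \Cref{lemma:rightTriangle}. Without loss of generality I would assume $x(s) \leq x(t)$ and $y(s) \leq y(t)$ (otherwise reflect the plane), and write $\Delta x = x(t) - x(s)$, $\Delta y = y(t) - y(s)$. The goal is a path $p^{BB}_{st}$ in which every edge has both non-negative horizontal component and non-negative vertical component.

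First I would describe a greedy construction: start at $v_0 = s$. At each step, if $t$ is visible from the current vertex $v_i$, set $v_{i+1} = t$ and stop. Otherwise, let $b$ be the first bounding box along $\overline{v_i t}$, and route around it via the corner that preserves monotonicity — namely, if $b$ lies above $\overline{v_i t}$ take $bb_{br}(b)$ (moving right-then-up), and if $b$ lies below take $bb_{t\ell}(b)$ (moving up-then-right). Repeat from the new vertex. To show this construction produces edges actually in $G_{BB}$, I would use the non-intersection of bounding boxes: if another bounding box $b'$ occluded the chosen corner, then $b'$ would have to be located strictly between $v_i$ and $b$ in direction of $t$, and I would recurse on $b'$ first. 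Termination follows because each detour corner strictly advances the path toward $t$ in both coordinates, and there are finitely many bounding boxes.

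Next comes the length accounting. Each edge $e_i$ of $p^{BB}_{st}$ has non-negative horizontal displacement $dx_i \geq 0$ and vertical displacement $dy_i \geq 0$, with $\eucl{e_i} = \sqrt{dx_i^2 + dy_i^2}$. By monotonicity, $\sum_i dx_i = \Delta x$ and $\sum_i dy_i = \Delta y$. Using $\sqrt{dx_i^2 + dy_i^2} \leq dx_i + dy_i$ edge by edge, I obtain
\begin{equation*}
\eucl{p^{BB}_{st}} \;=\; \sum_i \sqrt{dx_i^2 + dy_i^2} \;\leq\; \sum_i (dx_i + dy_i) \;=\; \Delta x + \Delta y.
\end{equation*}
Applying \Cref{lemma:rightTriangle} with $a = \Delta x$, $b = \Delta y$ and $c = \eucl{st} = \sqrt{\Delta x^2 + \Delta y^2}$ gives $\Delta x + \Delta y \leq \sqrt{2}\cdot \eucl{st}$, and since the Euclidean distance lower-bounds any path in UDG($V$) we conclude $\eucl{p^{BB}_{st}} \leq \sqrt{2}\cdot d_{\mathrm{UDG}}(s,t)$.

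I expect the main obstacle to be the correctness of the monotone construction in Step 1 — in particular, ruling out situations where several non-intersecting bounding boxes cluster tightly near the line $\overline{st}$ so that the naturally chosen detour corner is itself hidden behind another box. The critical leverage is non-intersection of the bounding boxes: it forces any occluder $b'$ to sit strictly in the "wedge" between $v_i$ and the previously chosen corner, which allows a clean recursive resolution while preserving monotonicity and strict progress toward $t$. Once this combinatorial step is established, the length bound collapses immediately to the two-line computation above.
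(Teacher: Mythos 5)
Your construction rests on the claim that there is always a path from $s$ to $t$ in $G_{BB}$ that is simultaneously $x$- and $y$-monotone, and the resulting bound $\eucl{p^{BB}_{st}} \leq \Delta x + \Delta y \leq \sqrt{2}\cdot\eucl{st}$ compares the path against the straight segment $\overline{st}$. Both of these fail. Take $s=(0,0)$, $t=(10,1)$ and a single obstacle whose bounding box spans $x\in[2,8]$ and $y\in[-100,100]$: every $s$--$t$ path in $G_{BB}$ must climb to $y=100$ or descend to $y=-100$ and come back, so no $x$- and $y$-monotone path exists, and the shortest path has length about $210$, far exceeding $\sqrt{2}\cdot\eucl{st}\approx 14.2$. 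Your greedy step implicitly assumes the blocking box lies entirely above or entirely below $\overline{v_i t}$ so that one of its corners can be reached monotonically, but a box that straddles the segment (which is the generic situation, since the box must intersect $\overline{v_i t}$ to block it) admits no such corner. The theorem survives only because $d_{\mathrm{UDG}}(s,t)$ already charges for the detour around the obstacle; your argument discards exactly that slack by lower-bounding $d_{\mathrm{UDG}}(s,t)$ with $\eucl{st}$ at the very last step.

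The paper's proof avoids this by comparing $p^{BB}_{st}$ not to $\overline{st}$ but to the shortest \emph{visibility} path $p^{vis}_{st}$, which itself detours around the obstacles. It decomposes $p^{vis}_{st}$ into the line segments between consecutive visited polygons; each such segment is $x$- and $y$-monotone, and the bounding-box path is built (via the GreViRo subroutine and a case analysis on which edges of $bb(p_{i+1})$ the segment crosses) to match the monotonicity of each segment separately, so its total horizontal plus vertical travel is bounded by that of $p^{vis}_{st}$. Applying \Cref{lemma:rightTriangle} segment-wise then gives $\eucl{p^{BB}_{st}} \leq \sqrt{2}\cdot\eucl{p^{vis}_{st}} \leq \sqrt{2}\cdot d_{\mathrm{UDG}}(s,t)$. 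To repair your argument you would need to replace the reference segment $\overline{st}$ by $p^{vis}_{st}$ and carry out the per-segment monotone matching; the global staircase simply does not exist.
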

\skipSpace
To prove \Cref{theorem:nonIntersectingBB}, we define a special class of paths in geometric graphs which helps us to construct paths in Bounding Box Visibility Graphs which are $c$-competitive to the shortest path in usual Visibility Graphs.
Therefore, we compare the covered distance in vertical direction as well as the covered distance in horizontal direction of both paths.
\skipSpace
\begin{definition}[Monotone Paths]
	\skipSpace
	A path $p = (p_1,p_2, \dots , p_k)$ in a geometric graph is called \emph{increasing x-monotone} if $x(p_i) \leq x(p_{i+1})$ for all $i \in \{1,\dots , k-1\}$. 
	Analogously, such a path is called \emph{increasing y-monotone} if $y(p_i) \leq y(p_{i+1})$ for all $i \in \{1,\dots , k-1\}$. 
	Similarly, paths are called \emph{decreasing $x$-/$y$-monotone} if $x(p_i) \geq x(p_{i+1})/ y(p_i) \geq y(p_{i+1})$ for all $i \in \set{1, \dots, k-1}$.
	A path is called \emph{$x$-monotone} if it is either increasing or decreasing $x$-monotone.
	Analogously, a path is called \emph{$y$-monotone} if it is either increasing or decreasing $y$-monotone.
\end{definition}
\noindent
For the proof of Theorem \ref{theorem:nonIntersectingBB}, we compare the shortest path $p^{vis}_{st}$ between a pair of nodes $s$ and $t$ in a Visibility Graph $G_{Vis}$  to a path $p^{BB}_{st}$ between $s$ and $t$ in the corresponding Bounding Box Visibility Graph $G_{BB}$. 
Observe that $p^{vis}_{st}$ walks along a sequence of polygons $(p_1, \dots, p_k)$ from $s$ to $t$.
Whenever $p^{vis}_{st}$ walks from a polygon $p_i$ to a polygon $p_{i+1}$, $p^{vis}_{st}$ is $x$- and $y$-monotone for that part, as $p^{vis}_{st}$ follows a direct line segment between $p_i$ and $p_{i+1}$.
The key idea for our proof is to construct a path $p^{BB}_{st}$ in $G_{BB}$ that has the same monotonicity properties as $p^{vis}_{st}$ for every pair of consecutive visited polygons $p_i$ and $p_{i+1}$ of $p^{vis}_{st}$.
Therefore, we introduce a greedy routing strategy for $G_{BB}$ that constructs paths having the same monotonicity properties as $p^{vis}_{st}$. 
Our greedy strategy is called \textit{Greedy Visibility Routing (GreViRo)} and is defined as follows: \\
Let $p^{vis}_{st}$ be a shortest path between two points $s$ and $t$ in a Visibility Graph $G_{Vis}$ that contains polygons with non-intersecting bounding boxes. 
The sequence of polygons visited by $p^{vis}_{st}$ is denoted as $(p_1, \dots, p_k)$ and the direct line segment walked by $p^{vis}_{st}$ from polygon $p_i$ to $p_{i+1}$ is denoted as $p^{vis}_{p_ip_{i+1}}$.
In addition, the intersection points of $p^{vis}_{p_ip_{i+1}}$ with $bb(p_i)$ and $bb(p_{i+1})$ are defined as $i_{p_i}$ and $i_{p_{i+1}}$ respectively.
Further, let $G_{BB}$ be the corresponding Bounding Box Visibility Graph.
Consider a line segment $p^{vis}_{p_ip_{i+1}}$ of $p^{vis}_{st}$.
GreViRo connects two nodes $v_{bb_i}$ and $v_{bb_{i+1}}$ of $G_{BB}$ provided $v_{bb_i}$ and $i_{p_i}$ as well as $i_{p_{i+1}}$ and $v_{bb_{i+1}}$ are visible from each other and the path $(v_{bb_i}, i_{p_i}, i_{p{_i+1}}, v_{bb_{i+1}})$ has the same monotonicity properties as $p^{vis}_{p_ip_{i+1}}$.

\noindent
GreViRo always chooses the node of a bounding box intersecting the face with nodes $v_{bb_i}, i_{p_i}, i_{p{_i+1}}$ and $v_{bb_{i+1}}$ that does not violate the monotonicity properties of $p^{vis}_{p_ip_{i+1}}$ and minimizes the distance to $p^{vis}_{p_ip_{i+1}}$ until $v_{bb_{i+1}}$
is visible.

\begin{figure}[h]
	% minipage mit (Blind-)Text
	\centering
	\includegraphics[width=0.65\textwidth]{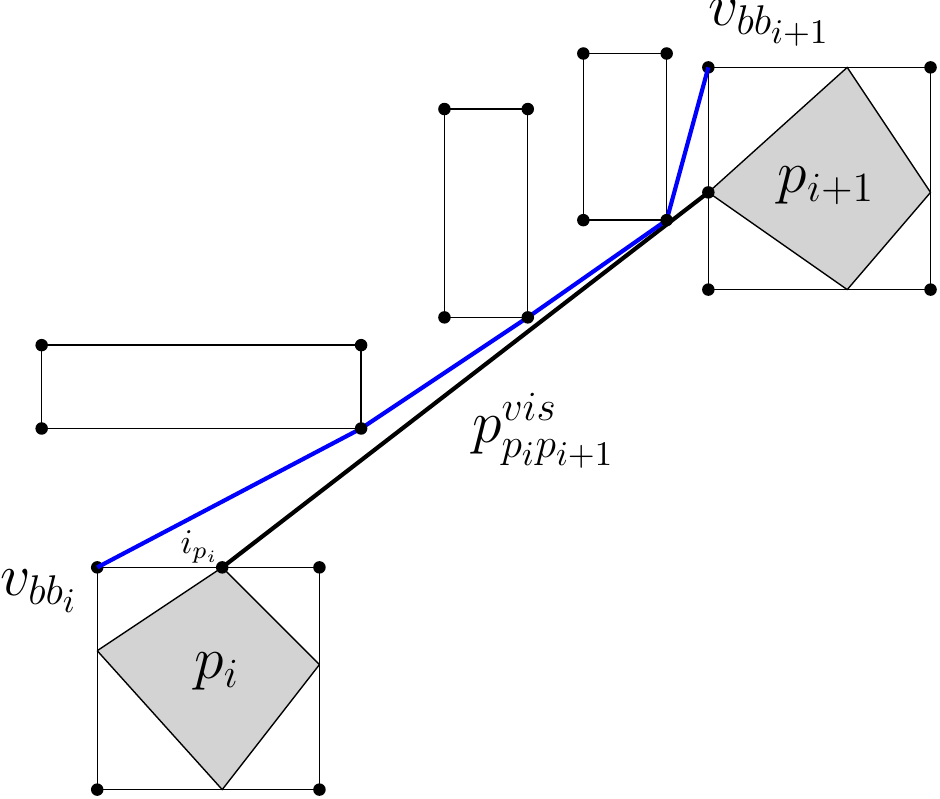}
	\caption{A path construction of GreViRo.}
	\label{fig:greviro} 
\end{figure}
\skipSpace
Figure \ref{fig:greviro} depicts a path construction of GreViRo.
Observe that GreViRo is defined to fulfill the same monotonicity properties as $p^{vis}_{st}$.
We are left with proving the correctness of GreViRo. \\

\begin{lem} \label{lemma:greviro}
	Let $v_{bb_i}$ and $v_{bb_{i+1}}$ be defined as described above.
	GreViRo constructs a path in $G_{BB}$ between $v_{bb_i}$ and $v_{bb_{i+1}}$.
\end{lem}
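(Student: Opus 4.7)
The plan is to argue by induction on the number $k$ of bounding boxes (other than $bb(p_i)$ and $bb(p_{i+1})$) that intersect the open segment $\overline{i_{p_i}i_{p_{i+1}}}$. Without loss of generality, I would assume $p^{vis}_{p_ip_{i+1}}$ is both increasing $x$- and increasing $y$-monotone; the remaining cases follow by reflecting the coordinate axes. By definition, $v_{bb_i}$ is chosen as the corner of $bb(p_i)$ on the side of $p^{vis}_{p_ip_{i+1}}$ consistent with these monotonicity properties, and likewise for $v_{bb_{i+1}}$.

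In the base case $k=0$, I would show that $v_{bb_i}$ and $v_{bb_{i+1}}$ are already adjacent in $G_{BB}$. Since no bounding box lies in the interior of $\overline{i_{p_i}i_{p_{i+1}}}$ and $v_{bb_i}$, $v_{bb_{i+1}}$ sit on the appropriate corners of their own bounding boxes, the segment $\overline{v_{bb_i}v_{bb_{i+1}}}$ can be shown not to meet any bounding box: the only candidates to block it would be $bb(p_i)$ or $bb(p_{i+1})$, but the monotonicity-compatible choice of the corners rules this out.

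For the inductive step, let $b$ be the first bounding box after $i_{p_i}$ that the reference segment meets. This is well-defined because the set of obstructing bounding boxes is finite and, under the non-intersection assumption, can be totally ordered along $\overline{i_{p_i}i_{p_{i+1}}}$. The segment enters $b$ through its bottom or left side and exits through its top or right side. To detour around $b$ while preserving both $x$- and $y$-monotonicity from $v_{bb_i}$, only two corners of $b$ are candidates, namely its top-left and bottom-right corner. I would argue that at least one is visible from $v_{bb_i}$ in $G_{BB}$: the sub-segment from $v_{bb_i}$ through $i_{p_i}$ to the entry point of $b$ is, by the choice of $b$, clear of any other bounding box, and a small perturbation to the relevant corner stays clear. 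GreViRo then selects the candidate minimizing the distance to the reference segment and recurses on the sub-problem in which $b$ plays the role of the new starting bounding box. Since the number of obstructing boxes on the remaining reference segment has strictly decreased, the induction closes.

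The main obstacle I foresee is verifying that the distance-minimizing greedy choice never drives GreViRo into a configuration where the monotonicity constraints can no longer be met on the next step. This is where the non-intersection of bounding boxes becomes indispensable: if two obstructing bounding boxes were permitted to overlap, routing around one could force a step that violates $x$- or $y$-monotonicity with respect to the other. Under the assumption of the lemma, any two obstructing bounding boxes are strictly separated, so a monotone transition from a chosen corner of $b$ to a candidate corner of the next obstruction always exists. I expect the bulk of the technical effort to be a short case analysis over the four combinations of entry side (bottom or left) and exit side (top or right) of $b$, confirming in each case that both candidate corners yield a valid edge of $G_{BB}$ compatible with the monotonicity of $p^{vis}_{p_ip_{i+1}}$.
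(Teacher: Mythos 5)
Your induction is set up on the wrong measure, and this creates a genuine gap. You induct on the number $k$ of bounding boxes that intersect the reference segment $\overline{i_{p_i}i_{p_{i+1}}}$, but the obstructions that actually force GreViRo to take intermediate steps are bounding boxes that block the straight segment from the \emph{current corner} to the \emph{target corner}, and these need not touch the reference segment at all: the corners $v_{bb_i}$ and $v_{bb_{i+1}}$ are offset from the intersection points $i_{p_i}$ and $i_{p_{i+1}}$, so $\overline{v_{bb_i}v_{bb_{i+1}}}$ sweeps through a region the reference segment never enters (this is exactly the situation in Figure~\ref{fig:visibilityPathMonotone}, where $p^{vis}_{st}$ is a direct visibility line yet $\bbtl{p_i}$ and $\bbtl{p_{i+1}}$ are not mutually visible). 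Hence your base case $k=0$ is false as stated --- a third box can block $\overline{v_{bb_i}v_{bb_{i+1}}}$ without intersecting $\overline{i_{p_i}i_{p_{i+1}}}$ --- and in the inductive step the claim that one of the two monotonicity-compatible corners of the first obstructing box $b$ is visible from $v_{bb_i}$ can likewise fail: boxes lying above and below the reference segment (hence not counted by $k$) can block both candidate corners. When that happens GreViRo must first visit nodes of those uncounted boxes, your measure does not decrease, and the recursion has nothing to stand on. The appeal to ``a small perturbation to the relevant corner'' does not help, since a corner of $b$ may be arbitrarily far from the entry point of the reference segment into $b$.

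The paper avoids this by not trying to certify each hop in advance; instead it argues by contradiction that GreViRo can never get stuck. If GreViRo were stuck at some $v_{bb_j}$ because every visible node violates monotonicity, then the node $v_{bb_{j+1}}$ it would need must already have been visible from the predecessor $v_{bb_{j-1}}$ (no further box can obstruct $\overline{v_{bb_{j-1}}v_{bb_{j+1}}}$, since GreViRo's tie-breaking rule --- always pick the monotone node closest to the reference segment --- would have routed to that box's nodes first), so GreViRo would have chosen $v_{bb_{j+1}}$ instead of $v_{bb_j}$, a contradiction. If you want to salvage an inductive argument, you would need a potential function tied to GreViRo's actual state (e.g.\ the set of boxes still separating the current node from $v_{bb_{i+1}}$ within the relevant face), not to the reference segment alone.
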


\begin{proof}
	Without loss of generality, assume $p^{vis}_{p_ip_i+1}$ is increasing $x$- and increasing $y$-monotone.
	Observe that the same proof can be used for all other cases by turning the view $90,180$ and $270$ degrees around.
	Turning the view $90$ degrees around yields an increasing $x$- and decreasing $y$-monotone path, $180$ degrees yields a decreasing $x$- and decreasing $y$-monotone path and $270$ degrees yields a decreasing $x$- and increasing $y$-monotone path. 
	
	\noindent
	We prove by contradiction that GreViRo constructs a path between $v_{bb_i}$ and $v_{bb_{i+1}}$. 
	Assume GreViRo has reached a node $v_{bb_j}$ and cannot proceed further because $v_{bb_{i+1}}$ is not visible and all other visible nodes violate increasing $x$- and $y$-monotonicity.
	Since $v_{bb_{i+1}}$ is not visible from $v_{bb_j}$, there has to be a bounding box $bb_{j+1}$ which is intersected by the line segment $\overline{v_{bb_j}v_{bb_{i+1}}}$.
	Consider a visible node $v_{bb_{j+1}}$ of $bb_{j+1}$.
	Due to our assumption, $x(v_{bb_j}) < x(v_{bb_{j+1}})$ and $y(v_{bb_j}) > y(v_{bb_{j+1}})$.
	Further consider the node $v_{bb_{j-1}}$ which has been visited before proceeding to $v_{bb_j}$. 
	Due to our assumption, GreViRo gets stuck at node $v_{bb_j}$ hence, $x(bb_{j-1}) < x(bb_j)$ and $y(bb_{j-1}) < y(bb_j)$. 
	The crucial observation is that $v_{bb_{j+1}}$ has already been visible from $v_{bb_{j-1}}$.
	Observe that there cannot be a third bounding box intersecting the line segment $\overline{v_{bb_{j-1}}v_{bb_{j+1}}}$ since 
	nodes of this bounding box would have been preferred by GreViRo to $v_{bb_{j+1}}$ as these are closer to $p^{vis}_{p_ip_{i+1}}$.
	Hence, GreViRo would have chosen the node $v_{bb_{j+1}}$ instead of $v_{bb_j}$ which is a contradiction to our assumption. 
	We refer to Figure \ref{fig:xyMonotoneContradiction} for a visualization of the contradiction. \\
\end{proof}
\skipSpace
\begin{figure}[h]
	% minipage mit (Blind-)Text
	\centering
	\includegraphics[width=0.75\textwidth]{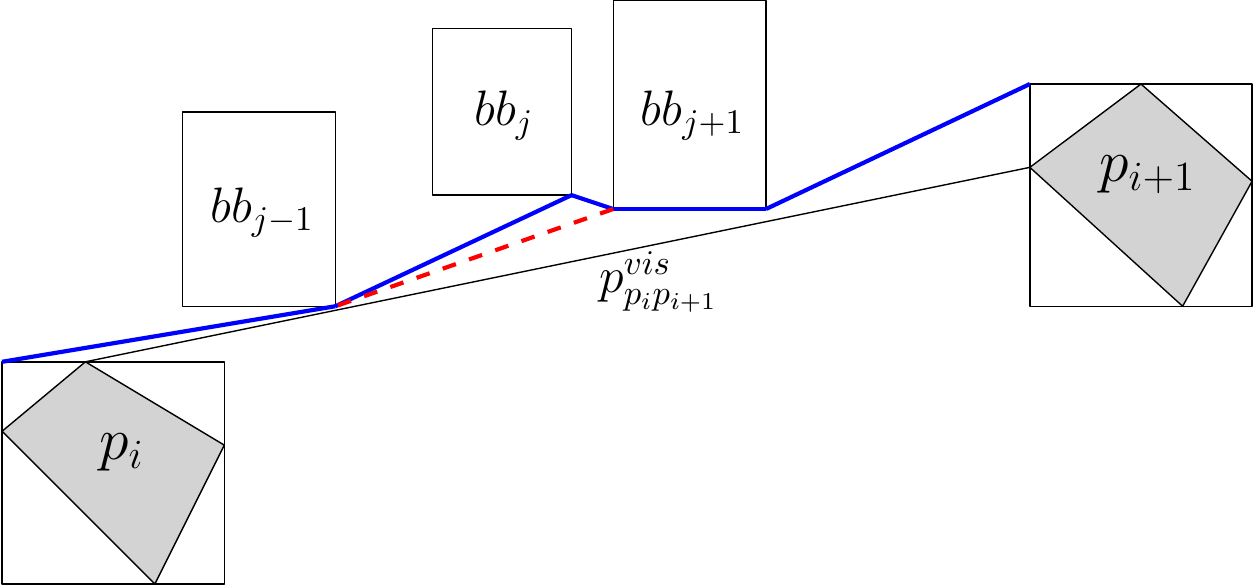}
	\caption[The contradiction described in the proof of Lemma \ref{lemma:greviro}.]{The contradiction described in the proof of Lemma \ref{lemma:greviro}. The contradicting path is marked in blue and thick.
		The reader can see that $bb_{j+1}$ has already been visible from $bb_{j-1}$.}
	\label{fig:xyMonotoneContradiction} 
\end{figure} 

\noindent
GreViRo is an analysis tool that allows us to construct a path in $G_{BB}$ that fulfills the same monotonicity properties as the original path in $G_{Vis}$.
For structuring the proof of Theorem \ref{theorem:nonIntersectingBB}, we split it into two lemmas. 
Initially, we assume that $p^{vis}_{st}$ is $x$- and $y$-monotone (Lemma \ref{lemma:xyMonotone}). 
Using this assumption makes the proof easier and thus helps to understand the overall proof ideas.
Afterward, we drop this assumption in Lemma \ref{lemma:xoryMonotone} and assume that $p^{vis}_{st}$ is either $x$- or $y$-monotone but not both. 
Finally we drop any monotonicity assumption and can prove Theorem \ref{theorem:nonIntersectingBB} with our knowledge of Lemmas \ref{lemma:xyMonotone} and $\ref{lemma:xoryMonotone}$.
\skipSpace
\begin{lem} \label{lemma:xyMonotone}
	If $p^{vis}_{st}$ is $x$- and $y$-monotone, then there exists a path $p^{BB}_{st}$ between $s$ and $t$ in $G_{BB}$ with length at most $\sqrt{2} \cdot d_{\mathrm{UDG}}(s,t)$.
\end{lem}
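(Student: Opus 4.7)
The plan is to use GreViRo to convert the $x$- and $y$-monotone visibility path $p^{vis}_{st}$ into a monotone path $p^{BB}_{st}$ in $G_{BB}$, and then bound its length using the elementary inequality $\sqrt{a^2+b^2} \leq |a|+|b|$ together with \Cref{lemma:rightTriangle}.

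First I would decompose $p^{vis}_{st}$ into the alternating sequence of line-segment pieces $p^{vis}_{p_i p_{i+1}}$ (leaving polygon $p_i$ for polygon $p_{i+1}$) and boundary pieces that walk along the respective polygons. Since $p^{vis}_{st}$ is globally $x$- and $y$-monotone, every individual piece inherits the same monotonicity direction. For each line-segment piece I would invoke GreViRo and \Cref{lemma:greviro} to obtain a path in $G_{BB}$ between the corresponding bounding box representatives that preserves the $x$- and $y$-monotonicity of that piece. For each boundary piece I would walk along the sides of $bb(p_i)$ that respect the prescribed monotonicity direction, which is always possible because an axis-parallel rectangle offers a monotone detour on either side. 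Concatenating all of these yields a path $p^{BB}_{st}$ in $G_{BB}$ that is itself globally $x$- and $y$-monotone.

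For the length bound, note that any polyline $q = (q_0, \ldots, q_m)$ which is both $x$- and $y$-monotone satisfies, segment by segment, $\eucl{q_iq_{i+1}} \leq |x(q_{i+1})-x(q_i)| + |y(q_{i+1})-y(q_i)|$. Summing over $i$ and collapsing the telescoping sum of absolute values (enabled by monotonicity) gives $\eucl{q} \leq |x(q_m)-x(q_0)| + |y(q_m)-y(q_0)|$. Applied to $p^{BB}_{st}$, this yields $\eucl{p^{BB}_{st}} \leq |x(t)-x(s)| + |y(t)-y(s)|$. Since these two quantities are the legs of a right triangle with hypotenuse $\overline{st}$, \Cref{lemma:rightTriangle} gives $|x(t)-x(s)| + |y(t)-y(s)| \leq \sqrt{2}\cdot \eucl{st} \leq \sqrt{2}\cdot d_{\mathrm{UDG}}(s,t)$, which is the claimed bound.

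The main obstacle I anticipate is not the length estimate, which is immediate once monotonicity is in place, but the careful bookkeeping at the concatenation points: at each transition from the boundary walk around $bb(p_i)$ to the GreViRo piece heading toward $bb(p_{i+1})$, the exiting corner of $bb(p_i)$ and the entering corner of $bb(p_{i+1})$ must both be compatible with the single prescribed monotonicity direction. Checking this reduces to a short case distinction on which side of $\overline{st}$ each polygon $p_i$ lies; once that is handled, the proof closes as described.
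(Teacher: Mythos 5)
Your proposal is correct and follows essentially the same route as the paper: construct a globally $x$- and $y$-monotone path in $G_{BB}$ by picking monotonicity-compatible bounding-box corners and stitching them together with GreViRo (\Cref{lemma:greviro}), then bound the length via \Cref{lemma:rightTriangle}. Your telescoping formulation $\eucl{p^{BB}_{st}} \leq |x(t)-x(s)| + |y(t)-y(s)| \leq \sqrt{2}\cdot\eucl{st}$ is a slightly cleaner statement of the paper's right-triangle comparison, but the argument is the same.
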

\newpage
\begin{proof}
	The proof idea is to construct a path $p^{BB}_{st}$ in $G_{BB}$ which is also $x$- and $y$-monotone such that we can conclude that $p^{BB}_{st}$ does neither walk a longer distance in horizontal nor in vertical direction than $p^{vis}_{st}$.
	Without loss of generality, we assume $p^{vis}_{st}$ is both increasing $x$- and $y$-monotone.
	The proofs for the other cases can be obtained by turning the view $90$, $180$ and $270$ degrees around.
	Consider the sequence of hole polygons ($p_1, \dots , p_k$) which is visited by the shortest path $p^{vis}_{st}$.
	Observe that when walking from $p_i$ to $p_{i+1}$, $p^{vis}_{st}$ intersects either the lower edge of $bb(p_{i+1})$ or the left edge of $bb(p_{i+1})$. 
	Note that the edges of $bb(p_{i+1})$ are not part of $G_{Vis}$ but are used here to understand the path construction in $G_{BB}$.
	Due to our monotonicity assumption, $p^{vis}_{st}$ has to intersect either the upper edge of $bb(p_{i+1})$ or the right edge of $bb(p_{i+1})$ when proceeding to $p_{i+2}$. 
	%	Figures \ref{fig:case1} and \ref{fig:case2} visualize these properties.	
	%	\begin{figure}[htbp]
	%		% minipage mit (Blind-)Text
	%		\begin{minipage}{0.5\textwidth} 
	%			\includegraphics[width=\textwidth]{figures/}
	%			\caption{Visualization of Case 1.}
	%			\label{fig:case1} 
	%		\end{minipage}
	%		% Auffüllen des Zwischenraums
	%		\hfill
	%		% minipage mit Grafik
	%		\begin{minipage}{0.48\textwidth}
	%			% \textwidth bezieht sich nun auf die Minipage
	%			\includegraphics[width=\textwidth]{figures/}
	%			\caption{Visualization of Case 2.}
	%			\label{fig:case2} 
	%		\end{minipage}
	%		% \caption{noch eine Caption}
	%	\end{figure}
	
	\noindent
	The concrete path construction works as follows:
	Assume our path has brought us to $bb(p_i)$ and $p^{vis}_{st}$ is heading to polygon $p_{i+1}$ next.
	There are two cases to consider. Either the left  or the lower edge of $bb(p_{i+1})$ is intersected by $p^{vis}_{st}$.
	Whenever the left edge is intersected, we know due to our monotonicity assumption, that the upper edge has to be intersected afterward. 
	Hence, we can use the top left node of $p_{i+1}$ as next point of our path-construction. 
	The same argumentation can be used if $p_{vis}$ intersects the lower edge of $bb(p_{i+1})$. 
	Due to the monotonicity assumption, we know that the right edge of $bb(p_{i+1})$ has to be intersected when $p^{vis}_{st}$ proceeds to $p_{i+2}$.
	Hence, we can walk to the lower right node of $bb(p_{i+1})$.
	More formally, the next node of our path construction is:
	\begin{itemize}
		\item Case 1: $bb_{t\ell}(p_{i+1})$ if $p^{vis}_{st}$ intersects $\overline{bb_{t\ell}(p_{i+1})bb_{b\ell}(p_{i+1})}$
		\item Case 2: $bb_{br}(p_{i+1})$ if $p^{vis}_{st}$ intersects $\overline{bb_{b\ell}(p_{i+1})bb_{br}(p_{i+1})}$
	\end{itemize}
	
	\noindent
	Note that the path construction only visits the top left or the bottom right node of a bounding box.
	The construction described above yields $4$ different paths from $p_i$ to $p_{i+1}$.
	
	\begin{enumerate}
		\item path from $bb_{t\ell}(p_i)$ to $bb_{t\ell}(p_{i+1})$
		\item path from $bb_{t\ell}(p_i)$ to $bb_{br}(p_{i+1})$
		\item path from $bb_{br}(p_i)$ to $bb_{t\ell}(p_{i+1})$
		\item path from $bb_{br}(p_i)$ to $bb_{br}(p_{i+1})$
	\end{enumerate}
	\noindent
	Unfortunately, it is not guaranteed that the bounding box nodes of $p_i$ and those from $p_{i+1}$ are visible from each other (see~\Cref{fig:visibilityPathMonotone}).
	
	\begin{figure}[h]
		% minipage mit (Blind-)Text
		\centering
		\includegraphics[width=0.75\textwidth]{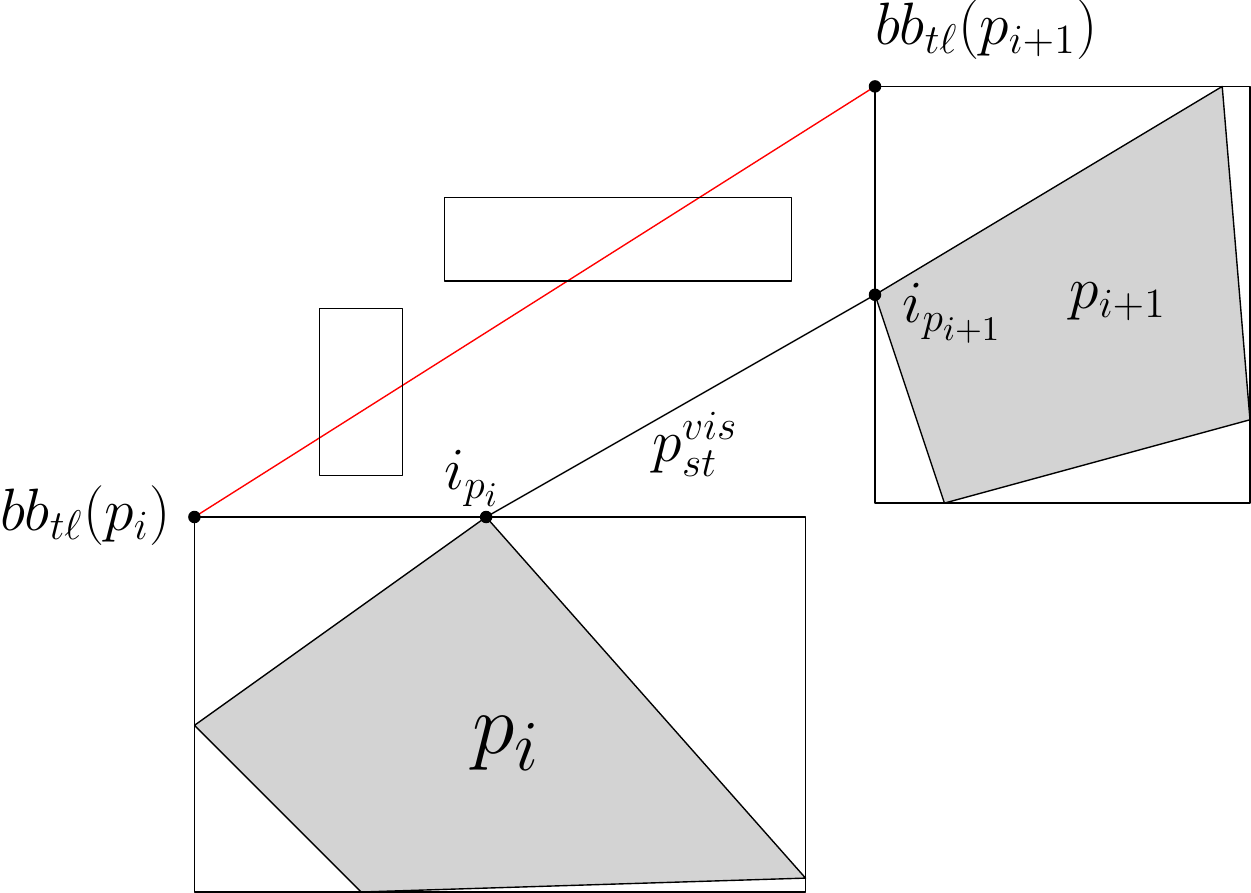}
		\caption[Comparison of $p^{vis}_{st}$ and a potential bounding box path.]{Comparison of $p^{vis}_{st}$ and a potential bounding box path. $p^{vis}_{st}$ uses a direct visibility line whereas $\bbtl{p_i}$ and $\bbtl{p_{i+1}}$ are not visible from each other.}
		\label{fig:visibilityPathMonotone} 
	\end{figure}
	\noindent
	Observe that each case is a valid input for GreViRo. 
	Hence, we can apply GreViRo for every possible case connecting $bb(p_i)$ and $bb(p_{i+1})$ according to our path construction.
	Due to Lemma \ref{lemma:greviro}, $p^{BB}_{st}$ is also increasing $x$- and $y$-monotone.
	If we apply our construction together with GreViRo for every sub-path from $p_i$ to $p_{i+1}$ for all $i \in \set{1, \dots , k-1}$, we obtain an increasing $x$- and $y$-monotone path that passes all polygons which are visited by $p^{vis}_{st}$. 
	The path between $s$ and $p_1$ as well as the path between $p_k$ and $t$ is constructed in the same way.
	Hence, $p^{BB}_{st}$ does neither walk a longer vertical nor horizontal distance than $p^{vis}_{st}$.
	
	\noindent
	Finally, we can compare $p^{BB}_{st}$ to $p^{vis}_{st}$ by using a right triangle with $\eucl{p^{vis}_{st}}$ as length of the hypotenuse $c$ (see Figure \ref{fig:shortestxyMonotonePath}).
	The legs of the right triangle are the parts of $p^{vis}_{st}$ which are walked in horizontal and in vertical direction. 
	Hence, we can conclude that the length of $p^{BB}_{st}$ is at most the sum of both legs.
	Thus, we can bound the length of $p^{BB}_{st}$ by $\sqrt{2} \cdot \eucl{p^{vis}_{st}} \leq \sqrt{2} \cdot d_{\mathrm{UDG}}(s,t)$ (see \Cref{lemma:rightTriangle}) and have finally proven~\Cref{lemma:xyMonotone}.
	\begin{figure}[ht]
		% minipage mit (Blind-)Text
		\centering
		\includegraphics[width=0.6\textwidth]{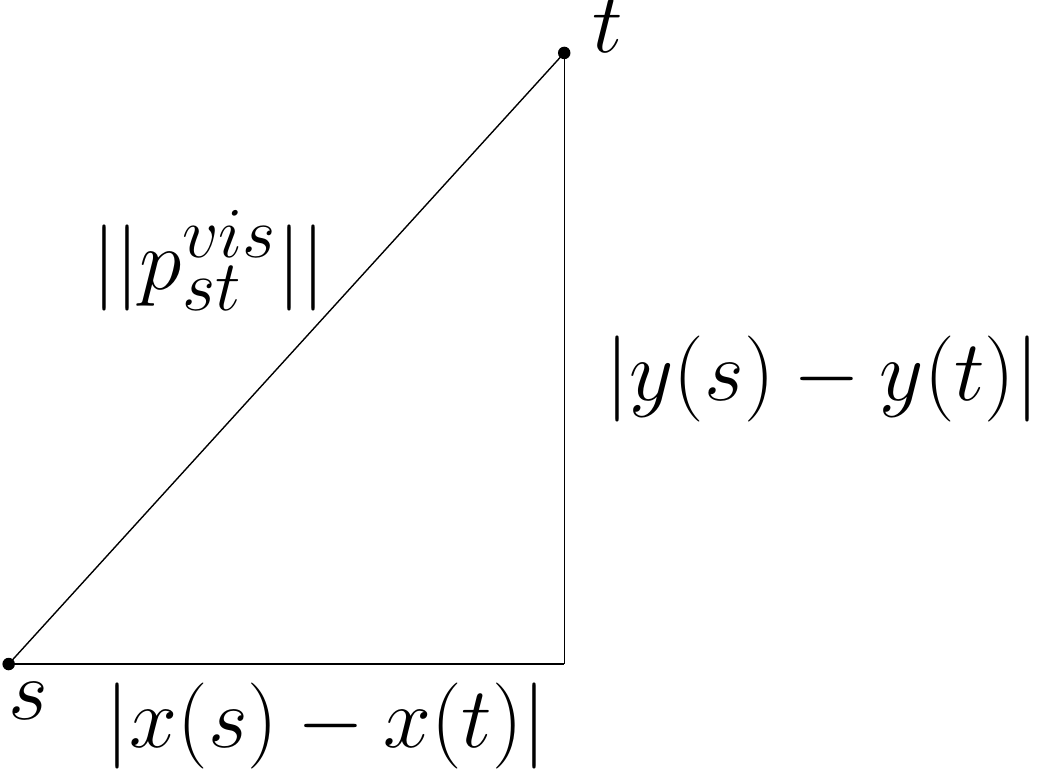}
		\caption[A triangle comparing $\eucl{p^{vis}_{st}}$ and $\eucl{p^{BB}_{st}}$.]{A triangle comparing $\eucl{p^{vis}_{st}}$ and $p^{BB}_{st}$. The length of $p^{BB}_{st}$ is at most the sum of the legs of the depicted right triangle.}
		\label{fig:shortestxyMonotonePath} 
	\end{figure}
\end{proof}
\skipSpace
Unfortunately, shortest visibility paths are not always $x$- and $y$-monotone.
Hence, we have to analyze the path length for more general cases. 
In \Cref{lemma:xoryMonotone}, it is assumed that $p^{vis}_{st}$ is either $x$- or $y$-monotone. 
The proofs of \Cref{lemma:xoryMonotone} and \Cref{theorem:nonIntersectingBB} use similar ideas as the proof of \Cref{lemma:xyMonotone} and are therefore moved to the appendix.\\

\begin{lem} \label{lemma:xoryMonotone}
	If the shortest path $p^{vis}_{st}$ between $s$ and $t$ in $G_{Vis}$ is either $x$- or $y$-monotone, then there exists a path $p^{BB}_{st}$ between $s$ and $t$ in $G_{BB}$ with length at most $\sqrt{2} \cdot d_{\mathrm{UDG}}(s,t)$.
\end{lem}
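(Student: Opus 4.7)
The plan is to reduce Lemma \ref{lemma:xoryMonotone} to the doubly monotone case of Lemma \ref{lemma:xyMonotone} by decomposition. Without loss of generality I assume $p^{vis}_{st}$ is increasing $x$-monotone but not $y$-monotone; the other three orientations follow by rotating the view by $90$, $180$, or $270$ degrees, as in the preceding lemma. Because $y$-monotonicity fails, $p^{vis}_{st}$ exhibits a finite alternating sequence of local $y$-maxima and $y$-minima, each occurring at a polygon vertex it touches. Every such turning vertex lies at the extreme $y$-coordinate of its polygon and therefore shares its $y$-coordinate with the top (for a maximum) or bottom (for a minimum) edge of the corresponding axis-parallel bounding box.

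First I split $p^{vis}_{st}$ at its turning polygons $p_{j_1}, \ldots, p_{j_L}$ into maximal $xy$-monotone sub-paths $q_0, q_1, \ldots, q_L$. For each $q_\ell$, I apply the construction from the proof of Lemma \ref{lemma:xyMonotone} (rotated to match the $y$-direction of $q_\ell$) to obtain a bounding box sub-path $Q_\ell$ in $G_{BB}$ sharing the monotonicity of $q_\ell$. At each turning polygon $p_{j_\ell}$, where $Q_{\ell-1}$ ends and $Q_\ell$ begins, I connect the two ends along the top or bottom edge of the bounding box of $p_{j_\ell}$: for a $y$-maximum turn I use the edge $\overline{\bbtl{p_{j_\ell}}\bbtr{p_{j_\ell}}}$, otherwise the edge $\overline{\bbbl{p_{j_\ell}}\bbbr{p_{j_\ell}}}$, both of which belong to $G_{BB}$ by definition. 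If the GreViRo construction of $Q_{\ell-1}$ naturally ends at the diagonally opposite corner, I first extend it along the appropriate side edge of $bb(p_{j_\ell})$ to reach the matching top or bottom corner; this extension remains $y$-monotone in the direction inherited from $q_{\ell-1}$.

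The key quantitative observation is that this construction preserves both the total horizontal and the total vertical travel of $p^{vis}_{st}$. Increasing $x$-monotonicity survives throughout, so the total horizontal distance of $p^{BB}_{st}$ equals $X := |x(t)-x(s)|$, which is also the total horizontal distance of $p^{vis}_{st}$. The $y$-turning values of $p^{BB}_{st}$ sit on the top/bottom bounding box edges of the turning polygons, which share their $y$-coordinates with the turning vertices of $p^{vis}_{st}$; hence each $Q_\ell$ covers exactly the $y$-span of $q_\ell$ and the total vertical distance of $p^{BB}_{st}$ equals the total vertical distance $Y$ of $p^{vis}_{st}$. Writing the segments of $p^{vis}_{st}$ as displacement vectors $(dx_i, dy_i)$ with $X = \sum_i |dx_i|$ and $Y = \sum_i |dy_i|$, the triangle inequality for the Euclidean norm yields $\sqrt{X^2+Y^2} \leq \sum_i \sqrt{dx_i^2+dy_i^2} = \eucl{p^{vis}_{st}}$. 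Combining with Lemma \ref{lemma:rightTriangle}, $\eucl{p^{BB}_{st}} \leq X + Y \leq \sqrt{2}\sqrt{X^2+Y^2} \leq \sqrt{2} \cdot \eucl{p^{vis}_{st}} \leq \sqrt{2} \cdot d_{\mathrm{UDG}}(s,t)$.

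The main obstacle will be verifying that the short transition across a turning polygon (along a single top or bottom bounding box edge, possibly preceded by a side-edge extension) is realizable in $G_{BB}$ without any other bounding box intruding and without disturbing the GreViRo argument on the two adjacent sub-paths. Under the non-intersection hypothesis this should follow from a local variant of Lemma \ref{lemma:greviro} applied to the composed piece consisting of $Q_{\ell-1}$, the transition edge, and $Q_\ell$, treated as a single $x$-monotone routing problem that changes $y$-direction exactly once; the fact that $p^{vis}_{st}$ itself reaches the extremum of the turning polygon guarantees that the top or bottom edge in question is free of any competing bounding box.
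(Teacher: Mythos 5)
Your decomposition into maximal $xy$-monotone sub-paths is a reasonable framing, but the step that stitches consecutive sub-paths together at a ``turning polygon'' rests on a claim that is neither proven nor true in general: that a local $y$-extremum of $p^{vis}_{st}$ occurs at the extreme $y$-coordinate of the polygon it touches. A taut shortest path turning downward at a vertex $v$ of a hole polygon $P$ only forces $P$ to lie locally below $v$, in the wedge between the two outgoing path edges; for the non-convex faces that radio holes are, $P$ may well extend above $y(v)$ elsewhere, so the top edge of $bb(P)$ can sit strictly higher than the turning vertex. Your construction then routes over that top edge and back down, so the total vertical travel of $p^{BB}_{st}$ strictly exceeds $Y$, and the chain $\eucl{p^{BB}_{st}} \leq X+Y \leq \sqrt{2}\,\eucl{p^{vis}_{st}}$ breaks. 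Even granting the $y$-claim, traversing the full edge $\overline{\bbbl{P}\bbbr{P}}$ (or $\overline{\bbtl{P}\bbtr{P}}$) commits you to reaching $x = \max_x(P)$, and if the next polygon forces $p^{vis}_{st}$ back to a smaller $x$-coordinate you lose $x$-monotonicity and inflate $X$ as well. This is precisely the configuration the paper isolates as Case~5 (the visibility path crossing the top or bottom edge of $bb(p_{i+1})$ twice without crossing any other edge, i.e.\ your turning polygons), and \Cref{fig:xMonotoneNotAllowed} is exactly the counterexample to your transition move.

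The paper's fix is the opposite of yours: at such a polygon it deliberately visits \emph{no} node of $bb(p_{i+1})$ at all. Instead it runs GreViRo only until a node of some other bounding box $bb(p_j)$ is reached from which the first intersection point $i_{p_{i+1},1}$ is visible, then threads an additional monotone sub-path to a node from which the second intersection point is visible, so that every piece of $p^{BB}_{st}$ matches the monotonicity and hence the horizontal and vertical travel of the corresponding piece of $p^{vis}_{st}$. Your final quantitative step (bounding $\sqrt{X^2+Y^2}$ by $\eucl{p^{vis}_{st}}$ via displacement vectors and then applying \Cref{lemma:rightTriangle}) is actually a cleaner rendering of the paper's right-triangle argument and would be worth keeping, but it only applies once the construction genuinely preserves $X$ and $Y$, which yours does not.
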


\begin{proof}
	The overall proof idea is based on the observation that between any visited pair of polygons $p_i$ and $p_{i+1}$ the path $p^{vis}_{st}$ is $x$- and $y$-monotone.
	Our path construction aims to obtain a path that fulfills the same monotonicity criterion as $p^{vis}_{st}$ for any pair of succeeding visited hole polygons.
	Without loss of generality, we assume $p^{vis}_{st}$ is increasing $x$-monotone but not $y$-monotone.
	Turning the view around 90 degrees yields a proof for the other case.
	For the proof of Lemma \ref{lemma:xoryMonotone} we extend the path construction described in the proof of Lemma \ref{lemma:xyMonotone} as follows:
	Assume our path has brought us to the bounding box of hole polygon $p_i$ and we are heading to polygon $p_{i+1}$ next.
	There exist a few more cases which have to be considered for this scenario. 
	One of the new cases (Case $3$) is that $p^{vis}_{st}$ intersects the left edge of $bb(p_{i+1})$ and afterward the lower edge of $bb(p_{i+1})$ since we do not assume $y$-monotonicity anymore.
	In that case, we choose $\bbbl{p_{i+1}}$ as next node.
	Additionally, it can happen that $p^{vis}_{st}$ intersects the upper edge of $bb(p_{i+1})$ and the right edge afterward (Case $4$).
	In that case, $\bbtr{p_{i+1}}$ is chosen.
	
	\noindent
	Furthermore, there is an additional case we have to consider (Case 5).
	As we do not assume $y$-monotonicity anymore, it can happen that $p^{vis}_{st}$ intersects either the upper or the lower edge of $bb(p_{i+1})$ and leaves $bb(p_{i+1})$ afterward through the same edge without intersecting any other edge of $bb(p_{i+1})$.
	We are not allowed to choose a node of $bb(p_{i+1})$ here since we cannot guarantee an $x$- and $y$-monotone path from $p_{i+1}$ to $p_{i+2}$ thereafter (see Figure \ref{fig:xMonotoneNotAllowed} for an example).
	
	\begin{figure}[h]
		% minipage mit (Blind-)Text
		\centering
		\includegraphics[width=0.5\textwidth]{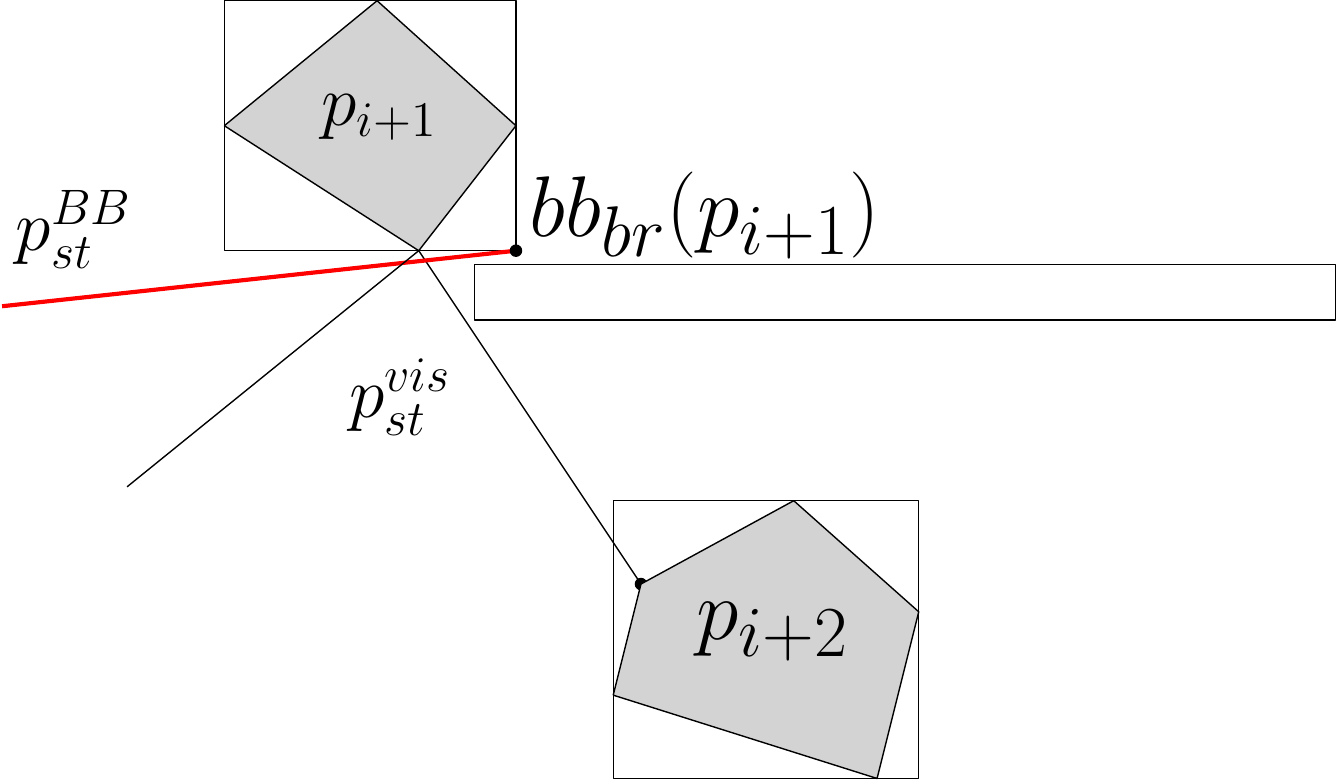}
		\caption[An invalid path construction for Case $5$.]{An invalid path construction for Case $5$. If $p^{BB}_{st}$ chooses $\bbbr{p_{i+1}}$, $p^{BB}_{st}$ has to cover a longer distance in horizontal direction than $p^{vis}_{st}$ when walking from $p_{i+1}$ to $p_{i+2}$.}
		\label{fig:xMonotoneNotAllowed} 
	\end{figure}
	\noindent
	The proposed solution is to avoid using a node of $bb(p_{i+1})$ and instead only walk to a node until the first intersection point of $p^	{vis}_{st}$ with $bb(p_{i+1})$, which is denoted as $i_{p_{i+1},1}$, is visible. 
	That node will be afterward the starting point for the sub-path between $p_{i+1}$ and $p_{i+2}$.
	More formally, we subsume the new cases as follows: 
	Assume we have already covered the sub-path to $p_i$ and $p^{vis}_{st}$ is heading to $p_{i+1}$ next. 
	The next chosen node of our path is: 
	\skipSpace
	\begin{itemize}
		\item Case 3: $bb_{tr}(p_{i+1})$ if $p^{vis}_{st}$ intersects $\overline{bb_{t\ell}(p_{i+1})bb_{tr}(p_{i+1})}$ and afterward \\ $\overline{bb_{tr}(p_{i+1})bb_{br}(p_{i+1})}$
		\item Case 4: $bb_{b\ell}(p_{i+1})$ if $p^{vis}_{st}$ intersects $\overline{bb_{t\ell}(p_{i+1})bb_{b\ell}(p_{i+1})}$ and afterward \\ $\overline{bb_{b\ell}(p_{i+1})bb_{br}(p_{i+1})}$
		\item Case 5: a node of $bb(p_j)$ such that $i_{p_{i+1},1}$ is visible, if $p^{vis}_{st}$ intersects \\
		$\overline{bb_{t\ell}(p_{i+1})bb_{tr}(p_{i+1})}$ or
		$\overline{bb_{b\ell}(p_{i+1})bb_{br}(p_{i+1})}$ twice without intersecting \\
		any other edge of $bb(p_{i+1}).$ 
	\end{itemize}

	\begin{figure}[htbp]
		% minipage mit (Blind-)Text
		\begin{minipage}{0.32\textwidth} 
			\includegraphics[width=\textwidth]{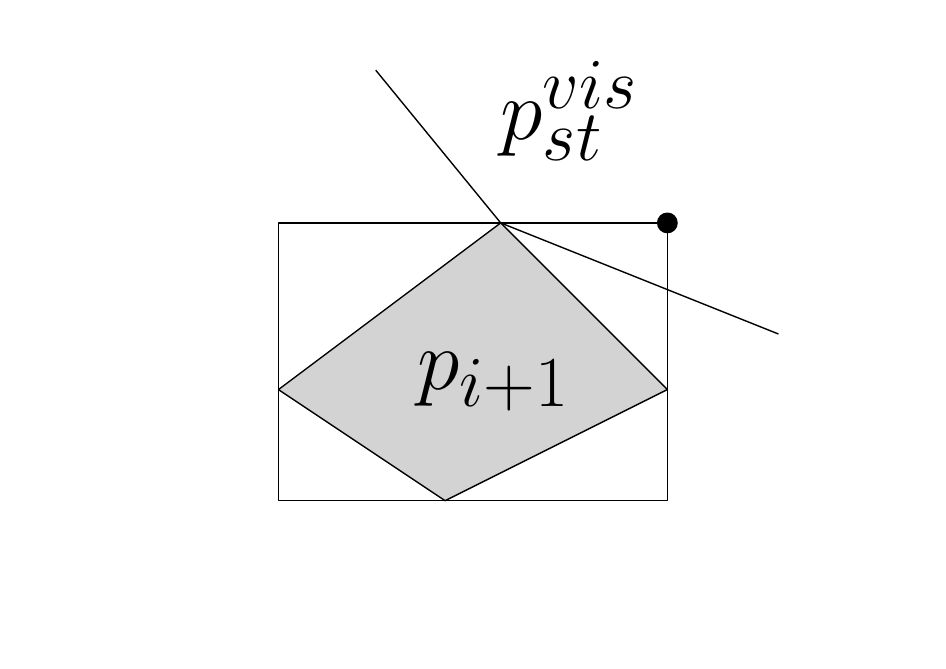}
			\caption{Visualization of Case 3.}
			\label{fig:case3} 
		\end{minipage}
		% Auffüllen des Zwischenraums
		\hfill
		% minipage mit Grafik
		\begin{minipage}{0.32\textwidth}
			% \textwidth bezieht sich nun auf die Minipage
			\includegraphics[width=\textwidth]{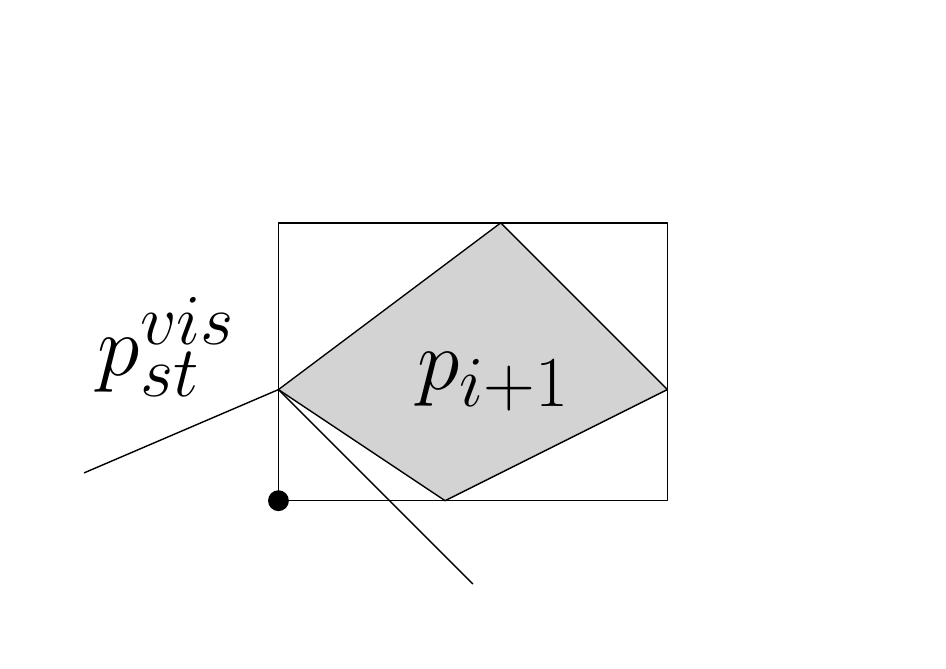}
			\caption{Visualization of Case 4.}
			\label{fig:case4} 
		\end{minipage}
		\hfill
		\begin{minipage}{0.32\textwidth}
			% \textwidth bezieht sich nun auf die Minipage
			\includegraphics[width=\textwidth]{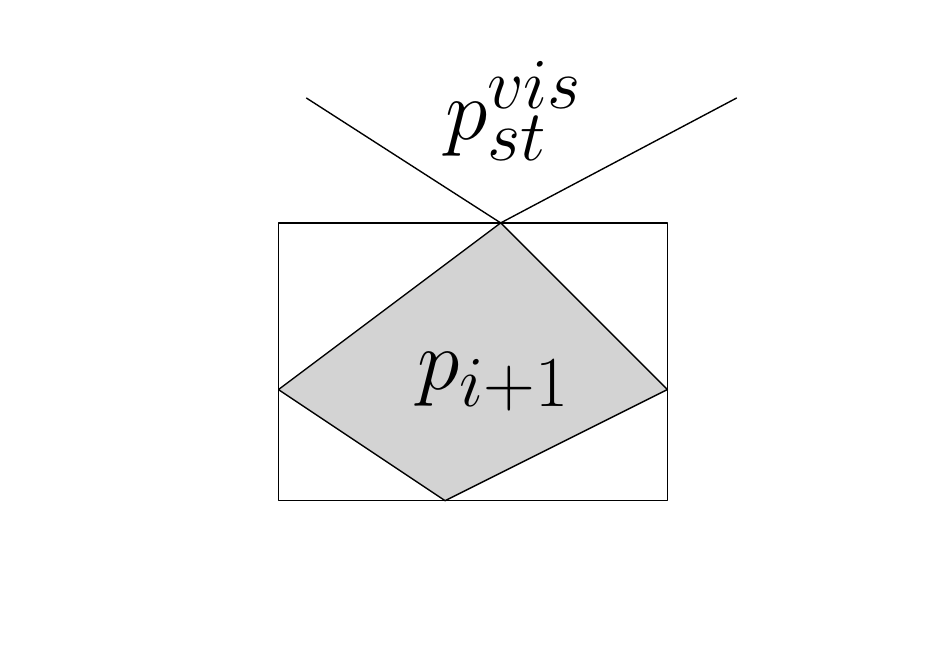}
			\caption{Visualization of Case 5.}
			\label{fig:case5} 
		\end{minipage}
		% \caption{noch eine Caption}
	\end{figure}
	\noindent
	Figures \ref{fig:case3}, \ref{fig:case4} and \ref{fig:case5} provide visualizations of each individual case. \\
	\noindent
	Case $3$ can be handled with similar arguments we have used for the path construction in the proof of Lemma \ref{lemma:xyMonotone}.
	Case $4$ needs a few additional arguments, as $p^{vis}_{st}$ does not fulfill  $y$-monotonicity.
	Without loss of generality, assume that we are currently at $\bbtl{p_i}$ and are heading to $\bbbl{p_{i+1}}$ next.
	Note that this implies that $p^{vis}_{st}$ is increasing $y$-monotone from $p_{i}$ to $p_{i+1}$.
	We construct a path with GreViRo for two sub-paths between $\bbtl{p_i}$ and $\bbbl{p_{i+1}}$.
	The first sub-path is increasing $x$- and increasing $y$-monotone.
	The second path stays increasing $x$-monotone but changes to decreasing $y$-monotonicity. 
	We apply GreViRo along $p^{vis}_{st}$ until  $\bbbl{p_{i+1}}$ is reached or another node $v$ would be chosen with $y(v) > y(i_{p_{i+1}})$.
	In the second case the strategy is changed and GreViRo is applied to the line segment $\overline{i_{p_{i+1}}\bbbl{p_{i+1}}}$. 
	Observe that this can only happen at a node which is visible from $i_{p_{i+1}}$.
	Hence, we have a valid input for GreViRo for the second sub-path.
	As we have already proven in Lemma \ref{lemma:greviro}, the first part of the path is increasing $x$- and increasing $y$-monotone and the second path is increasing $x$- and decreasing $y$-monotone. 
	Hence, $p^{BB}_{st}$ fulfills the same monotonicity properties as $p^{vis}_{st}$ between $p_i$ and $p_{i+1}$.
	The proof for other starting nodes than $\bbtl{p_i}$ is a simple adaptation of the given proof and omitted here. 
	
	\noindent
	It remains to prove a path existence for Case $5$.
	For the proof, we initially assume that from $p_i$ to $p_{i+1}$ Case $5$ occurs for the first time and hence we are either at $\bbtl{p_i}, \bbtr{p_i}, \bbbl{p_i}$ or $\bbbr{p_i}$.
	Without loss of generality, we assume that we are at $\bbtl{p_{i}}$. 
	Additionally we assume without loss of generality that $p^{vis}_{st}$ intersects the lower edge of $bb_{i+1}$.
	When applying GreViRo, we eventually reach a node of $bb(p_j)$ such that $i_{p_{i+1,1}}$ is visible.
	Consequently, we achieve an increasing $x$- and increasing $y$-monotone sub-path to a node of $bb(p_j)$.
	Nonetheless, our proof is not finished yet. 
	The remaining problem is the part between $bb(p_j)$ until we can walk along $p^{vis}_{st}$ again (see Figure \ref{fig:xoryMonotoneSubpath} for a visualization of the problem). 
	
	\begin{figure}[h]
		% minipage mit (Blind-)Text
		\centering
		\includegraphics[width=0.5\textwidth]{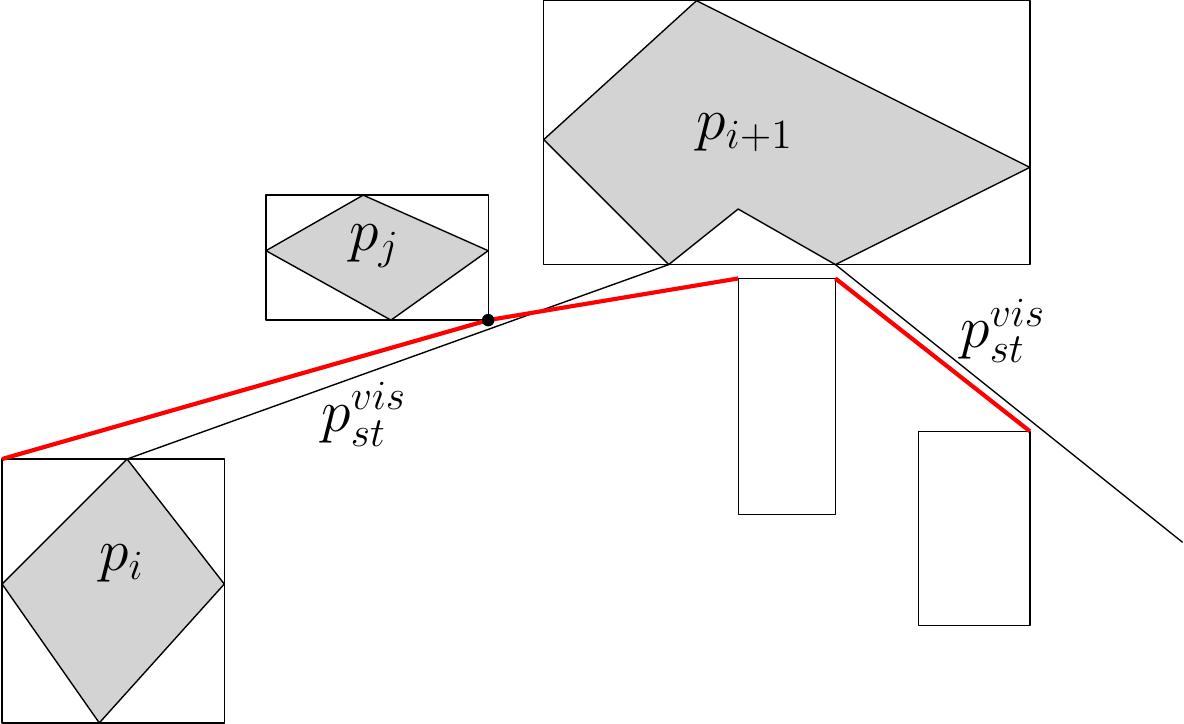}
		\caption[GreViRo example for Case $5$.]{GreViRo example for Case $5$. After reaching the node of $p_j$, GreViRo has to be applied until $i_{p_{i+1},2}$ is visible.} 
		\label{fig:xoryMonotoneSubpath}
	\end{figure}
	\noindent
	The solution is to introduce another sub-path. 
	This construction works as follows. 
	We start at $bb(p_j)$ and want to reach a node of $bb(p_{j'})$ such that $i_{p_{i+1},2}$ is visible.
	After reaching that point we can walk along $p^{vis}_{st}$ with our already known greedy strategy.
	Observe that this is a valid input for GreViRo.
	By applying GreViRo, we obtain an increasing $x$- and increasing $y$-monotone sub-path between $bb(p_j)$ and $bb(p_{j'})$.
	Finally, we can plug all these sub-paths together. 
	The argumentation for all other cases is similar to the discussed case and are therefore omitted here.
	Hence, we obtain:
	Each sub-path of $p^{BB}_{st}$ does not cover a longer distance in horizontal or vertical direction than $p^{vis}_{st}$. 
	Therefore, we are able to use the right triangle of Lemma \ref{lemma:xyMonotone} again (see Figure \ref{fig:shortestxyMonotonePath}).
	Consequently: 
	\begin{center}
	$\eucl{p^{BB}_{st}} \leq \sqrt{2} \cdot \eucl{p^{vis}_{st}} \leq \sqrt{2} \cdot d_{\mathrm{UDG}}(s,t).$
	\end{center}
\end{proof}	
\noindent
With help of \Cref{lemma:xyMonotone,lemma:xoryMonotone}, we can prove the correctness of Theorem \ref{theorem:nonIntersectingBB}.

\begin{proof}[Proof of \Cref{theorem:nonIntersectingBB}]
	By dropping the last monotonicity assumption, we obtain a few more cases again.
	Assume our path has brought us to $bb(p_i)$ and we are heading to $bb(p_{i+1})$ next.
	The first new case (Case $6$) is that the right edge of bounding box $bb(p_{i+1})$ is intersected and afterward the lower edge.
	In that case we walk to $\bbbr{p_{i+1}}$.
	The second new case (Case $7$) is that the right edge of $bb(p_{i+1})$ is intersected first and afterward the upper edge.
	Hence, we can walk to $\bbtr{p_{i+1}}$.
	Case $8$ is similar to Case $5$ of Lemma \ref{lemma:xoryMonotone}. 
	In that case, either the left or the right edge of $bb(p_{i+1})$ is intersected twice without intersecting any other edge of $bb(p_{i+1})$.
	Therefore, we use the same technique we have already used for the proof of Lemma \ref{lemma:xoryMonotone}.
	We walk as close as possible to $bb(p_{i+1})$ such that $i_{p_{i+1},1}$ is visible.
	More formally, the path construction chooses the following node of $bb(p_{i+1})$:
	\begin{itemize}
		\item Case 6: $\bbbr{p_{i+1}}$ if $p^{vis}_{st}$ intersects $\overline{\bbtr{p_{i+1}}\bbbr{p_{i+1}}}$ first and afterward $\overline{\bbbl{p_{i+1}}\bbbr{p_{i+1}}}$
		\item Case 7: $\bbtr{p_{i+1}}$ if $p^{vis}_{st}$ intersects $\overline{\bbtr{p_{i+1}}\bbbr{p_{i+1}}}$ \\first and afterward $\overline{\bbtr{p_{i+1}}\bbbr{p_{i+1}}}$
		\item Case 8: A node of $bb(p_j)$ such that $i_{p_{i+1},1}$ is visible, if $p^{vis}_{st}$ intersects \\
		$\overline{bb_{tr}(p_{i+1})bb_{br}(p_{i+1})}$ or
		$\overline{bb_{t\ell}(p_{i+1})bb_{b\ell}(p_{i+1})}$ twice without intersecting any other edge of $bb(p_{i+1})$ 
	\end{itemize}
	\noindent
	Figures \ref{fig:case6}, \ref{fig:case7} and \ref{fig:case8} visualize each case.
	\begin{figure}[htbp]
		% minipage mit (Blind-)Text
		\begin{minipage}{0.32\textwidth} 
			\includegraphics[width=\textwidth]{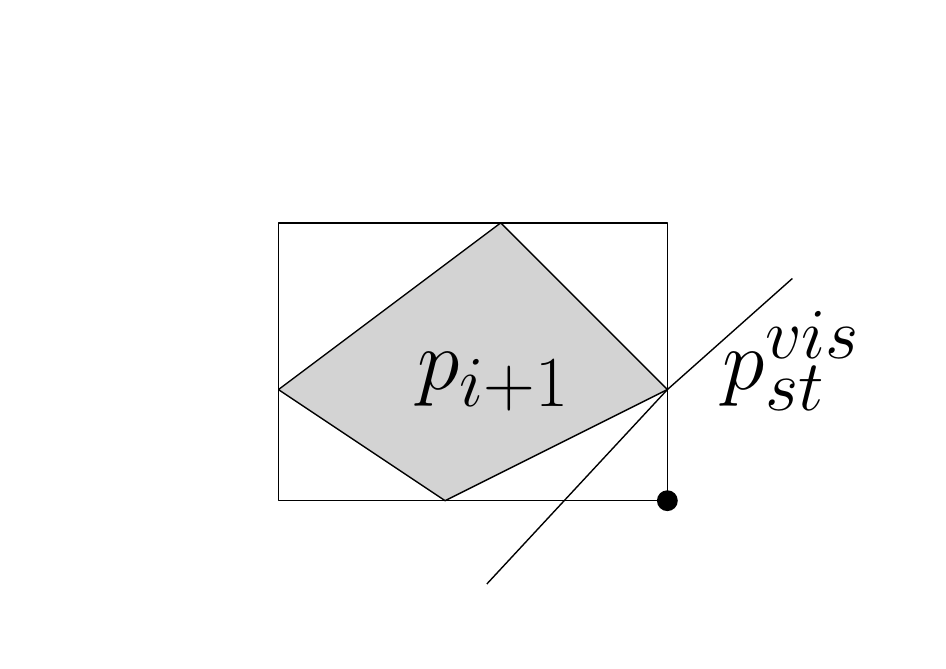}
			\caption{Visualization of Case 6.}
			\label{fig:case6} 
		\end{minipage}
		% Auffüllen des Zwischenraums
		\hfill
		% minipage mit Grafik
		\begin{minipage}{0.32\textwidth}
			% \textwidth bezieht sich nun auf die Minipage
			\includegraphics[width=\textwidth]{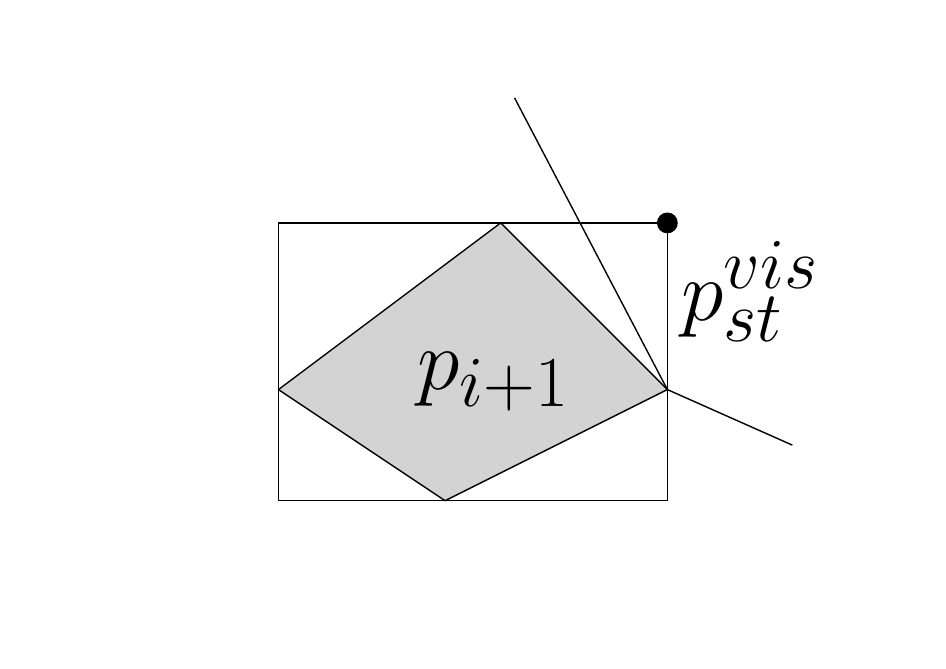}
			\caption{Visualization of Case 7.}
			\label{fig:case7} 
		\end{minipage}
		\hfill
		\begin{minipage}{0.32\textwidth}
			% \textwidth bezieht sich nun auf die Minipage
			\includegraphics[width=\textwidth]{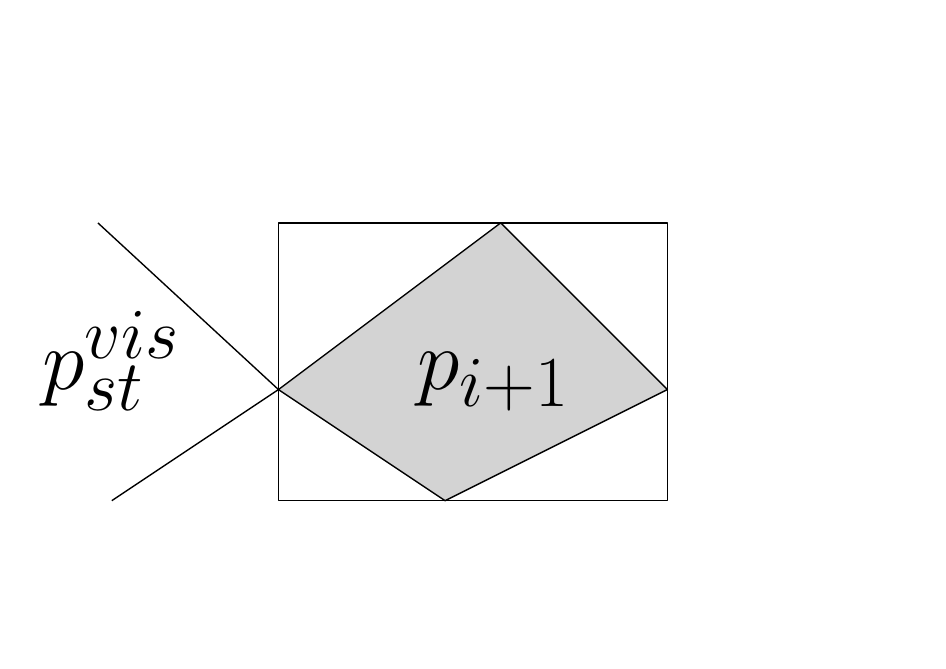}
			\caption{Visualization of Case 8.}
			\label{fig:case8} 
		\end{minipage}
		% \caption{noch eine Caption}
	\end{figure}
	\noindent
	Observe that all of the introduced cases are either point or axis reflections of the path constructions we have already proven for Lemmas \ref{lemma:xyMonotone} and $\ref{lemma:xoryMonotone}$.
	Hence, we can construct a path that fulfills the same monotonicity properties as $p^{vis}_{st}$ between each pair of succeeding hole polygons $p_i$ and $p_{i+1}$ that are visited by $p^{vis}_{st}$.
	Thus, there exists a path $p^{BB}_{st}$ in $G_{BB}$ with:
	\begin{center}
	$\eucl{p^{BB}_{st}} \leq \sqrt{2} \cdot d_{\mathrm{UDG}}(s,t).$
	\end{center}
	
\end{proof}
If we consider $m$ non-intersecting bounding boxes in a \twoDel{} and assume that each node of a bounding box is represented by the closest node of the \twoDel{}, we can apply virtual Axis Routing between any two adjacent nodes on the path $p^{BB}_{st}$.
Hence, we obtain a path in with length at most $\sqrt{2} \cdot 3.996$ $d_{\mathrm{UDG}}(s,t) = 5.66 \cdot d_{\mathrm{UDG}}(s,t)$.
For online routing, we obtain a path of length at most $\sqrt{2} \cdot 5.56$ $d_{\mathrm{UDG}}(s,t) = 7.87 \cdot d_{\mathrm{UDG}}(s,t)$.
This property is stated by Corollary \ref{corollary:multipleBB}.
\skipSpace
\begin{corollary} \label{corollary:multipleBB}
	Consider a \twoDel{} $G = (V,E)$ which contains up to $m$ holes.
	Between any pair of nodes $s$ and $t$ with $s,t \in V$ that do not lie in any bounding box, there exists a path $p$ from $s$ to $t$ such that: 
	\begin{center}
	$ \eucl{p} \leq 5.66 \cdot d_{\mathrm{UDG}}(s,t).$
	\end{center}
	Additionally, applying MixedChordArc along every edge on the shortest path between $s$ and $t$ in the Bounding Box Visibility Graph of the ad hoc network leads to a path $p_{on}$ with 
		\begin{center}
	$\eucl{p_{on}} \leq 7.87 \cdot d_{\mathrm{UDG}}(s,t).$
		\end{center}
\end{corollary}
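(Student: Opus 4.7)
The plan is to combine Theorem \ref{theorem:nonIntersectingBB} with Theorem \ref{theorem:virtualAxisPaths} (respectively Corollary \ref{corollary:onlineRouting}) by translating a bounding box visibility path into a path in the \twoDel{} via representatives of real bounding box nodes. First I would associate, for each real bounding box vertex $b$, its representative in $V$ as the node whose Voronoi cell contains $b$ (as introduced in \Cref{section:embedding}). Then I would invoke \Cref{theorem:nonIntersectingBB} on the Bounding Box Visibility Graph induced by the holes of $G$ together with $s$ and $t$; this yields a path $p^{BB}_{st}$ whose total Euclidean length is at most $\sqrt{2} \cdot d_{\mathrm{UDG}}(s,t)$.

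Next I would build the desired path $p$ in $G$ by walking along $p^{BB}_{st}$ edge by edge. For every consecutive pair of real bounding box vertices $b_i, b_{i+1}$ along $p^{BB}_{st}$, the segment $\overline{b_ib_{i+1}}$ is, by construction of the Bounding Box Visibility Graph, a visibility segment that does not cross any bounding box and hence does not cross any hole of the \twoDel{}. Therefore $\overline{b_ib_{i+1}}$ is a valid virtual axis between the representatives $v_i$ and $v_{i+1}$, so \Cref{theorem:virtualAxisPaths} provides a path in $G$ of length at most $3.996 \cdot \eucl{b_ib_{i+1}}$. Concatenating these sub-paths over all edges of $p^{BB}_{st}$ and invoking \Cref{theorem:singleBB} / \Cref{corollary:singleBB} for the two endpoint edges incident to $s$ and $t$, I obtain
\[
\eucl{p} \;\le\; 3.996 \cdot \eucl{p^{BB}_{st}} \;\le\; 3.996 \cdot \sqrt{2} \cdot d_{\mathrm{UDG}}(s,t) \;\le\; 5.66 \cdot d_{\mathrm{UDG}}(s,t).
\]

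For the online claim I would repeat the same reduction, but instead of appealing to \Cref{theorem:virtualAxisPaths} for each edge I would apply the modified MixedChordArc procedure along each virtual axis, whose guarantee is given by \Cref{corollary:onlineRouting} with factor $5.56$. Concatenating along $p^{BB}_{st}$ yields an online-constructed path of length at most $5.56 \cdot \sqrt{2} \cdot d_{\mathrm{UDG}}(s,t) \le 7.87 \cdot d_{\mathrm{UDG}}(s,t)$.

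The only subtlety I expect is bookkeeping at the transitions between successive virtual axes and at the two endpoints: one has to verify that traversing representative nodes in the order induced by $p^{BB}_{st}$ does not introduce extra detours beyond the $3.996$ (resp.\ $5.56$) blow-up already accounted for, and that the edges from $s$ to the first representative and from the last representative to $t$ are handled by the same virtual axis argument (which is precisely why \Cref{corollary:singleBB} is stated the way it is). Once these endpoint edges are folded into the sum, the two displayed bounds follow immediately.
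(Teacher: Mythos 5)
Your proposal is correct and follows essentially the same route as the paper: invoke Theorem~\ref{theorem:nonIntersectingBB} to obtain a $\sqrt{2}$-competitive path in the Bounding Box Visibility Graph, then apply virtual axis routing (Theorem~\ref{theorem:virtualAxisPaths}, resp.\ Corollary~\ref{corollary:onlineRouting}) along each of its edges via the representatives, yielding $\sqrt{2}\cdot 3.996 \le 5.66$ and $\sqrt{2}\cdot 5.56 \le 7.87$. The extra bookkeeping you flag at the endpoints is the only part the paper glosses over, and your treatment of it is consistent with the paper's intent.
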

\noindent
After our extensive studies of non-intersecting bounding boxes we switch our focus to intersecting bounding boxes. 

\subsection{Competitive Paths via intersecting Bounding Boxes} \label{section:intersectingBB}
In this section we drop the last restriction and consider intersecting bounding boxes of holes.
This leads to entirely new challenges. 
What if the shortest path between nodes $s$ and $t$ leads through an area in which two or more bounding boxes intersect?
We are faced with two major problems in this setting.
The first problem which could occur is that nodes of bounding boxes could lie in holes.
See \Cref{fig:multipleBB} for a visualization of the problem.
The reader can easily see that it can happen that all nodes of bounding boxes which lie inside an area of intersecting bounding boxes could lie in holes.
In such cases, we cannot gain any information out of these nodes.
Therefore, we drop all of these nodes and only keep nodes located on the outer boundary of all intersecting bounding boxes.

\begin{figure}[htbp]
	% minipage mit (Blind-)Text
	\centering
	\includegraphics[width=0.45\textwidth]{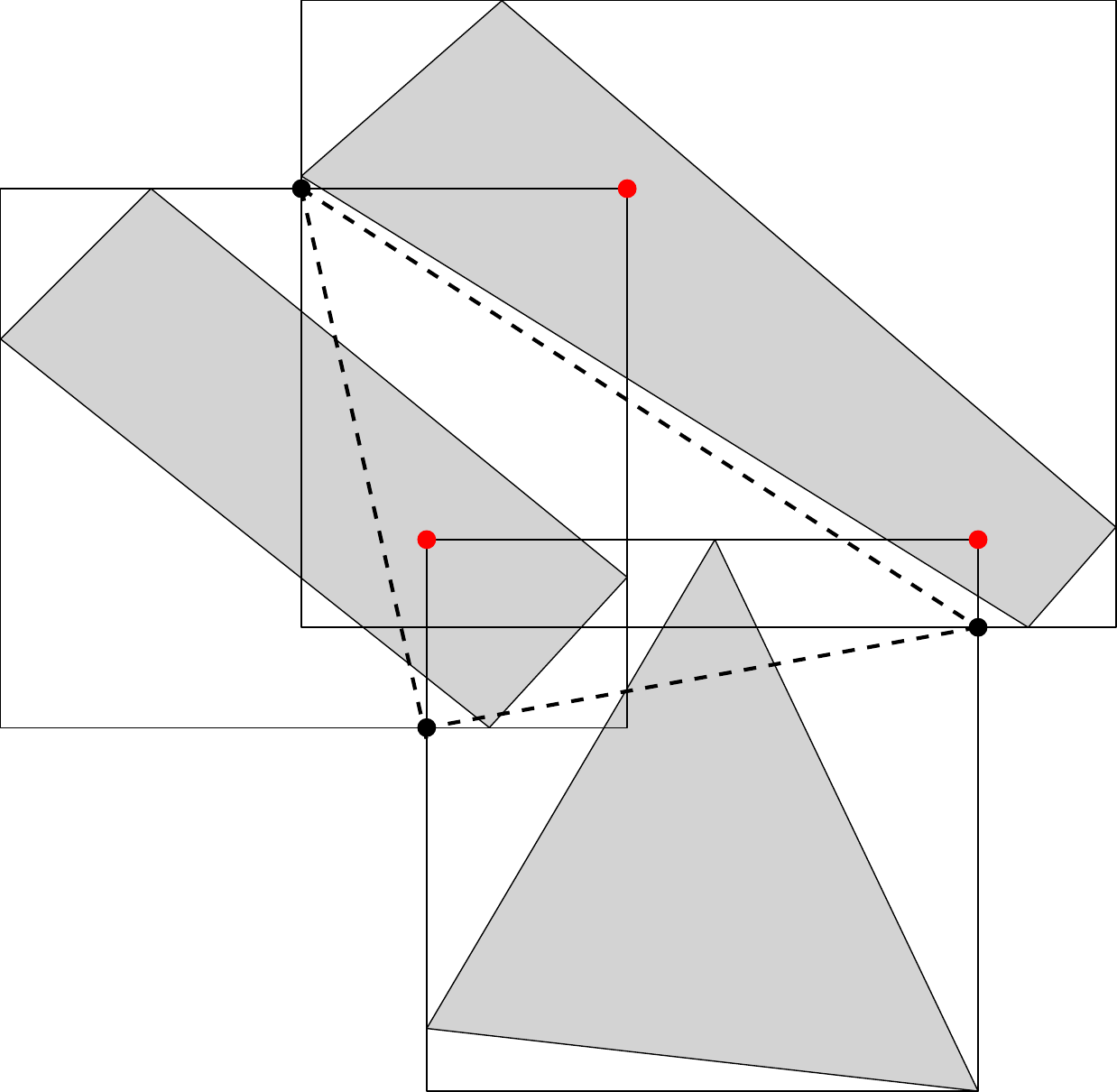}
	\caption{Multiple intersecting bounding boxes with nodes lying in holes.}
	\label{fig:multipleBB} 
	% Auffüllen des Zwischenraums
	\hfill
	% minipage mit Grafik
	
	% \caption{noch eine Caption}
\end{figure}
\noindent	
It is easy to see that routing is not complicated as long as the shortest path avoids such situations.
In these cases we can apply the result of Theorem \ref{theorem:nonIntersectingBB} and also obtain a $\sqrt{2}$-competitive path.
Hence, we only have to focus on cases in which the shortest path leads through an area of two or more intersecting bounding boxes.

\noindent
This leads over to the second major problem we have to consider in this scenario.
Unfortunately, a shortest path that leads through an area of two or more intersecting bounding boxes can be arbitrarily complex as Figure \ref{fig:halloween} depicts.
In such kinds of situations, we cannot find $c$-competitive paths by only using information obtained by bounding boxes.
Consequently, we have to enrich the information contained in the Bounding Box Visibility Graph.

\begin{figure}[h]
	% minipage mit (Blind-)Text
	\centering
	\includegraphics[width=0.75\textwidth]{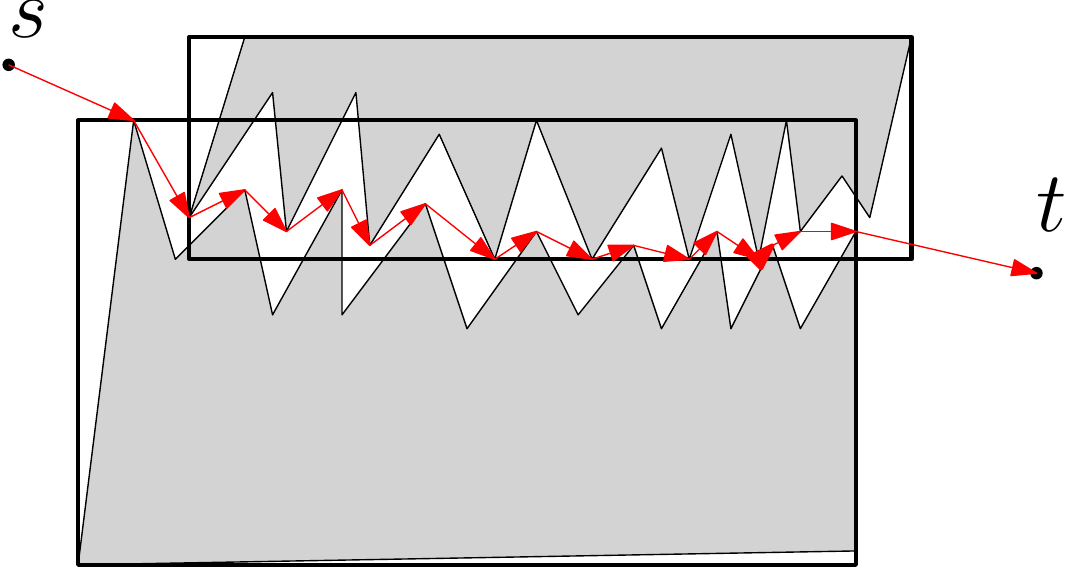}
	\caption[An area of intersecting bounding boxes.]{An area of intersecting bounding boxes. A shortest path which leads through the intersecting area of two bounding boxes can be arbitrarily complex.}
	\label{fig:halloween} 
\end{figure}

\noindent
Consider the outer boundary of an area where multiple bounding boxes intersect.
On the outer boundary, we can find nodes of bounding boxes and additionally, there are intersection points of bounding boxes.
We call these points \textit{outer intersection points}.
The crucial observation is that whenever a shortest path leads through an area in which at least two bounding boxes intersect, such an outer intersection point has to be passed in vertical and also in horizontal direction.
Due to this observation, we modify Bounding Box Visibility Graphs as follows:
Whenever two or more bounding boxes intersect, we keep those nodes which lie on the outer boundary of that area and drop all nodes which lie inside.
Further, we insert all outer intersection points into the node set.
Outer intersection points of the same area are connected in a clique.
The weight of an edge that connects outer intersection points $o_1$ and $o_2$ is the length of the shortest path in the corresponding Visibility Graph connecting $o_1$ and $o_2$ inside of the intersection area.
The described construction is called \textit{modified Bounding Box Visibility Graph}. 
See Figure \ref{fig:multipleBBClique} for an exemplary modified Bounding Box Visibility Graph.
The rest of the section deals with proving Theorems \ref{theorem:intersectingBBLowerBound} and \ref{theorem:modifiedBBVisibility} which state upper and lower bounds for $c$-competitive paths in modified Bounding Box Visibility Graphs.

\begin{figure}[h]
	% minipage mit (Blind-)Text
	\centering
	\includegraphics[width=0.55\textwidth]{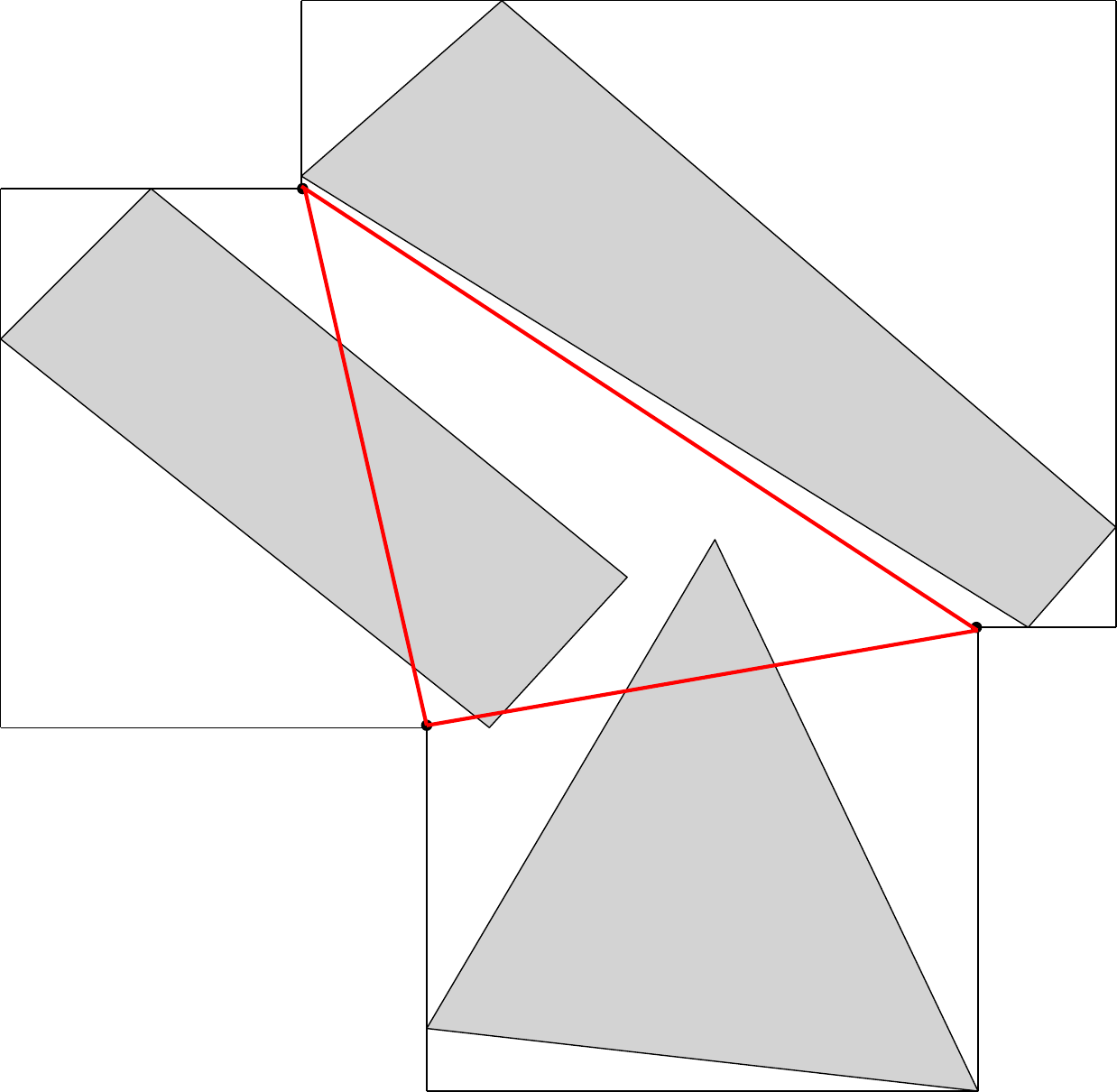}
	\caption[The clique of outer intersection points.]{The clique of outer intersection points (marked in red).}
	\label{fig:multipleBBClique} 
\end{figure}

\noindent
We start with giving a lower bound on the competitive ratio which can be obtained by modified Bounding Box Visibility Graphs.  \\
\begin{theorem} \label{theorem:intersectingBBLowerBound}
	There exist modified Bounding Box Visibility Graphs which contain only paths between a source-node $s$ and a target-node $t$ with length at least $2.8 \cdot d_{\mathrm{UDG}}(s,t)$.
\end{theorem}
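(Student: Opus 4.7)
The plan is to construct an explicit modified Bounding Box Visibility Graph, together with a compatible underlying UDG, in which every $s$-$t$ path in the modified graph has length at least $2.8$ times the shortest $s$-$t$ path in the UDG. I would start from the simplest nontrivial candidate: two axis-aligned bounding boxes $B_1$ and $B_2$ that overlap in a small square and form an L-shaped union, with $s$ and $t$ placed outside the union on opposite sides so that the Euclidean segment $\overline{st}$ cuts diagonally through the intersection region.

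The next step is to arrange the underlying UDG so that $d_{\mathrm{UDG}}(s,t)$ is essentially $\eucl{st}$. Since the actual holes are constrained to lie inside their bounding boxes, I would place the two holes as narrow substructures inside $B_1$ and $B_2$ that leave a thin corridor along $\overline{st}$ through the intersection region, and then populate that corridor with UDG nodes. This way the UDG admits a path of length arbitrarily close to $\eucl{st}$, while still being consistent with the stipulated bounding boxes of the holes.

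The third step is to analyze the modified BB Visibility Graph for this configuration. Its nodes are the outer corners of $B_1$ and $B_2$ that are not contained in the other bounding box, the two outer intersection points on the outer boundary of the intersection region, together with $s$ and $t$. Its edges are the visibility edges between these nodes and the clique edge between the two outer intersection points whose weight equals the shortest path through the intersection area. I would partition the candidate $s$-$t$ paths into a few topological classes according to how they traverse the union (over the top, around the bottom, or through the two outer intersection points via the clique shortcut) and derive a geometric lower bound for each class in terms of $\eucl{st}$.

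The main obstacle is the geometric optimization: tuning the sizes of the two bounding boxes and the positions of $s$ and $t$ so that every class of paths simultaneously has length at least $2.8\cdot d_{\mathrm{UDG}}(s,t)$. The extremal configuration should balance the detour that avoids the union entirely against the detour that enters the intersection area through the outer intersection points; I expect the worst case to arise when these two detour lengths coincide. The bound $2.8$ should emerge from solving the corresponding simple trigonometric equation (the proximity of $2.8$ to $2\sqrt{2}$ hints at a diagonal-versus-sum-of-legs tradeoff analogous to \Cref{lemma:rightTriangle}, applied twice along the L-shape). A final sanity check is that the constructed UDG, despite the dense corridor, indeed induces the claimed bounding boxes of the holes, so that the modified BB graph agrees with the intended definition.
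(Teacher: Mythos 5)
Your overall strategy---an explicit two-box construction whose path-length ratio is computed and shown to approach $2\sqrt{2}>2.8$---matches the paper's, and your guess that the constant arises from a diagonal-versus-legs tradeoff ($4/\sqrt{2}=2\sqrt{2}$) is correct. However, there is a genuine gap: the configuration you sketch does not force the lower bound, and the optimization you defer to the end is never carried out. The problem is the ``thin hole-free corridor along $\overline{st}$ through the intersection region.'' In a modified Bounding Box Visibility Graph the clique edge between the outer intersection points $o_1,o_2$ is weighted by the length of the shortest \emph{Visibility-Graph} path between them inside the intersection area, i.e.\ a path that only has to avoid the actual holes, not the bounding boxes. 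If $B_1$ and $B_2$ overlap in a small square that is traversed by a hole-free corridor, then $o_1$ and $o_2$ lie on the boundary of that small square and the clique edge costs essentially $\eucl{o_1o_2}$, so the path class $s\to o_1\to o_2\to t$ has length close to $\eucl{st}$ and no ratio near $2.8$ can be extracted. The balancing argument you propose (around the union versus through the outer intersection points) then has nothing expensive to balance against.

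The paper's construction supplies exactly the mechanism your sketch is missing: the second hole \emph{coincides with its entire bounding box} and protrudes into the first box through a narrow intersection of width about $1$ but depth about $x-1$. Consequently $\eucl{o_1o_2}\approx 1$, yet the shortest path between $o_1$ and $o_2$ inside the intersection area must go around the protrusion and has length about $2(x-1)+1$; combined with $\eucl{so_1}\approx\eucl{o_2t}\approx x-1$ and $\eucl{st}\approx\sqrt{2}\,x$ (the first box being tall enough that circumventing it entirely is no better), every available path has length at least $4x-3$, and the ratio tends to $4/\sqrt{2}=2\sqrt{2}\geq 2.8$. To repair your proposal you would need to replace the hole-free corridor in the overlap by such a deep, box-filling obstruction and then verify the arithmetic explicitly; as written, the proposal identifies the right target constant but not a configuration that attains it.
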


\begin{proof}
	We prove Theorem \ref{theorem:intersectingBBLowerBound} by giving a construction in which the lower bound is achieved.
	The considered scenario is visualized in Figure \ref{fig:lowerBoundIntersecting}.
	The shortest path between $s$ and $t$ leads through a bounding box that has a lower edge of length $x$.
	Hence, we can place $s$ and $t$ such that the distance $\eucl{st}$ is arbitrarily close to $\sqrt{2}\cdot x$ by guaranteeing $|x(s)-x(t)| \approx x$ and $|y(s)- y(t)| \approx x$.
	The scenario contains a second bounding box which intersects only the lower side of the first bounding box.
	The bounding box and the hole polygon coincide for that box.
	The intersection of both boxes yields two intersection points $o_1$ and $o_2$. 
	Assume that the first bounding box is high enough for the shortest path between $s$ and $t$ to choose $o_1$ and $o_2$ as intermediate points.
	Since edges of bounding boxes have length at least one, we can place the second bounding box such that the length of the line segment $\overline{so_1}$ is arbitrarily close to $x-1$.
	Further, the line segment $\overline{o_1o_2}$ can have a length arbitrarily close to one and the height of the bounding box in the intersecting area can have a length arbitrarily close to $x-1$.
	Hence, we obtain a path length between $s$ and $t$ of $(x-1)+(x-1)+1+(x-1)+(x-1) = 4x-3$.
	The shortest geometric path is the direct line segment $\overline{st}$ with length arbitrarily close to $\sqrt{2}\cdot x$ by ensuring $\overline{st} = \sqrt{x^2 +x^2} = \sqrt{2x^2} = \sqrt{2} \cdot x$.
	Thus, we can compute the ratio of both paths and its limits:
	\begin{center}
	$\lim\limits_{x \rightarrow \infty}{\frac{4x-3}{\sqrt{2}\cdot x}} 
	\phantom{test}\overset{L'Hospital}{=}\lim\limits_{x \rightarrow \infty}{\frac{4}{\sqrt{2}}}\leq 2.83$
	\end{center}
	\noindent
	The calculation follows from L'Hospital's rule which is applicable because $4x-3$ and $\sqrt{2}\cdot x$ both diverge.
	Hence, we have proven a lower bound for the competitive ratio of paths which can be computed with modified Bounding Box Visibility Graphs. \\
	\skipSpace 
\end{proof}
\begin{figure}[h]
	% minipage mit (Blind-)Text
	\centering
	\includegraphics[width=0.5\textwidth]{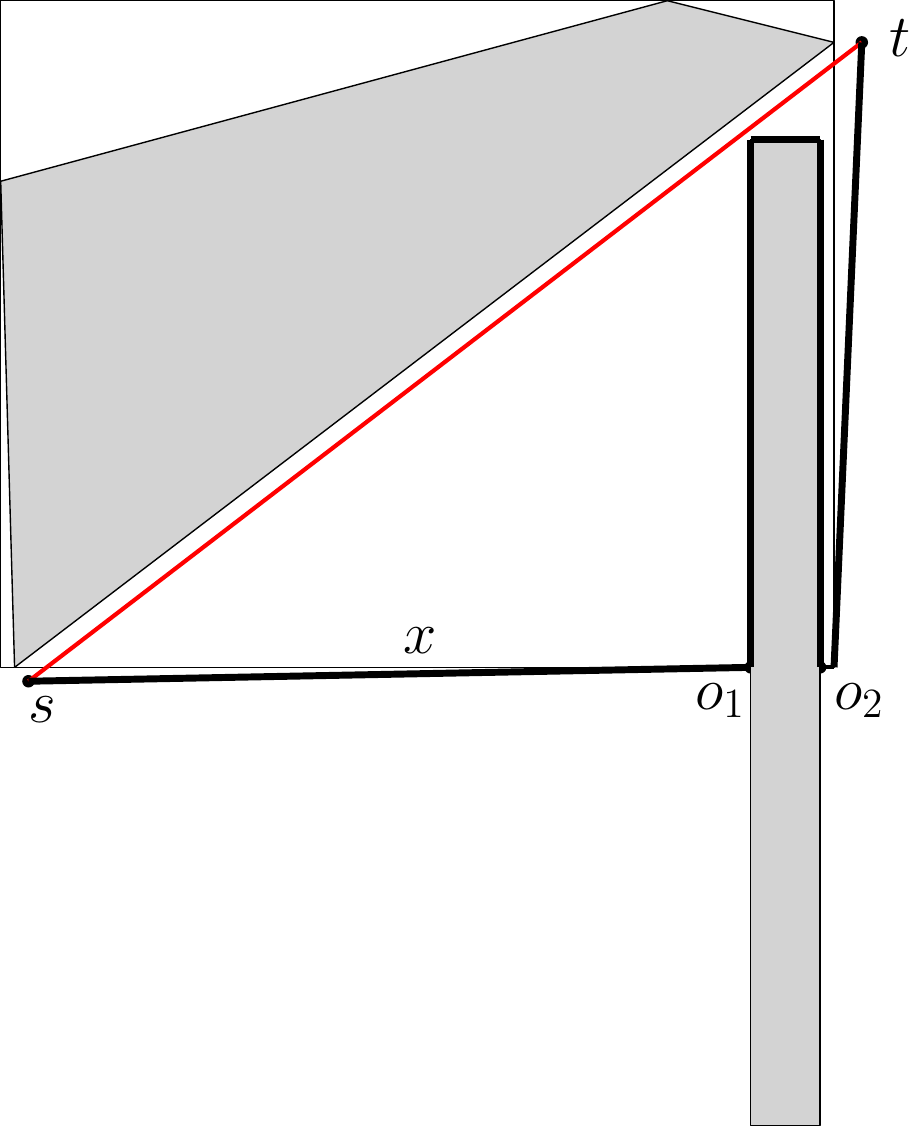}
	\caption[Visualization of the lower bound.]{Visualization of the lower bound. The shortest possible geometric connection between $s$ and $t$ is the direct line segment $\overline{st}$ drawn in red.
		The path taken in the modified Bounding Box Visibility Graph is marked in black (thick).}
	\label{fig:lowerBoundIntersecting}
\end{figure}
\noindent
After proving a lower bound, we continue with proving an upper bound. \\

\begin{theorem} \label{theorem:modifiedBBVisibility}
	In modified Bounding Box Visibility Graphs, there exists a path $p^{BB}_{st}$ between two nodes $s$ and $t$, provided $s$ and $t$ lie outside of every bounding box, such that:
	\begin{center} 
	$\eucl{p^{BB}_{st}} \leq 4.42 \cdot d_{\mathrm{UDG}}(s,t).$
	\end{center}
\end{theorem}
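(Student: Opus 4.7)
The plan is to start from the shortest obstacle-avoiding geometric path $p^{\mathrm{geo}}$ from $s$ to $t$ in the plane that avoids every hole polygon; since any UDG-path also avoids holes, one has $\|p^{\mathrm{geo}}\| \leq d_{\mathrm{UDG}}(s,t)$. I translate this geometric path piece-by-piece into a walk $p^{BB}_{st}$ in the modified Bounding Box Visibility Graph and control the multiplicative blow-up on each piece. First, I partition $p^{\mathrm{geo}}$ into maximal \emph{free segments}, each contained in the complement of every intersection area, and maximal \emph{intersection segments}, each lying inside a single area where two or more bounding boxes intersect; every intersection segment enters and leaves its area through its outer boundary.

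For every free segment I apply the GreViRo path-construction from the proof of \Cref{theorem:nonIntersectingBB}, producing a walk in the modified BBVG of length at most $\sqrt{2}$ times the geometric length of the free segment. For every intersection segment with entry point $p_{\mathrm{in}}$ and exit point $p_{\mathrm{out}}$ on the outer boundary, I detour along the outer boundary (whose edges are bounding-box edges present in the modified BBVG) from the endpoint of the preceding free-segment walk to the nearest outer intersection point $o_{\mathrm{in}}$, traverse the clique edge $o_{\mathrm{in}}\to o_{\mathrm{out}}$ whose weight by definition equals the length of the shortest hole-avoiding path between $o_{\mathrm{in}}$ and $o_{\mathrm{out}}$ inside the area, and then detour from $o_{\mathrm{out}}$ along the outer boundary to the start of the next free segment. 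The clique-edge weight is at most the length of the portion of $p^{\mathrm{geo}}$ inside the area plus a small correction connecting $\{p_{\mathrm{in}}, p_{\mathrm{out}}\}$ to $\{o_{\mathrm{in}}, o_{\mathrm{out}}\}$.

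The main obstacle is bounding the total boundary-detour and clique-edge correction by a small constant times the geometric length of $p^{\mathrm{geo}}$ restricted to each intersection area. Because bounding boxes are axis-parallel, the outer boundary of an intersection area is a closed axis-parallel polygon, and using a case analysis of the possible layouts of pairwise-intersecting rectangles together with the right-triangle inequality of \Cref{lemma:rightTriangle}, the detour from any point on the outer boundary to the nearest outer intersection point, plus the extra internal displacement matching the original entry or exit point, is at most a constant times the chord $\|p_{\mathrm{in}}p_{\mathrm{out}}\|$; this chord is a lower bound on the length of $p^{\mathrm{geo}}$ inside that area. Combining the $\sqrt{2}$ blow-up of the free segments with the worst-case multiplicative constant derived for the intersection segments yields $\|p^{BB}_{st}\| \leq 4.42\cdot \|p^{\mathrm{geo}}\| \leq 4.42\cdot d_{\mathrm{UDG}}(s,t)$, which matches the lower bound of \Cref{theorem:intersectingBBLowerBound} up to a small additive gap.
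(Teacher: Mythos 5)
Your overall architecture---translate the geometric shortest path piecewise, use the $\sqrt{2}$ machinery of \Cref{theorem:nonIntersectingBB} on the parts outside intersection areas, and use the clique edges between outer intersection points inside them---is the right skeleton, but the proof has a genuine gap at exactly the point where the constant should come from. You assert that (i) the clique-edge weight from $o_{\mathrm{in}}$ to $o_{\mathrm{out}}$ is at most the length of the portion of $p^{\mathrm{geo}}$ inside the area ``plus a small correction,'' and (ii) the boundary detour from the entry/exit points of $p^{\mathrm{geo}}$ to the nearest outer intersection points is at most a constant times the chord $\eucl{p_{\mathrm{in}}p_{\mathrm{out}}}$. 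Neither claim is proved, and (ii) is false as stated: the outer boundary of a union of intersecting boxes can be arbitrarily long compared to the chord between the points where $p^{\mathrm{geo}}$ enters and leaves the area (the geometric path may clip the area near one side while the nearest outer intersection point sits far away along the boundary), so neither the detour nor the clique-edge weight is controlled by the local chord. Since the final constant $4.42$ is never computed---it is simply asserted that ``combining'' the pieces yields it---the proof does not establish the claimed bound.

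The paper closes this gap with a global, not local, argument. It takes the first and last outer intersection points $o_1, o_2$ that the shortest visibility path $p^{vis}_{st}$ must pass, bounds the two outer sub-paths by $\eucl{p^{vis}_{so_1}} + \eucl{p^{vis}_{o_2t}} \leq \sqrt{2}\,\eucl{p^{vis}_{st}}$ via \Cref{lemma:rightTriangle} and translates them into $G_{BB}$ with another factor $\sqrt{2}$, giving $2\,\eucl{p^{vis}_{st}}$; for the clique edge it uses the crude triangle inequality $\eucl{p^{vis}_{o_1o_2}} \leq \eucl{p^{vis}_{so_1}} + \eucl{p^{vis}_{st}} + \eucl{p^{vis}_{o_2t}} \leq \left(1+\sqrt{2}\right)\eucl{p^{vis}_{st}}$, i.e., it bounds the $o_1$-to-$o_2$ distance by a detour through $s$ and $t$ rather than by anything local to the intersection area. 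Summing gives $\left(3+\sqrt{2}\right)\eucl{p^{vis}_{st}} \leq 4.42 \cdot d_{\mathrm{UDG}}(s,t)$, which is where the constant actually comes from. To repair your proof you would either need to adopt such a global bound on the clique-edge weight and the approach to it, or genuinely prove a local bound on the boundary detours, which the geometry does not support in general.
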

\skipSpace
\begin{proof}
	Consider the Visibility Graph $G_{Vis}$ and the corresponding modified Bounding Box Visibility Graph $G_{BB}$.
	Let $s$ be a start- and $t$ a target-location such that the shortest path $p^{vis}_{st}$ between $s$ and $t$ in $G_{Vis}$ leads through an area of polygons whose bounding boxes intersect.
	The proof is based on two main observations.
	The first observation is that $p^{vis}_{st}$ passes at least the $x$- and $y$- coordinates of two outer intersection points. Observe that this does not imply that $p^{vis}_{st}$ visits these points.
	Since we assume that $p^{vis}_{st}$ leads through an area of intersecting bounding boxes, we know that these outer intersection points have to be passed when $p^{vis}_{st}$ enters and leaves that area.
	We denote the first outer intersection point with $o_1$ and the second one with $o_2$. In case there are more than two outer intersection points, $o_2$ denotes the last one.
	Assume we would add $o_1$ and $o_2$ to $G_{Vis}$ and consider the sub-paths $p^{vis}_{so_1}$, $p^{vis}_{o_1o_2}$ and $p^{vis}_{o_2t}$.
	The observation implies (Lemma \ref{lemma:rightTriangle}):
	\skipSpace
	\begin{center}
	$\eucl{p^{vis}_{so_1}} + \eucl{p^{vis}_{o_2t}} \leq \sqrt{2} \cdot \eucl{p^{vis}_{st}}.$
	\end{center}
	\skipSpace
	The idea is to construct a path in $G_{BB}$ which starts at $s$, walks to $o_1$, from $o_1$ to $o_2$ and finally reaches $t$.
	The path construction of \Cref{section:shortestBBPaths} obtains a path $p^{BB}_{so_1}$ in $G_{BB}$ between $s$ and $o_1$ of length at most $\sqrt{2} \cdot \eucl{p^{vis}_{so_1}}$.
	The same holds for $t$ and $o_2$.
	Hence, the sum of path lengths in $G_{BB}$ between $s$ and $o_1$ as well as between $o_2$ and $t$ can be upper bounded as follows:
	\skipSpace
	\begin{align*}
	&\phantom{space}&\eucl{p^{BB}_{so_1}} + \eucl{p^{BB}_{o_2t}} &\leq 	\sqrt{2} \cdot \left(\eucl{p^{vis}_{so_1}} + \eucl{p^{vis}_{o_2t}}  \right)\\
	&&& \leq \sqrt{2} \cdot \sqrt{2} \cdot \eucl{p^{vis}_{st}} = 2 \cdot \eucl{p^{vis}_{st}}
	\end{align*}
	\skipSpace
	It remains to prove an upper bound for $\eucl{p^{vis}_{o_1o_2}}$. 
	We use a very rough bound here which is motivated by the triangle inequality.
	\skipSpace
	\begin{center}
	$\eucl{p^{vis}_{o_1o_2}} \leq \eucl{p^{vis}_{so_1}} + \eucl{p^{vis}_{st}} + \eucl{p^{vis}_{o_2t}}$
	\end{center}
\skipSpace
	Note that $p^{BB}_{o_1o_2}$ and $p^{vis}_{o_1o_2}$ coincide due to the structure of modified Bounding Box Visibility Graphs.
	Hence, we conclude $p^{BB}_{o_1o_2}$ has length at most:
	\skipSpace
	\begin{center}
	$\eucl{p^{BB}_{o_1o_2}} \leq \sqrt{2} \cdot \eucl{p^{vis}_{st}} + \eucl{p^{vis}_{st}} = \left(1+\sqrt{2}\right) \cdot \eucl{p^{vis}_{st}}$
	\end{center}
\skipSpace
	Finally, we obtain the following bound on $p^{BB}_{st}$:
	\skipSpace
	\begin{align*}
	&\phantom{space}&\eucl{p^{BB}_{st}} &= \eucl{p^{BB}_{so_1}} + \eucl{p^{BB}_{o_1o_2}} + \eucl{p^{BB}_{o_2t}} \\
	&&&\leq 2 \cdot \eucl{p^{vis}_{st}} + \left(1+\sqrt{2} \right) \cdot  \eucl{p^{vis}_{st}} \\
	&&&\leq 4.42 \cdot d_{\mathrm{UDG}}(s,t).
	\end{align*}
	This bound can be applied for all sub-paths of $p_{vis}$. 
	In cases where $p_{vis}$ does not lead through an area of intersecting bounding boxes, Theorem \ref{theorem:nonIntersectingBB} holds and we obtain an upper bound of $4.42$ for either case.  \\
\end{proof}
\noindent
Note that the given upper bound on the path length is not tight. 
Nevertheless, we have proven that the resulting path length is $c$-competitive to a path length in usual Visibility Graphs.
To conclude this section, we combine the results of this section with previous results and conclude a final upper bound for paths in \twoDel{} that use only nodes of bounding boxes as intermediate points. \\

\begin{corollary} \label{corollary:intersecting}
	Let $m \in \mathbb{N}$.
	Consider a \twoDel{} that contains multiple holes whose bounding boxes could intersect. 
	There exists a path $p^{BB}_{st}$ between $s$ and $t$ in the \twoDel{} that walks along nodes which represent points of a modified Bounding Box Visibility Graph with $\eucl{p^{2Del}_{st}} \leq 12.83 \cdot d_{\mathrm{UDG}}(s,t)$. 
\end{corollary}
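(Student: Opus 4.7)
The plan is to compose the existence result in the modified Bounding Box Visibility Graph from \Cref{theorem:modifiedBBVisibility} with the routing primitives already established for the \twoDel{}. Given a source--target pair $s,t$ lying outside every bounding box, I would start from the shortest path $p^{vis}_{st}$ in the Visibility Graph of the holes. If $p^{vis}_{st}$ does not enter any intersection area, \Cref{corollary:multipleBB} already produces a \twoDel{} path of length at most $5.66\cdot d_{\mathrm{UDG}}(s,t)$, which is comfortably below the claimed bound, so the interesting case is that $p^{vis}_{st}$ does enter an area of intersecting bounding boxes.

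Following the proof of \Cref{theorem:modifiedBBVisibility}, I would let $o_1$ be the first outer intersection point at which $p^{vis}_{st}$ enters such an area and $o_2$ the last one at which it leaves, and split $p^{vis}_{st}$ into the three subpaths $p^{vis}_{so_1}$, $p^{vis}_{o_1 o_2}$ and $p^{vis}_{o_2 t}$. Each of the two outer subpaths lies in a region that only sees non-intersecting bounding boxes, and \Cref{lemma:rightTriangle} applied to the right triangle with hypotenuse $\overline{st}$ gives $\eucl{p^{vis}_{so_1}}+\eucl{p^{vis}_{o_2 t}}\leq \sqrt{2}\cdot \eucl{p^{vis}_{st}}$; applying \Cref{corollary:multipleBB} separately to each of them then yields a \twoDel{} realisation of total length at most $\sqrt{2}\cdot 5.66 \cdot \eucl{p^{vis}_{st}} \leq 8.01\cdot \eucl{p^{vis}_{st}}$. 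For the middle subpath, the triangle-inequality estimate in the proof of \Cref{theorem:modifiedBBVisibility} gives $\eucl{p^{vis}_{o_1 o_2}}\leq (1+\sqrt{2})\cdot \eucl{p^{vis}_{st}}$, and $p^{vis}_{o_1 o_2}$ is realised in the Visibility Graph by a chain of mutually visible hole-polygon corners, each of which is itself a node of the \twoDel{}. Replacing every visibility edge of that chain by the $1.998$-spanner detour of \Cref{theorem:localDelaunayVisibilityPath}, and connecting the representatives of $o_1$ and $o_2$ to the first and last corner of the chain via short virtual-axis paths (\Cref{theorem:virtualAxisPaths}), gives a \twoDel{} realisation of length at most $1.998\cdot (1+\sqrt{2})\cdot \eucl{p^{vis}_{st}} \leq 4.83\cdot \eucl{p^{vis}_{st}}$. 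Summing the two contributions and using $\eucl{p^{vis}_{st}}\leq d_{\mathrm{UDG}}(s,t)$ then bounds the complete \twoDel{} path by $(8.01+4.83)\cdot d_{\mathrm{UDG}}(s,t)\leq 12.83\cdot d_{\mathrm{UDG}}(s,t)$.

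The step I expect to be the main obstacle is the middle subpath: the representatives of $o_1$ and $o_2$ are in general not themselves hole-polygon corners, so I have to verify carefully that the prepended and appended segments between these representatives and the first and last corner on the visibility chain do not violate the visibility hypothesis of \Cref{lemma:visibilityPath}, and that they contribute only a lower-order constant to the bound. Once this local bookkeeping around $o_1$ and $o_2$ is settled, the rest is a routine combination of \Cref{theorem:virtualAxisPaths}, \Cref{corollary:multipleBB} and \Cref{theorem:modifiedBBVisibility}.
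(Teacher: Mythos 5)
Your proposal follows essentially the same route as the paper: split $p^{vis}_{st}$ at the first and last outer intersection points, bound the two outer pieces by $\sqrt{2}\cdot\eucl{p^{vis}_{st}}$ via \Cref{lemma:rightTriangle} and realise them through \Cref{corollary:multipleBB}, bound the middle piece by $(1+\sqrt{2})\cdot\eucl{p^{vis}_{st}}$ via the triangle inequality from \Cref{theorem:modifiedBBVisibility}, and sum. The only blemish is bookkeeping: using the pre-rounded constants $5.66$ and $4.83$ your sum is nominally $12.84$, but with the exact values ($2\cdot 3.996 + 1.998(1+\sqrt{2}) \approx 12.82$) the claimed $12.83$ bound holds, matching the paper, which uses a factor $2$ rather than your $1.998$ for the middle segment.
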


\begin{proof}
	Due to Theorem \ref{theorem:modifiedBBVisibility}, there exists a path $p^{BB}_{st}$ between any points $s$ and $t$, provided $s$ and $t$ do not lie inside of any bounding box, in modified Bounding Box Visibility Graphs with length at most $4.42 \cdot d_{\mathrm{UDG}}(s,t)$.
	
	\noindent
	The nodes visited by $p^{BB}_{st}$ are usually not part of the \twoDel{}.
	Instead, we use the points of the \twoDel{} which lie geographically closest to the points of $p^{BB}_{st}$.
	For all sub-paths between $s$ and $t$ that do not lead through an area of intersecting bounding boxes, we have proven that there exists a $\mixedChord$-competitive path in the \twoDel{} (Corollary \ref{corollary:multipleBB}).	
	It remains to prove a bound for sub-paths leading through an area of intersecting bounding boxes.
	Consider a pair of points $p_i$ and $p_j$ lying on the path from $s$ to $t$ such that the path between $p_i$ and $p_j$ leads through an area of intersecting bounding boxes.
	The path between $p_i$ and $p_j$ can be split into three sub-paths, 
	the path from $p_i$ to the outer intersection point $o_1$, a path between $o_1$ and a second outer intersection point $o_2$ and a path from $o_2$ to $p_j$.
	In the proof of Theorem \ref{theorem:modifiedBBVisibility} we analyzed:
	\skipSpace
	\begin{enumerate}
		\item $\eucl{p^{BB}_{p_io_1}} + \eucl{p^{BB}_{o_2p_j}} \leq 2 \cdot d_{\mathrm{UDG}}(p_i,p_j)$
		\item $\eucl{p^{BB}_{o_1,o_2}} \leq (1 + \sqrt{2}) \cdot d_{\mathrm{UDG}}(p_i,p_j)$
	\end{enumerate}
	\skipSpace
	$\eucl{p^{BB}_{p_io_1}}$ and $\eucl{p^{BB}_{o_2p_j}}$ do not lead through an area of intersecting bounding boxes.
	Hence, we can apply the results of Corollary \ref{corollary:multipleBB} and obtain a $3.996 \cdot 2 = 7.992$-competitive path in the \twoDel{}.
	$\eucl{p^{BB}_{o_1,o_2}}$ is the Euclidean shortest path between $o_1$ and $o_2$.
	$o_1$ and $o_2$ are also represented by their geographically closest points in the \twoDel{}. 
	Hence, the path between $o_1$ and $o_2$ is at most $2 \cdot \eucl{p^{BB}_{o_1,o_2}}$.
	For the path in the \twoDel{} we obtain:
	\skipSpace
	\begin{align*}
	&\phantom{space}&\eucl{p^{2Del}_{p_i,p_j}} &\leq (7.992 + 2 \cdot (1 + \sqrt{2})) \cdot d_{\mathrm{UDG}}(p_i,p_j) \\
	&&&\leq 12.83 \cdot d_{\mathrm{UDG}}(p_i,p_j)
	\end{align*}
	\skipSpace
	Thus, each sub-path between $s$ and $t$ is at most $12.83$-competitive and we have proven Corollary \ref{corollary:intersecting}.
	
\end{proof}

\section{Overlay Network}
In this section, we discuss how to compute Bounding Box Visibility Graphs in a distributed manner such that nodes of the ad hoc network are enabled to find $c$-competitive paths.
We make use of a similar approach as in \cite{algosensorsPaper}. Nodes that are part of a boundary (either the outer boundary or a hole) build a hypercube with a pointer jumping technique. 
By this approach it is ensured that all nodes of the same hole are connected in a hypercube.
This hypercube is used for a fast dissemination of bounding box information.
As a bounding box is defined by four extreme coordinates, the nodes of the same hole exchange their coordinates and keep the maximal and the minimal value both of $x$- and $y$-coordinates.
At the end of this procedure, i.e., after $\mathcal{O}(\log n)$ communication rounds, each node is aware of the bounding box coordinates of its hole. 
Afterwards, the node with the smallest $x$-coordinate is responsible for exchanging bounding box information with other bounding boxes.
To do so, an overlay tree according to the protocol in~\cite{DBLP:conf/icalp/GmyrHSS17} is built initially.
The initial setup of the tree requires $\mathcal{O}(\log^2 n)$ communication rounds.
The node with smallest $x$-coordinate then propagates the bounding box information of its hole up and down in the tree.
The representatives of a bounding box are responsible for the computation of $c$-competitive paths in the ad hoc network and store
the modified Bounding Box Visibility Graph locally. 
All other nodes only store whether they are located on the boundary of a bounding box or not.

\section{Routing Algorithm} \label{section:bbr}
In this section, we introduce our routing algorithm \emph{Bounding Box Routing} (\textit{BBR}).
We assume only source-destination pairs which lie outside of bounding boxes.
\subsection{Non-intersecting Bounding Boxes}

In this case, the routing algorithm works as follows:
A source node~$s$ that wants to send a message to a target node $t$ starts sending the packet via the MixedChorArc-algorithm.
In case the packet arrives at an edge of a bounding box, the packet is redirected to the closest representative of the bounding box.
The representative then computes a path from itself to $t$ in its modified Bounding Box Visibility Graph.
This path is used to route the packet to $t$.
Along every edge of the Bounding Box Visibility Graph, MixedChordArc is used to forward the packet.
\emph{BBR} has the following properties:
\skipSpace
\begin{theorem} \label{theorem:bbr}
	BBR finds paths between a source $s$ and a target $t$ outside of bounding boxes with length at
	most $\algorithmConstant \cdot d_{UDG}(s,t)$. 
\end{theorem}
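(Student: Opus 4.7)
The plan is to decompose the path produced by BBR into three segments and to bound each one separately in terms of $d_{UDG}(s,t)$. If the initial MixedChordArc routing from $s$ toward $t$ never encounters any bounding box, then the straight segment $\overline{st}$ avoids every hole of the \twoDel{} by construction, so Theorem~\ref{theorem:localDelaunayVisibilityPath} together with the competitive ratio of MixedChordArc~\cite{DBLP:conf/esa/BonichonBCDHS18} immediately yields a delivered path of length at most $\mixedChord \cdot d_{UDG}(s,t)$, well within the claimed bound.

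Otherwise, let $p$ be the point on a bounding box edge where the initial MixedChordArc routing first gets stuck, let $r$ be the closest representative of that bounding box to which the packet is redirected, and write the total length of the route as $L_{\mathrm{BBR}} = L_1 + L_2 + L_3$, where $L_1$ is the initial sub-path from $s$ to $p$, $L_2$ is the redirection sub-path from $p$ to $r$, and $L_3$ is the final BBVG-based route from $r$ to $t$. First I would bound $L_3$ directly: since both $r$ and $t$ lie outside every bounding box, the BBVG-path from $r$ to $t$ falls into the setting of Corollary~\ref{corollary:multipleBB}, so applying MixedChordArc along every edge yields $L_3 \leq \onlineRoutingConstant \cdot d_{UDG}(r,t)$. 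Using the triangle inequality in the UDG and the fact that $d_{UDG}(s,r)$ is at most the length of the path already walked, one obtains $d_{UDG}(r,t) \leq L_1 + L_2 + d_{UDG}(s,t)$, which feeds back into the bound on $L_3$.

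Next I would bound $L_1$ and $L_2$ separately. For $L_1$, the MixedChordArc run heads toward $t$ and terminates at $p$, which lies before $t$ along the chosen routing direction, so that $\|sp\| \leq \|st\| \leq d_{UDG}(s,t)$ and MixedChordArc's competitive analysis yields $L_1 \leq \mixedChord \cdot d_{UDG}(s,t)$. For $L_2$, the geometric argument used in the proof of Lemma~\ref{lemma:lengthOfST} shows that the closest representative lies within distance $1/2$ of $p$, so the redirection step costs at most a constant; since the hole definition forces $d_{UDG}(s,t) \geq 1$ whenever $s$ and $t$ lie on opposite sides of a bounding box, this constant is absorbed into $d_{UDG}(s,t)$. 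Plugging the resulting bounds on $L_1$ and $L_2$ into $L_{\mathrm{BBR}} \leq L_1 + L_2 + \onlineRoutingConstant \cdot (L_1 + L_2 + d_{UDG}(s,t))$ and simplifying gives exactly the claimed $\algorithmConstant \cdot d_{UDG}(s,t)$ bound.

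The main obstacle I expect is the interaction between the failed initial phase and the final phase: the representative $r$ is chosen reactively by the algorithm based on where MixedChordArc happened to get stuck, not by the true shortest $s$--$t$ route, so $d_{UDG}(r,t)$ may well exceed $d_{UDG}(s,t)$. The wasted length of phases one and two therefore re-enters the analysis through the bound on $L_3$, and this recursion has to be unrolled cleanly to avoid a circular dependency. Exactly this amplification is what inflates the competitive constant from $\onlineRoutingConstant$ in Corollary~\ref{corollary:multipleBB} to $\algorithmConstant$ in the final bound.
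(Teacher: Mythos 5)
Your decomposition into the three phases (initial optimistic MixedChordArc run, redirection to the representative, BBVG-based route to $t$) matches the paper's, and your bounds on $L_1$ and $L_2$ are fine in spirit. The genuine gap is in how you bound $L_3$, and it is fatal to the claimed constant. You write $L_3 \leq \onlineRoutingConstant \cdot d_{UDG}(r,t)$ and then feed $d_{UDG}(r,t) \leq L_1 + L_2 + d_{UDG}(s,t)$ back in, obtaining $L_{\mathrm{BBR}} \leq L_1 + L_2 + \onlineRoutingConstant\cdot(L_1+L_2+d_{UDG}(s,t))$. With $L_1 \leq \mixedChord\cdot d_{UDG}(s,t)$ this evaluates to roughly $3.56\,d + 7.87\cdot 4.56\,d \approx 39.5\, d_{UDG}(s,t)$, i.e.\ more than twice the claimed $\algorithmConstant$. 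The statement that ``simplifying gives exactly the claimed bound'' is not true: $\algorithmConstant - \onlineRoutingConstant = 10.68 = 3\cdot\mixedChord$, whereas your multiplicative amplification contributes $\onlineRoutingConstant\cdot\mixedChord \approx 28$ on its own. The recursion you correctly identify as the main obstacle is never actually ``unrolled cleanly''; it is left in the form that produces the worse constant.

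The paper avoids this by keeping the entire detour \emph{additive}: it charges $\mixedChord\cdot d_{UDG}(s,t)$ for reaching the bounding box edge, another $\mixedChord\cdot d_{UDG}(s,t)$ for reaching the closest representative, and a third $\mixedChord\cdot d_{UDG}(s,t)$ for the possibility that the BBVG-route from $r$ first returns (essentially) to $s$ before heading to $t$; only the final $s$-to-$t$ leg is charged the multiplicative $\onlineRoutingConstant\cdot d_{UDG}(s,t)$ from Corollary~\ref{corollary:multipleBB}. In other words, the route from $r$ to $t$ is compared against the concatenation of a short return path $r\to s$ and the competitive path $s\to t$, rather than against $\onlineRoutingConstant\cdot d_{UDG}(r,t)$ with $d_{UDG}(r,t)$ inflated by the wasted prefix. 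That is the step you would need to add to recover $10.68 + \onlineRoutingConstant = \algorithmConstant$. (A smaller issue: your absorption of the constant-length redirection step relies on $d_{UDG}(s,t)\geq 1$, which you justify only for $s,t$ on opposite sides of a bounding box; the initial route can hit a bounding box without that being the case, so this needs a separate argument or should be charged as the paper does.)
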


\begin{proof}
	Based on the results of \Cref{section:shortestBBPaths}, online routing the via the path in the Bounding Box Visibility Graph yields a path of length $\onlineRoutingConstant \cdot d_{UDG}(s,t)$.
	BBR, however, used an initial optimistic application of MixedChordArc.
	This phase can result in a detour.
	The detour, however, has length at most $3 \cdot \mixedChord \cdot d_{UDG(s,t)} = 10.68 \cdot d_{UDG}(s,t)$, as the path until a bounding box edge is reached has length at most $\mixedChord \cdot d_{UDG}(s,t)$.
	The same holds for the path to the closed representative of a bounding box.
	Afterwards, it could happen that the path initially returns to $s$ and afterwards leads to $t$.
	In sum, this is a detour of $10.68 \cdot d_{UDG}(s,t)$.
	The complete path is $10.68+7.87 = 18.55$-competitive.
	
\end{proof}
\subsection{Pairwise Intersecting Bounding Boxes}
In this section, we weaken the assumption of non-intersecting bounding boxes and assume that bounding boxes can pairwise intersect.
The routing procedure in this case works basically exactly as for non-intersecting bounding boxes (Section \ref{section:bbr}), the only difference is that the
Bounding Box Visibility Graph is replaced by a modified Bounding Box Visibility Graph (see Section \ref{section:intersectingBB}).
In the modified Bounding Box Visibility Graph we now have pairwise intersecting bounding boxes.
To each pair of pairwise intersecting bounding boxes, there are two outer intersection points contained and connected via an edge.
In this section, we introduce PairwiseIntersectionChord (PIC) that allows us to route $\sqrt{2}$-competitive between two outer intersection points of pairwise intersecting bounding boxes.
Given the competitive constant of PIC, we select the weights of edges in the modified Bounding Box Visibility Graph as follows:
Let $e$ be an edge connecting two outer intersection points $o_1$ and $o_2$. 
In the modified Bounding Box Visibility Graph, we add $\sqrt{2} \cdot \|o_1\, o_2\|$ as weight to $e$.

\begin{theorem}\label{theorem:pairwiseIntersecting}
	In case the modified Bounding Box Visibility Graph contains pairwise intersecting bounding boxes and the edge weight between outer intersection points is at most $\sqrt{2}$ times the length of the shortest path between these points, 
	BBR finds paths between a source $s$ and a target $t$ outside of bounding boxes with length at
	most $28.83 \cdot d_{UDG}(s,t)$. 
\end{theorem}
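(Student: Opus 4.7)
The plan is to mirror the two-phase analysis of \Cref{theorem:bbr}, replacing the Bounding Box Visibility Graph by its modified counterpart and accounting for the new edge type between outer intersection points via PIC. The overall bound should decompose as the sum of a ``detour'' contribution from the opening optimistic MixedChordArc phase and an ``overlay'' contribution from routing along a shortest path in the modified Bounding Box Visibility Graph.

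First I would observe that Phase~1---the opening optimistic MixedChordArc attempt until the packet arrives at some bounding-box edge and is redirected to the closest representative---is insensitive to pairwise disjointness of bounding boxes. Hence the same analysis used in \Cref{theorem:bbr} applies verbatim and contributes at most $10.68 \cdot d_{UDG}(s,t)$ to the final path length.

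Second, I would analyse Phase~2, the routing along the computed shortest path in the modified Bounding Box Visibility Graph. \Cref{corollary:intersecting} already yields a $\onlineRoutingConstantIntersecting \cdot d_{UDG}(s,t)$ bound in the 2-localized Delaunay Graph, but its proof implicitly assumes an optimal connection between the outer intersection points $o_1$ and $o_2$ via the rough ``twice the Euclidean distance'' factor for their representatives. Since BBR is online, the actual inter-intersection segment is traversed by PIC, which by the hypothesis of the theorem is $\sqrt{2}$-competitive (equivalently, the overlay edge weights between outer intersection points already incorporate the $\sqrt{2}$ factor and PIC realises a path matching this weight). The key step is to show that substituting a $c$-competitive online strategy for the optimal inter-intersection path in the three-segment decomposition of the proof of \Cref{corollary:intersecting} scales the overall overlay bound by a factor of $c$. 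For $c=\sqrt{2}$ this gives $\sqrt{2} \cdot \onlineRoutingConstantIntersecting \leq 18.15$ as the Phase~2 contribution.

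Combining both phases yields the claimed competitiveness
\[
10.68 + \sqrt{2} \cdot \onlineRoutingConstantIntersecting \;\leq\; 28.83 \cdot d_{UDG}(s,t).
\]
The hard part will be verifying that the $c$-scaling really propagates cleanly through the three-segment decomposition of \Cref{corollary:intersecting}: the two outer sub-segments (handled by MixedChordArc along virtual axes, already folded into the \onlineRoutingConstant{} constant of \Cref{corollary:onlineRouting}) and the inner inter-intersection sub-segment (handled by PIC) must all fit under a single factor of $c$ relative to the original existence bound. A secondary obstacle is justifying rigorously that PIC attains $\sqrt{2}$-competitiveness in the pairwise-intersecting regime---this likely requires a case analysis on the outer boundary of two intersecting bounding boxes in the spirit of the monotone-path arguments developed around \Cref{lemma:greviro,lemma:xyMonotone}.
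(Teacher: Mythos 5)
Your proposal is correct and takes essentially the same route as the paper: the identical decomposition into the $10.68\cdot d_{UDG}(s,t)$ initial-detour term from \Cref{theorem:bbr} plus a $\sqrt{2}\cdot \onlineRoutingConstantIntersecting \leq 18.15$ term obtained by scaling \Cref{corollary:intersecting} by PIC's competitiveness, summing to $28.83$. The concerns you flag as "hard parts" (whether the $\sqrt{2}$ factor propagates cleanly through all three sub-segments, and the rigorous competitiveness of PIC) are real, but the paper's own proof is equally terse on both points and simply asserts them.
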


\begin{proof}
	The proof is analogous to the proof of \Cref{theorem:bbr}. 
	We also have the initial detour of $10.68 \cdot d_{UDG}(s,t)$
	Now, we can apply \Cref{corollary:intersecting}. 
	The path we make use of is $\sqrt{2} \cdot 12.83 \leq 18.15$-competitive. (because of the additional $\sqrt{2}$) obtained by PIC.
	Hence, the entire path has length at most $10.68 + 18.15 = 28.83 \cdot d_{UDG}(s,t)$.
	
\end{proof}

We continue with introducing PIC in detail.
The rough idea behind this algorithm is to route along the convex hulls of holes belonging to outer intersection points.
Convex hulls can be easily calculated in a preprocessing step as done in \cite{algosensorsPaper}.
Certain edges of the convex hulls will give us the direction to route through the pairwise intersection of bounding boxes. 
More precisely, PIC consists of the following $3$ steps.
We call the source outer intersection point by $bb_{int1}$ and the destination point $bb_{int4}$.
Without loss of generality, we assume that the bounding boxes aligned as in Figure \ref{fig:new} and denote the upper left bounding box belonging to a polygon $p_1$ as the \emph{upper} bounding box and the other one belonging to a polygon $p_2$ as the lower bounding box.
We also assume that we route from the left outer intersection point to the right one.
All other cases are simply rotated and are handled analogously.

\begin{enumerate}
	\item $bb_{int1}$ sends a message m along the edge $\overline{bb_{b\ell}(p_1)bb_{br}(p_1)}$ of the upper bounding box until $m$ reaches the intersection point $bb_{int2}$ of $\overline{bb_{b\ell}(p_1)bb_{br}(p_1)}$ and a convex hull.
	\item Then from $bb_{int2}$ $m$ is sent along the convex hull edges until it $m$ arrives at the point $bb_{int3}$, which is the intersection point between a convex hull edge and $\overline{bb_{tr}(p_2),bb_{br}(p_2)}$.
	\item $bb_{int3}$ sends $m$ to the destination point $bb_{int4}$, which denotes the intersection point of $\overline{bb_{tr}(p_2),bb_{br}(p_2)}$ from the upper bounding box and $\overline{bb_{t\ell}(p)bb_{tr}(p)}$ from the lower bounding box.
\end{enumerate}

\begin{figure}[h]
	% minipage mit (Blind-)Text
	\centering
	\includegraphics[width=0.75\textwidth]{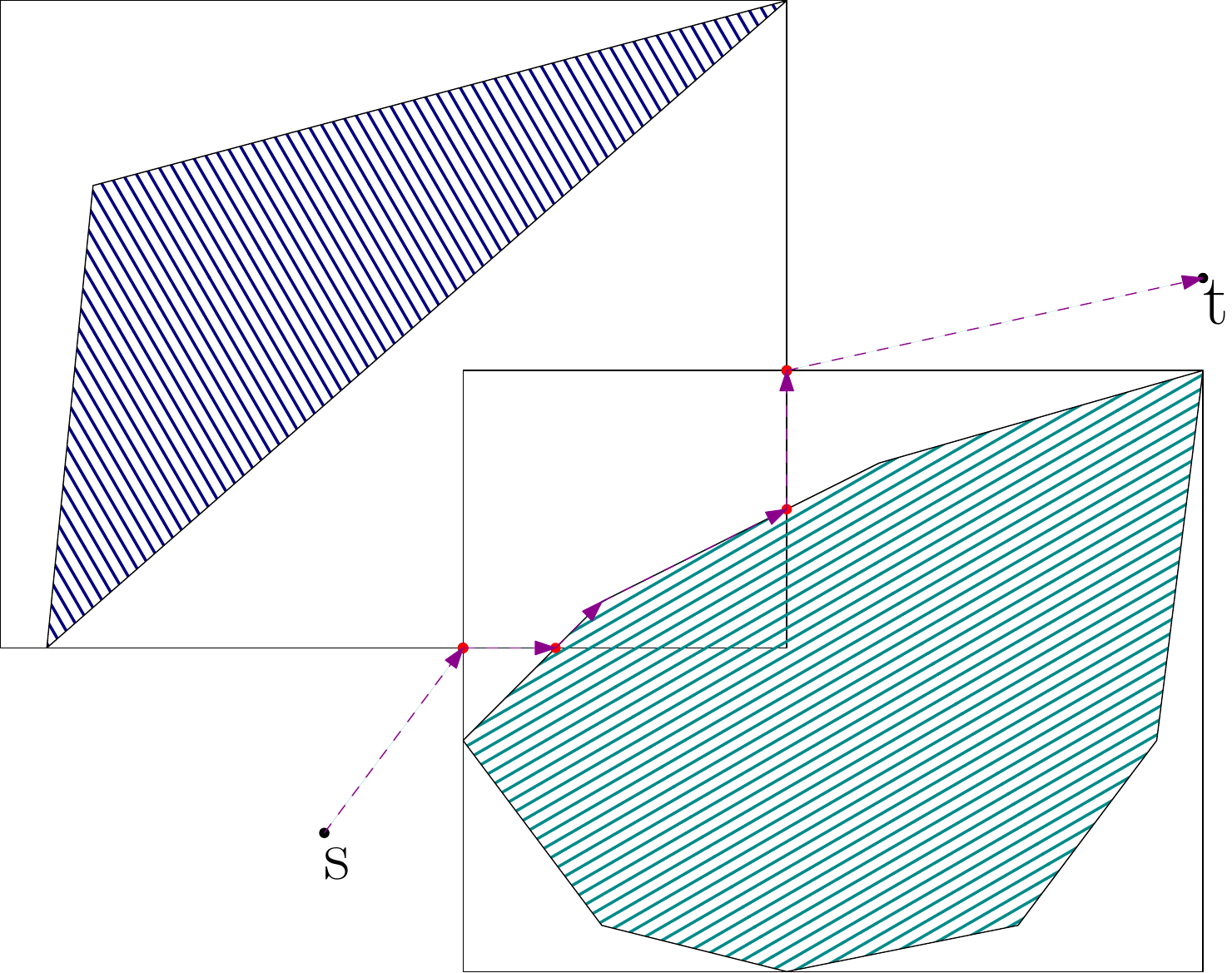}
	\caption{An exemplary path constructed by PIC between two outer intersection points.}
	\label{fig:new} 
	% Auffüllen des Zwischenraums
	\hfill
	% minipage mit Grafik
	
	% \caption{noch eine Caption}
\end{figure}

To route $m$ as in the $3$ steps above, we use MixedChordArc. 
Therefore we obtain the following theorem.
\begin{theorem} \label{theorem:pairintbb}
	PIC finds paths path between two outer intersection points $o_1$ and $o_2$ of pairwise intersecting bounding boxes with length at most
	$\sqrt{2} \cdot \|o_1\,o_2\|$.
\end{theorem}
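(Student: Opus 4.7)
The plan is to show that the entire three-segment PIC path from $o_1 = bb_{int1}$ to $o_2 = bb_{int4}$ is simultaneously $x$- and $y$-monotone, so that its total length is bounded by the sum of the horizontal and vertical coordinate differences between $o_1$ and $o_2$; this sum is in turn at most $\sqrt{2}\cdot\|o_1\,o_2\|$ by \Cref{lemma:rightTriangle}. Adopting the orientation of \Cref{fig:new}, I write $\Delta x = |x(o_1) - x(o_2)|$ and $\Delta y = |y(o_1) - y(o_2)|$, so that $\|o_1\,o_2\| = \sqrt{\Delta x^2 + \Delta y^2}$.

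The two axis-aligned pieces are easy to account for: Segment~$1$ lies on $\overline{bb_{b\ell}(p_1)bb_{br}(p_1)}$ and is purely horizontal, of length $|x(bb_{int2}) - x(o_1)|$, while Segment~$3$ lies on $\overline{bb_{tr}(p_2)bb_{br}(p_2)}$ and is purely vertical, of length $|y(bb_{int3}) - y(o_2)|$. Moreover $y(bb_{int2}) = y(o_1)$ because $bb_{int2}$ lies on the bottom edge of the upper bounding box, and $x(bb_{int3}) = x(o_2)$ because $bb_{int3}$ lies on the right edge of the lower bounding box. Since both $bb_{int2}$ and $bb_{int3}$ lie inside the pairwise-intersection region of the two bounding boxes, all three segments point consistently in the same horizontal and vertical directions (say, increasing $x$ and decreasing $y$).

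The crux of the argument, and the part that I expect to require the most care, is the convex-hull piece Segment~$2$. I would argue that the hull arc traversed between $bb_{int2}$ and $bb_{int3}$ is monotone in both coordinates: convexity of the hull of the relevant hole polygon prevents it from backing up in either direction while one traverses from a point on its \emph{top} side (where the bottom edge of the upper bounding box cuts it) to a point on its \emph{right} side (where the right edge of the lower bounding box cuts it). The subtle point is verifying that PIC in fact follows the upper-right arc of the hull rather than the opposite one, and that its tangent direction stays in the quadrant pointing right-and-down throughout---this probably requires a short case distinction based on where $bb_{int2}$ and $bb_{int3}$ fall among the four corner regions of the convex hull. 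Granted this monotonicity, the length of Segment~$2$ is bounded above by its total horizontal and vertical displacement, namely $|x(bb_{int3}) - x(bb_{int2})| + |y(bb_{int2}) - y(bb_{int3})|$.

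Summing the three segment bounds and telescoping using $x(bb_{int3}) = x(o_2)$ and $y(bb_{int2}) = y(o_1)$ gives that the total PIC path length is at most $\Delta x + \Delta y$. Applying \Cref{lemma:rightTriangle} with legs $\Delta x$ and $\Delta y$ and hypotenuse $\|o_1\,o_2\|$ yields $\Delta x + \Delta y \le \sqrt{2}\cdot\|o_1\,o_2\|$, which is the claimed bound.
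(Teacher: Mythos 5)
First, a point of context: the paper itself offers no proof of \Cref{theorem:pairintbb} --- the statement appears immediately after the three-step description of PIC with only the remark that MixedChordArc is used along each step --- so there is no reference argument to compare against. Your overall strategy (show the concatenated three-segment path is monotone in both coordinates, bound its length by the sum of the horizontal and vertical displacements $\Delta x + \Delta y$, and finish with \Cref{lemma:rightTriangle}) is surely the intended one, and reducing everything to the monotonicity of the hull arc is the right way to organize it.

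The step you flag as the crux and then defer is, however, a genuine gap, and it is not just a routine case distinction: for PIC as literally specified, the concatenated path need not be monotone and the telescoping to $\Delta x + \Delta y$ can fail. The problem is the endpoint $bb_{int3}$. It is defined as the point where the convex hull next meets a fixed bounding-box edge, and nothing prevents the hull arc from first climbing to the topmost point of the lower hole --- which sits exactly at height $y(o_2)$, because the bounding box is tight --- and then descending along the upper-right arc of the hull before it reaches that edge. Concretely: if the corner of the intended L-shaped route (the point with $x = x(o_2)$ and $y = y(o_1)$) lies inside the convex hull of the lower hole, and the topmost point of that hull has $x$-coordinate strictly between $x(bb_{int2})$ and $x(o_2)$, then segment~$2$ must pass over the topmost point at height $y(o_2)$ and come back down to $bb_{int3}$, after which segment~$3$ climbs back up to $o_2$. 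The total vertical distance traveled then strictly exceeds $|y(o_1)-y(o_2)|$, and one can choose the hull (e.g., a steep upper-right arc) so that the total length exceeds $\sqrt{2}\cdot\eucl{o_1 o_2}$. To make your argument work, the hull-following has to be terminated earlier --- at the topmost point, or as soon as $o_2$ or the remaining leg of the L-path is visible without re-entering the hull --- after which the standard decomposition of a convex polygon's boundary into four doubly-monotone arcs by its extreme points does give exactly the monotonicity you need. That is a repair of PIC's stopping rule, not merely of the proof. A smaller, related issue: you read off $x(bb_{int3}) = x(o_2)$ from $bb_{int3}$ lying on ``the right edge of the lower bounding box,'' but $o_2$ lies on the right edge of the \emph{upper} box; the paper's edge labels in Steps~2--3 are inconsistent, and whichever edge is meant, this identity has to be verified rather than assumed, since the whole telescoping collapses without it.
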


\subsection{Multiple Intersecting Bounding Boxes}
The algorithm works exactly as for the case with non-intersecting bounding boxes.
However, we consider modified Bounding Box Visibility Graphs for the computation of paths here.
In case we know a weight between outer intersection points of the modified Bounding Box Visibility Graph which has length at most $c$ times the length of an optimal path between the outer intersection points,
we can state with the same arguments as for \Cref{theorem:bbr} the following theorem:

\begin{theorem} \label{theorem:bbrIn}
	In case the modified Bounding Box Visibility Graph contains an edge weight between outer intersection points which is at most $c$ times the length of the shortest path between these points, 
	BBR finds paths between a source $s$ and a target $t$ outside of bounding boxes with length at
	most $\left(10.68 + c \cdot 12.83\right)\cdot d_{UDG}(s,t)$. 
	
\end{theorem}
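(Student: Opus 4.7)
The plan is to follow the two-phase analysis already used in the proofs of \Cref{theorem:bbr} and \Cref{theorem:pairwiseIntersecting}, replacing the role of \Cref{corollary:multipleBB} (or \Cref{corollary:intersecting} with the exact shortest path) by a ``weighted'' version of \Cref{corollary:intersecting} that absorbs the $c$-factor on the intersection-area edges.

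First, I would bound the initial detour. Since BBR starts by optimistically running MixedChordArc from $s$, the packet can travel at most $\mixedChord \cdot d_{UDG}(s,t)$ before it hits an edge of some bounding box. Rerouting to the nearest representative costs at most another $\mixedChord \cdot d_{UDG}(s,t)$, and returning from that representative toward $s$ before actually heading to $t$ costs at most one more factor of $\mixedChord \cdot d_{UDG}(s,t)$. Summing these three contributions yields the same $3 \cdot \mixedChord = 10.68$ overhead used in the proof of \Cref{theorem:bbr}.

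Second, I would argue the bound for the actual routing phase in the modified Bounding Box Visibility Graph. The only difference compared to \Cref{corollary:intersecting} is the treatment of sub-paths inside an area of intersecting bounding boxes. There, the analysis splits such a sub-path at two outer intersection points $o_1,o_2$ into three pieces: the parts outside the intersection area are already $7.992$-competitive by \Cref{corollary:multipleBB}, while the piece from $o_1$ to $o_2$ is charged using the edge weight chosen in the modified Bounding Box Visibility Graph. Under the hypothesis of \Cref{theorem:bbrIn}, that weight is at most $c$ times the true shortest path inside the intersection area, so the $(1+\sqrt{2})$ bound from the proof of \Cref{theorem:modifiedBBVisibility} scales by $c$, and the overall $12.83$ competitiveness of \Cref{corollary:intersecting} becomes at most $c \cdot \onlineRoutingConstantIntersecting \cdot d_{UDG}(s,t)$ (exactly the simplification already used when passing from \Cref{theorem:pairwiseIntersecting}'s setting to the number $\sqrt{2}\cdot 12.83$).

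Combining the two phases gives a total path length of at most $\bigl(10.68 + c \cdot \onlineRoutingConstantIntersecting\bigr)\cdot d_{UDG}(s,t)$, as claimed. The only delicate point, and therefore the main obstacle, is the bookkeeping in the second step: one must make sure that the $c$-factor really multiplies only the intersection-area contribution $2\cdot(1+\sqrt{2})$ in the decomposition of \Cref{corollary:intersecting}, and then verify that this quantity is still absorbed by the loose bound $c \cdot \onlineRoutingConstantIntersecting$ used in the theorem statement. Once that is checked, the rest of the argument is a direct transcription of the proofs of \Cref{theorem:bbr} and \Cref{theorem:pairwiseIntersecting}.
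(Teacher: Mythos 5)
Your proposal is correct and follows essentially the same route as the paper, whose own proof simply declares the argument analogous to those of \Cref{theorem:bbr} and \Cref{theorem:pairwiseIntersecting}: the $10.68$ detour from the optimistic MixedChordArc phase plus a $c$-scaled application of \Cref{corollary:intersecting}. Your extra bookkeeping observation --- that the $c$-factor strictly speaking only multiplies the $2(1+\sqrt{2})$ intersection-area term, so that $7.992 + c\cdot 4.83 \leq c\cdot 12.83$ for $c \geq 1$ --- is more careful than what the paper writes and confirms the stated constant.
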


\begin{proof}
	The proof is analogous to the proof of \Cref{theorem:pairwiseIntersecting}.
\end{proof}
In our scenario, however, we do not know such an edge weight for outer intersection points.
Since the computation of a shortest or $c$-competitive path between outer intersection points can be very complex (see \Cref{fig:halloween2}), we propose the following strategy here:
We make use of the strategy GOAFR+ between outer intersection points. Let $o_1$ and $o_2$ be two outer intersection points of the same hole.
We assign $\alpha \cdot \|o_1o_2\|^2$ as a weight for the edge $\{o_1,o_2\}$ in the modified Bounding Box Visibility Graph.
This is done, since GOAFR+ is in worst cases quadratic competitive to a shortest possible path.
We show in \Cref{section:simulations} that this approach is much better than using GOAFR+ for the entire path.
We also analyse suitable values for $\alpha$.
All in all, we use the same procedure as for non-intersecting bounding boxes.
The difference is the weight of edges between outer intersection points in the modified Bounding Box Visibility Graph.
Whenever an edge between outer intersection points is part of a path, GOAFR+ instead of MixedChordArc is used.
\section{Simulation Results} \label{section:simulations}
In this section, we compare our routing strategy to existing strategies and prove that we outperform these strategies both for intersecting and non-intersecting bounding boxes.
With outperforming, we mean that the path length found by \emph{BBR} are significantly shorter than those found by strategies that do not consider any global knowledge about holes.
To be able to directly compare our results to earlier results, we choose the same experiment setup as in \cite{DBLP:conf/mobihoc/KuhnWZ03}.
Our simulations are carried out on randomly generated Unit Disk Graph.
We place nodes randomly and uniformly on a square with $20$ units side length.
The number of nodes depends on the network density.
Density means the number of nodes per Unit Disk.
As we are only interested in connected networks, the lowest density value we consider is 4.5. 
Even for the value $4.5$ not all randomly and uniformly generated networks are connected. 
Since all approaches do not work for disconnected networks, we decided to run simulations only on connected networks.
For each density value between $4.5$ and $20$, we generate $2000$ networks and choose a source node $s$ and a target node $t$ (outside of bounding boxes) uniformly at random.
The simulations were carried out on a custom simulation environment.
We compute the following performance value for every chosen algorithm:

\begin{align*}
&&perf_A(N,s,t) = \frac{\|p_A(N,s,t)\|}{\|p_{opt}(N,s,t)\|}
\end{align*}
\skipSpace
$|p_A(N,s,t)\|$ represents the length of the path found by algorithm $A$ and $\|p_{opt}(N,s,t)\|$ stands for the length of an optimal path contained in the network $N$.
Before switching the results of the simulations, we introduce compared algorithms.

\subsection{Compared Algorithms}
In the following, we shortly introduce the algorithms 

\paragraph*{OAFR} Let $\mathcal{E}(c)$ be the ellipse with foci $s$ and $t$ and the size of its major axis is $c$. 
Initially, set $\mathcal{E}$ to $2 \cdot \|st\|$. Explore the boundary of the face F that is intersected by $\overline{st}$ in one direction. 
Upon reaching the boundary of $\mathcal{E}$, switch the direction. Once $\mathcal{E}$ is hit a second time, proceed to the node closest to $t$.
In case $t$ is not yet reached, double the length of $\mathcal{E}'s$ major axis and start again.
\paragraph*{GOAFR}
Initially, set $\mathcal{E}$ to $2 \cdot \|st\|$.  
Apply greedy routing until either arriving at $t$ oder a local minimum $m$.
Execute OAFR on the first face only.
The size of $\mathcal{E}$ is doubled as long as necessary.
Terminate if OAFR reaches $t$.
Otherwise, approach to the node closest to $t$ found by OAFR and start again.
\paragraph*{GOAFR\textsubscript{FC}}
Works basically like GOAFR but falls back to greedy routing as soon as a closer node to $t$ is found in the OAFR-phase \cite{DBLP:conf/mobihoc/KuhnWZ03}.
\paragraph*{GOAFR+}
GOAFR+ works similar to GOAFR but the size of the major axis of $\mathcal{E}$ is not doubled.
Let $m$ be the node where GOAFR switches from greedy routing to OAFR.
Now, GOAFR continues and counts in a variable $p$ the number of nodes closer to $t$ than $u_i$ and $q$ the number of other nodes.
Let $r_\mathcal{E}$ be the major axis of $\mathcal{E}$. 
The new $r_\mathcal{E}$ is $p \cdot r_\mathcal{E}$.

\paragraph*{GPSR} 
GPSR stands for Greedy Perimeter Stateless Routing. It works exactly as GOAFR\textsubscript{FC} without bounding ellipse.
\paragraph*{BBR} See \Cref{section:bbr}
\subsection{Non-intersecting Bounding Boxes}
\Cref{fig:simulationNonIntersecting} shows the simulation results in case of non-intersecting bounding boxes. 
For each density value, the mean performance value of each algorithm, defined as

\begin{center}
$\overline{perf_A}:= \frac{1}{k} \sum_{i=1}^{k} perf_A(N_i,s_i,t_i).$
\end{center} is plotted.
In our simulations, $k = 2000$.
\begin{figure}[h]
	% minipage mit (Blind-)Text
	\centering
	\includegraphics[width=\textwidth]{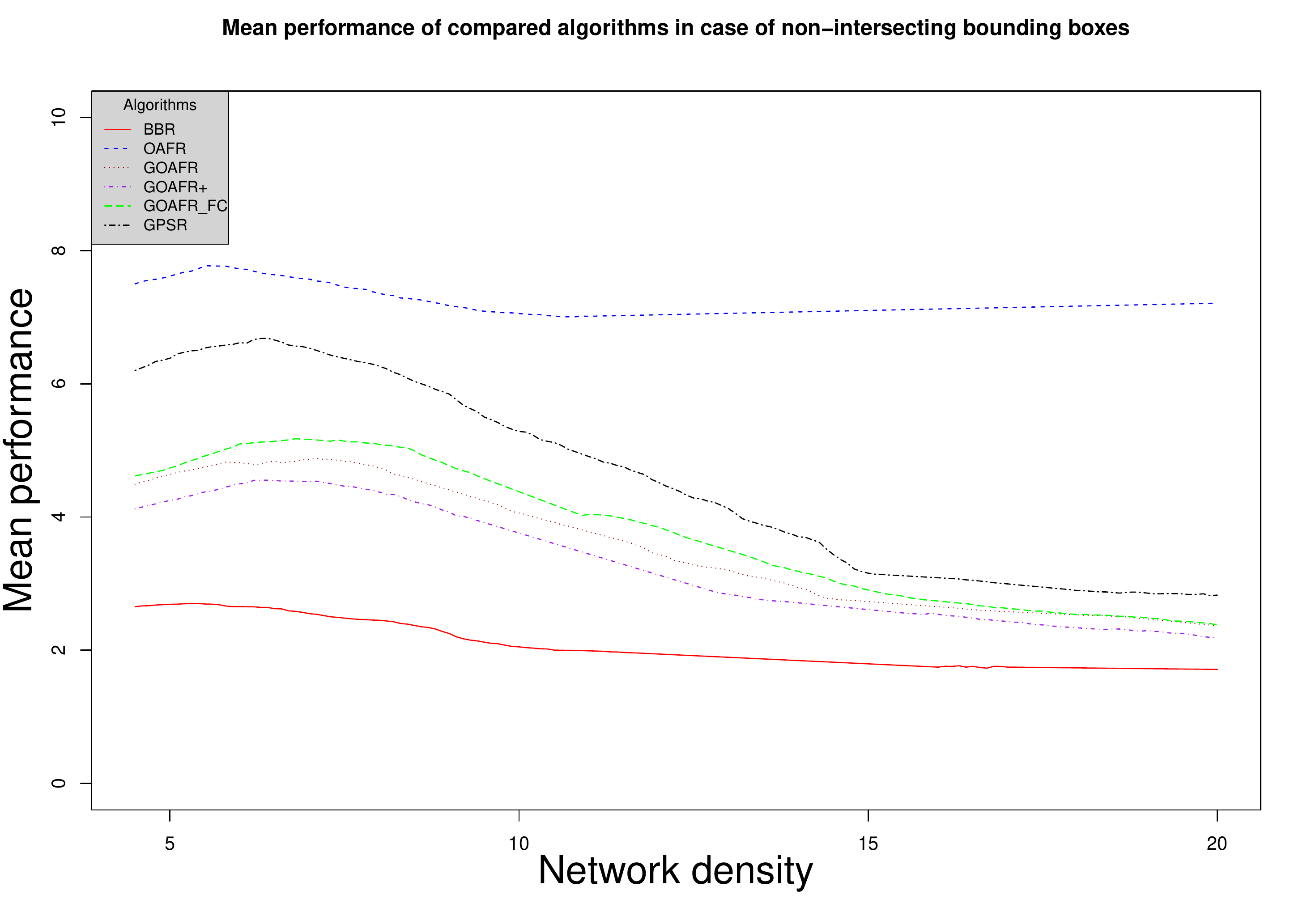}
	\caption{Visualization of the performance of the algorithms in case of non-intersecting bounding boxes. The y-axis represents the performance value of the algorithms.
		The x-axis represents the network density.
		The lowest red line represents the performance of BBR.}
	\label{fig:simulationNonIntersecting}
\end{figure}
\skipSpace
We can see that for each density value, BBR performs much better than any of the other algorithms.
Even in cases where many holes exists (density values around 4.5), avoiding holes around their bounding boxes seems to perform much butter than any strategy that only considers local knowledge.
Also for scenarios with few holes (high density values), BBR still performs slightly better than GOAFR+.
Interestingly, all algorithms have a small peak at density values around 8.
\subsection{Intersecting Bounding Boxes}
In order to really get an intuition about the performance in case of intersecting bounding boxes, we chose only source destination pairs whose shortest path leaded through an area of intersecting bounding boxes.
\Cref{fig:simulationIntersecting} shows a plot of the mean performance of all algorithms depending on the density value of the network.

\begin{figure}[h]
	% minipage mit (Blind-)Text
	\centering
	\includegraphics[width=\textwidth]{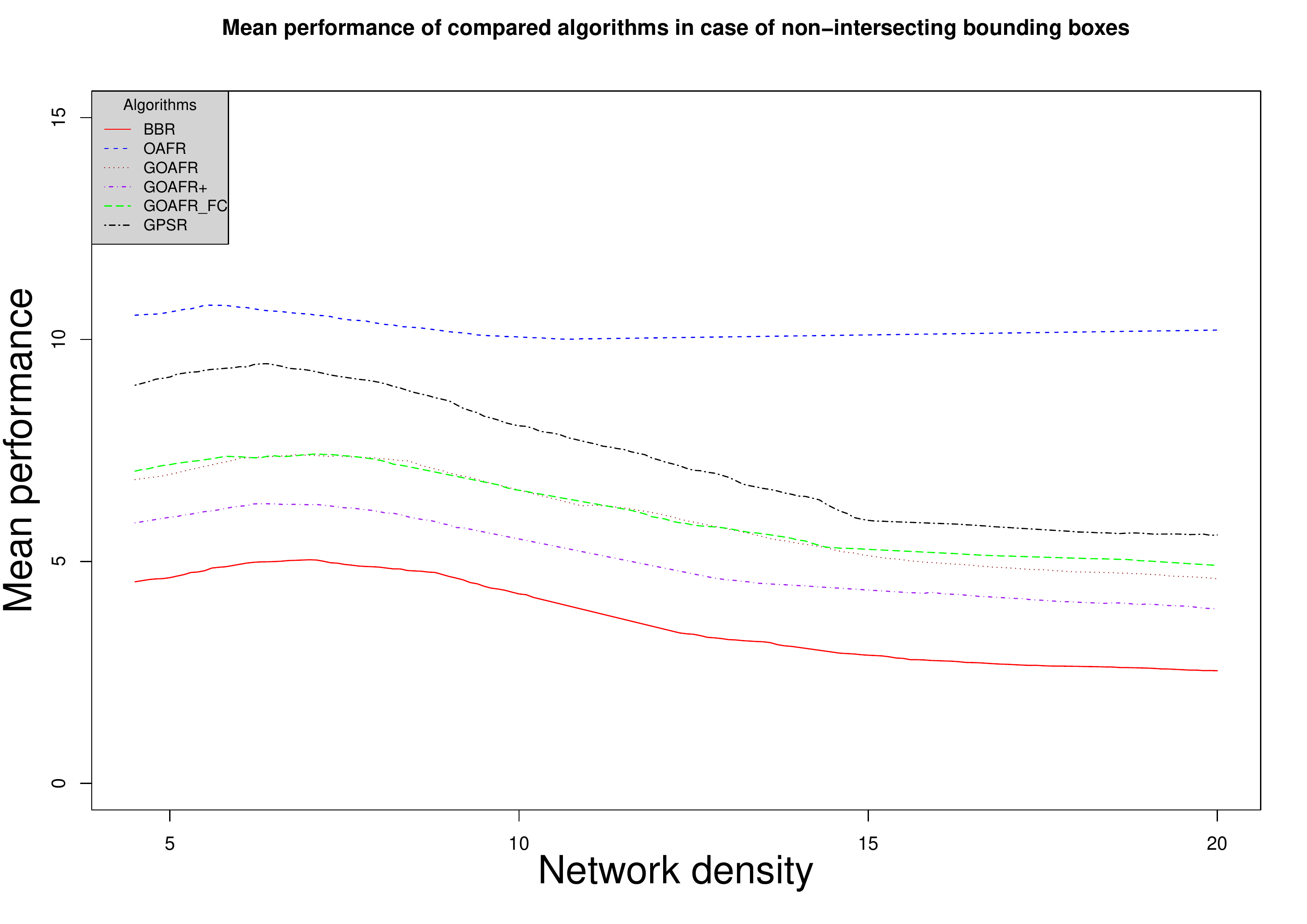}
	\caption{Visualization of the performance of the algorithms in case of intersecting bounding boxes. The y-axis represents the performance value of the algorithms.
		The x-axis represents the network density.
		The lowest red line represents the performance of BBR.}
	\label{fig:simulationIntersecting}
\end{figure}

In \Cref{fig:simulationIntersecting}, we can see that BBR still performs significantly better than all other approaches.
However, the line of BBR is much closer to the line of GOAFR+ compared to non-intersecting bounding boxes.
This probably results from the fact that BBR uses GOAFR+ between outer intersection points.

\section{Future Work}
In this paper, we considered both intersecting and non-intersecting bounding boxes as hole abstractions.
For non-intersecting bounding boxes, we designed a routing strategy to route in the underlying ad hoc network which achieves provably better results than any previous online routing strategy for geometric ad hoc networks.
Our strategy routing finds $18.55$-competitive paths between source and destination pairs outside of bounding boxes.
In case of intersecting bounding boxes, we considered the case of pairwise intersecting bounding boxes and presented and proved a routing strategy that finds $28.83$-competitive paths between any pair of sources and destinations outside of bounding boxes. In addition, in case of multiple intersecting bounding boxes, we proposed a $(10.68 + c \cdot 12.83)$-competitive routing strategy, provided $c$-competitive paths between representatives of outer intersection points of bounding boxes can be found.
Since this is a very challenging problem, our goal is to investigate how many knowledge is necessary to find $c$-competitive routing paths in areas of intersecting hole abstractions.
Further interesting directions include energy efficient routing and load balancing to avoid congested bounding box nodes.

%\bibliographystyle{IEEEtran}
%\bibliography{IEEEabrv,boundingBib}

%%
%% Bibliography
%%

%% Please use bibtex, 

\bibliography{boundingBib}

\end{document}